\let\stmaryrdLightning\lightning
\DeclareMathAlphabet{\mathttbf}{\encodingdefault}{\ttdefault}{bx}{n}
  \colorlet{colBox}{blue!20}
  \colorlet{colBorder}{gray!70}
  \tikzset
    {mybox/.style=
      {rectangle,rounded corners,drop shadow,minimum height=1cm,
       minimum width=2cm,align=center,fill=colBox,draw=colBorder,line width=1pt
      },
     myarrow/.style=
      {line width=1pt,-stealth,rounded corners
      },
     mylabel/.style={text=#1}
    }
\lstdefinelanguage{ensemble}{
	basicstyle=\ttfamily\scriptsize,
	numbersep=5pt,
	numbers={left},
	xleftmargin=3.0ex,
	captionpos=b,
	morecomment=[l]{//}, morecomment=[s]{/*}{*/}, morestring=[b]", 	tabsize=1,			commentstyle=\color{black}, stringstyle=\color{black}, emph={
		actor, presents, stage, type, is, struct,
		boot, behaviour, constructor, if, then, else, try, except,
		follows
	},
	emphstyle={\color{red}\bfseries},
	emph={[2]=,
		actor, presents, stage, type, is, struct, for, while, do, send, on, receive, from, out, in, select, default, continue, establish, topology,
		boot, new,  connect, disconnect, to, and, or, where, of, publish, unpublish,
    proc, project, as, stop, return, spawn, at, migrate, replace, with, switch,
    case, break, link, unlink, global, local, role, protocol, self, choice, rec,
    explicit, aux, discover
	},
	emphstyle={[2]\color{blue}\textbf},
emph={[3]=,
		mov, true, false
	},
	emphstyle={[3]\color{pink}\textbf},
emph={[4]=,
		integer, query, string, int, real, bool, interface, struct, enum, any
	},
	emphstyle={[4]\color{purple}\textbf}
}
\newcommand*{\SavedLstInline}{}
\LetLtxMacro\SavedLstInline\lstinline
\DeclareRobustCommand*{\lstinline}{\ifmmode
    \let\SavedBGroup\bgroup
    \def\bgroup{\let\bgroup\SavedBGroup
      \hbox\bgroup
    }\fi
  \SavedLstInline
}
\newcommand{\mysubsection}[1]{\medskip\noindent
\textbf{#1}}
\newcommand{\calcwd}[1]{{\textbf{\textsf{#1}}}}
\newcommand{\mkwd}[1]{{\textsf{#1}}}
\newcommand{\seq}[1]{\overrightarrow{#1}}
\newcommand{\bstop}{{\calcwd{stop}}\xspace}
\newcommand{\new}[1]{\calcwd{new} \; #1}
\newcommand{\replace}[2]{\calcwd{replace} \; #1 \; \calcwd{with} \; #2}
\newcommand{\newdiscover}[1]{\calcwd{discover} \; #1}
\newcommand{\config}[1]{\mathcal{#1}}
\newcommand{\actorsep}{,}
\newcommand{\eactor}[4]{\langle #1 \actorsep #2 \actorsep #3 \actorsep #4\rangle}
\newcommand{\teval}{\longrightarrow_{\mathsf{M}}}
\newcommand{\ceval}{\longrightarrow}
\newcommand{\actordef}[3]{\calcwd{actor} \; #1 \; \calcwd{follows} \; #2 \; \{ #3 \} }
\newcommand{\bcirc}{\bullet\xspace}
\newcommand{\wcirc}{\circ\xspace}
\newcommand{\totheleft}[1]{\begin{flushleft}#1\end{flushleft}}
\newcommand{\role}[1]{{\color{purple}{\ensuremath{\mathttbf{#1}}}}}
\newcommand{\continue}[1]{\calcwd{continue} \: #1}
\newcommand{\annst}[2]{#1 \mathop{::} #2}
\newcommand{\annstone}[1]{#1 \mathop{::}}
\newenvironment{proofcase}[1]
  {\totheleft{\textbf{Case \textsc{#1}}}}
  {}
\newcommand{\dom}[1]{\mkwd{dom}(#1)}
\newcommand{\efflet}[3]{\calcwd{let} \: #1 \Leftarrow #2 \: \calcwd{in} \: #3}
\newcommand{\efflettwo}[2]{\calcwd{let} \: #1 \Leftarrow #2 \: \calcwd{in}}
\newcommand{\effreturn}[1]{\calcwd{return} \: #1}
\newcommand{\one}{\textbf{1}}
\newcommand{\deriv}[1]{\mathbf{#1}}
\newcommand{\jarg}[1]{\{ #1 \} \:\:}
\newcommand{\EnsembleS}{EnsembleS\xspace}
\newcommand{\recty}[2]{\mu #1 . #2}
\newcommand{\globalend}{\mkwd{end}\xspace}
\newcommand{\localend}{\mkwd{end}\xspace}
\newcommand{\midspace}{\, \mid \,}
\newcommand{\roles}[1]{\mkwd{roles}(#1)}
\newcommand{\defeq}{\triangleq}
\newcommand{\project}[2]{#1 \upharpoonright #2}
\newcommand{\self}{\calcwd{self}\xspace}
\newcommand{\header}[1]{\totheleft{\textbf{#1}}}
\newcommand{\subheader}[1]{\totheleft{\textit{#1}}}
\newcommand{\subheadersig}[2]{\totheleft{\textit{#1} \hfill \framebox{#2}}}
\newcommand{\headersig}[2]{\totheleft{\textbf{#1} \hfill \framebox{#2}}}
\newcommand{\headersigarg}[2]{\totheleft{\textbf{#1} \hfill #2}}
\newcommand{\rowspace}{1em}
\newcommand{\rowskip}{\vspace{\rowspace}}
\newcommand{\size}[1]{|#1|}
\newcommand{\subj}[1]{\mkwd{subj}(#1)}
\newcommand{\obj}[1]{\mkwd{obj}(#1)}
\newcommand{\isoutput}[1]{\mkwd{isOutput}(#1)}
\newcommand{\isinput}[1]{\mkwd{isInput}(#1)}
\newcommand{\annarrow}[1]{\xrightarrow{#1}}
\newcommand{\syncannarrow}[1]{\xRightarrow{#1}}
\newcommand{\bl}{\begin{array}{l}}
\newcommand{\el}{\end{array}}
\newcommand{\vseq}[3]{{#1} \vdash {#2} {:} {#3}}
\newcommand{\dseq}[1]{\vdash {#1}}
\newcommand{\cseq}[3]{#1; #2 \vdash #3}
\newcommand{\zseq}[3]{#1; #2 \vdashzap #3}
\newcommand{\prop}{\ensuremath{\varphi}}
\newcommand{\proparg}[1]{\ensuremath{\prop(#1)}}
\newcommand{\rolesetp}{\tilde{\prole}}
\newcommand{\rolesetq}{\tilde{\qrole}}
\newcommand{\rolesetr}{\tilde{\rrole}}
\newcommand{\rolesets}{\tilde{\srole}}
\newcommand{\sessindextwo}[2]{#1[#2]}
\newcommand{\sessindexrolesnoty}[3]{#1[#2]\langle #3 \rangle}
\newcommand{\sessindexroles}[4]{\sessindexrolesnoty{#1}{#2}{#3}{:}#4}
\newcommand{\commdir}{\dagger}
\newcommand{\disconndir}{\ddagger}
\newcommand{\localcomm}[4][\commdir]{#2{#1}{#3(#4)}}
\newcommand{\msg}[2]{#1(#2)}
\newcommand{\behaviour}{\kappa}
\newcommand{\tya}{A}
\newcommand{\tyb}{B}
\newcommand{\safe}[1]{\mkwd{safe}(#1)}
\newcommand{\waitact}{{\#}{\uparrow}}
\newcommand{\disconnact}{{\#}{\downarrow}}
\newcommand{\ltslbl}{\gamma}
\newcommand{\ltscomm}[6][\commdir]{#2 {:} #3 {#1} #4 {::} #5(#6)}
\newcommand{\ltspair}[4]{#1 {:} #2, #3 {::} #4}
\newcommand{\ltsconnpair}[4]{#1 {:} #2 \twoheadrightarrow #3 {::} #4}
\newcommand{\ltsdisconnpair}[3]{#1 {:} #2 {\#} #3}
\newcommand{\ltsdisconnact}[4][\disconndir]{#2 {:} #3 {#1} #4}
\newcommand{\ltswait}[3]{\ltsdisconnact[\waitact]{#1}{#2}{#3}}
\newcommand{\ltsdisconn}[3]{\ltsdisconnact[\disconnact]{#1}{#2}{#3}}
\newcommand{\tyactive}[1]{\ensuremath{\mkwd{active}(#1)}}
\newcommand{\tyinactive}[1]{\ensuremath{\mkwd{inactive}(#1)}}
\newcommand{\rtenv}{\Delta}
\newcommand{\hyproject}[3]{#1 \upharpoonright_{#3} \role{#2}}
\newcommand{\recenv}{\Phi}
\newcommand{\globact}{\pi}
\newcommand{\locacta}{\alpha}
\newcommand{\locactb}{\beta}
\newcommand{\locact}{\locacta}
\newcommand{\loctya}{S}
\newcommand{\outloctya}{S^{\bcirc}}
\newcommand{\inloctya}{S^{\wcirc}}
\newcommand{\loctyb}{T}
\newcommand{\loctyc}{U}
\newcommand{\locty}{\loctya}
\newcommand{\then}{\,{.}\,}
\newcommand{\sumact}[2]{\Sigma_{#1} (#2)}
\newcommand{\outsumact}[2]{\Sigma^{\bcirc}_{#1} (#2)}
\newcommand{\insumact}[2]{\Sigma^{\wcirc}_{#1} (#2)}
\newcommand{\globalsend}[4]{#1 \rightarrow #2 : #3(#4)}
\newcommand{\globalconn}[4]{\role{#1} \twoheadrightarrow \role{#2} : #3(#4)}
\newcommand{\globaldisconn}[2]{\role{#1} \# \role{#2}}
\newcommand{\newconnsess}[3]{\sessindexrolesnoty{#1}{#2}{#3}}
\newcommand{\newsend}[3]{\calcwd{send} \; #1(#2) \; \calcwd{to} \; {#3}}
\newcommand{\newconn}[4]{\calcwd{connect} \; #1(#2) \; \calcwd{to} \; {#3}
\; \calcwd{as} \; #4}
\newcommand{\newrecv}[2]{\calcwd{receive} \; \calcwd{from} \; #1 \; \{ #2 \} }
\newcommand{\newrecvone}[1]{\calcwd{receive} \; \calcwd{from} \; #1 \; }
\newcommand{\newaccept}[2]{\calcwd{accept} \; \calcwd{from} \; #1 \; \{ #2 \} }
\newcommand{\simplerecv}[3]{\calcwd{receive} \; #1(#2) \; \calcwd{from} \; #3}
\newcommand{\simpleaccept}[3]{\calcwd{accept} \; #1(#2) \; \calcwd{from} \; #3}
\newcommand{\newwait}[1]{\calcwd{wait} \; #1}
\newcommand{\newdisconn}[1]{\calcwd{disconnect} \; \calcwd{from} \; #1 }
\newcommand{\localtycomm}[4]{{#1}{#2}{#3}({#4})}
\newcommand{\localtysend}[3]{{#1}{!}{#2}({#3})}
\newcommand{\localtyrecv}[3]{{#1}{?}{#2}({#3})}
\newcommand{\localtyconn}[3]{{#1}{!!}{#2}({#3})}
\newcommand{\localtyaccept}[3]{{#1}{??}{#2}({#3})}
\newcommand{\localtywait}[1]{\waitact{#1}}
\newcommand{\localtydisconn}[1]{\disconnact{#1}}
\newcommand{\prole}{\role{p}\xspace}
\newcommand{\qrole}{\role{q}\xspace}
\newcommand{\rrole}{\role{r}\xspace}
\newcommand{\srole}{\role{s}\xspace}
\newcommand{\trole}{\role{t}\xspace}
\newcommand{\actorcls}{u}
\newcommand{\pidty}[1]{\mkwd{Pid}(#1)}
\newcommand{\proto}{P}
\newcommand{\ty}[1]{\mkwd{ty}(#1)}
\newcommand{\tseq}[6][\loctyb]{ \jarg{#1} #2 \midspace {#3} \triangleright {#4}
{:} {#5} \triangleleft {#6} }
\newcommand{\bseq}[3][\loctyb]{ \jarg{#1} {#2} \vdash {#3}}
\newcommand{\connstate}{\sigma}
\newcommand{\finished}[1]{\mkwd{end}(#1)}
\newcommand{\sgroup}{\mathcal{S}}
 \renewenvironment{abstract}{\vskip0.7\bigskipamount
   \noindent
   \rlap{\color{lipicsLineGray}\vrule\@width\textwidth\@height1\p@}\hspace*{7mm}\fboxsep1.5mm\colorbox[rgb]{1,1,1}{\raisebox{-0.4ex}{\large\selectfont\sffamily\bfseries\abstractname}}\vskip3\p@
   \fontsize{9}{12}\selectfont
   \noindent\ignorespaces}
   {
    \ifx\@hideLIPIcs\@undefined
     \vskip0.7\baselineskip\noindent
    \subjclassHeading
    \ifx\@ccsdescString\@empty
      \textcolor{red}{Author: Please fill in 1 or more \string\ccsdesc\space macro}\else
      \@ccsdescString
    \fi
    \vskip0.7\baselineskip
    \noindent\keywordsHeading
    \ifx\@keywords\@empty
      \textcolor{red}{Author: Please fill in \string\keywords\space macro}\else
      \@keywords
    \fi
      \ifx\@DOIPrefix\@empty\else
        \vskip0.7\baselineskip\noindent
        \doiHeading\href{https://doi.org/\@DOIPrefix.\@EventAcronym.\@EventYear.\@ArticleNo}{\@DOIPrefix.\@EventAcronym.\@EventYear.\@ArticleNo}\fi
    \ifx\@category\@empty\else
      \vskip0.7\baselineskip\noindent
      \categoryHeading\@category
    \fi
    \ifx\@relatedversion\@empty\else
      \vskip0.7\baselineskip\noindent
      \relatedversionHeading\@relatedversion
    \fi
    \ifx\@supplement\@empty\else
      \vskip0.7\baselineskip\noindent
      \supplementHeading\@supplement
    \fi
    \ifx\@funding\@empty\else
      \vskip0.7\baselineskip\noindent
      \fundingHeading\ifx\authoranonymous\relax\textcolor{red}{Anonymous funding}\else\@funding\fi
    \fi
    \ifx\@acknowledgements\@empty\else
      \vskip0.7\baselineskip\noindent
      \acknowledgementsHeading\ifx\authoranonymous\relax\textcolor{red}{Anonymous acknowledgements} \else\@acknowledgements\fi
    \fi
    \fi
  }
\def\@maketitle{\newpage
  \null\vskip-\baselineskip
  \vskip-\headsep
  \@titlerunning
  \@authorrunning
\parindent\z@ \raggedright
  \if!\@title!\def\@title{\textcolor{red}{Author: Please fill in a title}}\fi
    {\LARGE\sffamily\bfseries\mathversion{bold}\@title \par}\vskip 1em
    \ifx\@author\orig@author
\else
      {\def\thefootnote{\@arabic\c@footnote}\setcounter{footnote}{0}\fontsize{9.5}{12}\selectfont\@author}\fi
    \bgroup
      \immediate\openout\tocfile=\jobname.vtc
      \protected@write\tocfile{
			\let\footnote\@gobble
			\let\thanks\@gobble
			\def\footnotemark{}
			\def\and{and }\def\,{ }
			\def\\{ }
		}{\string\contitem
	        \string\title{\@title}\string\author{\@authorsfortoc}\string\page{\@ArticleNo:\thepage--\@ArticleNo:\number\numexpr\getpagerefnumber{LastPage}}}\closeout\tocfile
    \egroup
  \par}
\newcommand{\colsep}{15pt}
\newcommand{\sesstype}[1]{\mkwd{sessionType}(#1)}
\newcommand{\bhv}[1]{\mkwd{behaviour}(#1)}
\newcommand{\zapsymb}{\stmaryrdLightning}
\newcommand{\zap}[1]{\zapsymb #1}
\newcommand{\zaptwo}[2]{\zapsymb \sessindextwo{#1}{#2}}
\newcommand{\raiseexn}{\calcwd{raise}}
\newcommand{\trycatch}[2]{\calcwd{try} \: #1 \: \calcwd{catch} \: #2}
\newenvironment{mathparsmall}
{\begin{mathpar}\footnotesize}{\end{mathpar}}
\newcommand{\fn}[1]{\mkwd{fn}(#1)}
\newcommand{\zapenv}{\Theta}
\newcommand{\zapfn}[2]{\mkwd{zap}(#1, #2)}
\newcommand{\zapindex}[2]{#1[#2] : \zapsymb }
\newcommand{\lblzap}[3]{#1 : #2 \zapsymb #3}
\newcommand{\vdashzap}{\vdash_{\zapsymb}}
\newcommand{\failed}[1]{\mkwd{failed}(#1)}
\newcommand{\oflat}[1]{\mkwd{flat}(#1)}
\newcommand{\prog}[1]{\mkwd{prog}(#1)}
\newcommand{\ready}[1]{\mkwd{ready}(#1)}
\newcommand{\confzero}{\mathbf{0}}
\newcommand{\tylit}[1]{\mkwd{#1}}
\newcommand{\var}[1]{\textit{#1}}
\newcommand{\ep}{E_{\mkwd{P}}}
\newcommand{\synceval}{\Longrightarrow}
\newcommand{\synclbl}{\rho} \newcommand{\syncevalstar}{\synceval^*}
\newcommand{\maybesynceval}{\synceval^{?}}
\newcommand{\secref}[1]{\S\ref{#1}}
\newcommand{\secrefp}[1]{(\S\ref{#1})}
\newcommand{\enableseq}[2]{#1 \vdash #2}
\newcommand{\unf}[1]{\mkwd{unf}(#1)}
\authorrunning{P.\ Harvey, S.\ Fowler, O.\ Dardha, and S.\ J.\ Gay} 
\keywords{Concurrency, session types, adaptation} 
\author{Paul Harvey}
{Rakuten Mobile Innovation Studio}
{paul@paul-harvey.org}
{https://orcid.org/0000-0003-1243-938X}
{}
\author{Simon Fowler}
{School of Computing Science, University of Glasgow}
{Simon.Fowler@glasgow.ac.uk}
{https://orcid.org/0000-0001-5143-5475}
{}
\author{Ornela Dardha}
{School of Computing Science, University of Glasgow}
{Ornela.Dardha@glasgow.ac.uk}
{https://orcid.org/0000-0001-9927-7875}
{}
\author{Simon J. Gay}
{School of Computing Science, University of Glasgow}
{Simon.Gay@glasgow.ac.uk}
{https://orcid.org/0000-0003-3033-9091}
{}
\begin{document}

\title{Multiparty Session Types for Safe Runtime Adaptation in an Actor Language\\(Extended version)}
\titlerunning{Multiparty Session Types for Safe Runtime Adaptation in an Actor Language}

\maketitle

\begin{abstract}
Human fallibility, unpredictable operating environments, and
the heterogeneity of hardware devices
are driving the need for software to be able to \emph{adapt}
as seen in the Internet of Things or telecommunication networks.
Unfortunately, mainstream programming languages do not readily allow a software
component to sense and respond to its operating environment, by
\emph{discovering}, \emph{replacing}, and \emph{communicating} with 
components that are not part of the original system design, while
maintaining static correctness guarantees.  In particular, if a new component is
discovered at runtime, there is no guarantee that its communication
behaviour is compatible with existing components.

We address this problem by using \emph{multiparty session types with explicit
connection actions}, a type formalism used to model distributed communication
protocols. By associating session types with software components, the discovery
process can check protocol compatibility and, when required, correctly replace
components without jeapordising safety.

We present the design and implementation of \EnsembleS, the \emph{first}
actor-based language with adaptive features and a static session type system,
and apply it to a case study based on an adaptive DNS server. We formalise the
type system of \EnsembleS and prove the safety of well-typed programs, making
essential use of recent advances in \emph{non-classical} multiparty session
types.
\end{abstract}

\section{Introduction}
The era of single monolithic stand-alone computers has long been replaced by a
landscape of heterogeneous and distributed computers and software applications.
Technologies such as the
IoT~\cite{XiaYWV12:iot}, self-driving cars~\cite{10.1007/978-3-319-00476-1_14}, or autonomous networks~\cite{5371088} bring the new challenge of needing to successfully operate in face of ever-changing environments, technologies, devices, and human errors, necessitating the need to \emph{adapt}.

Here, we define \emph{dynamic self-adaptation}---hereafter referred to as \emph{adaptation}---as the ability of a software component to sense and respond to its operating environment, by \emph{discovering}, \emph{replacing}, and \emph{communicating} with other software components at runtime that are not part of the original system design \cite{10.5555/303461.342800, rellermeyer2007r}.
There are many examples of adaptive systems, as well as the mechanisms of
adaptation they leverage, such as discovery~\cite{jara2012light},
modularisation~\cite{DBLP:journals/dse/HaytonBDHH99}, dynamic code loading and
migration~\cite{CesariniV16, 1369163}. Commercially, Steam's in-home
streaming system\footnote{\url{http://store.steampowered.com/streaming/}}
enables video games to dynamically transfer their input/output across a range of
devices. Academically, RE$^\mathrm{X}$~\cite{199370} enables software to
self-assemble predefined components, using machine learning to reconfigure the
software in response to environmental changes.

Despite strong interest in adaption and substantial work on the mechanisms of
adaptation, current programming languages either lack the capabilities to ensure
that adaptation can be achieved safely and correctly, or they check correctness
dynamically, resulting in runtime overheads which may not be acceptable for
resource-constrained devices.

Specifically, if an adaptive system {discovers} new software components at
runtime,
these components must interact with the system in a purposeful manner. In
concurrent and distributed systems, such interaction goes beyond a simple
function call / return expressed with standard types and type systems:
interaction involves complex \emph{communication protocols} that constrain the
sequence and type of data exchanged. For example, knowing that two components
communicate integers and strings does not describe if or when they will be sent
or received.
In spite of growing interest in the topic, for example, the recent formation of the United
Nations group considering \emph{creative
adaptation}\footnote{\url{https://www.itu.int/en/ITU-T/focusgroups/an/Pages/default.aspx}},
mainstream programming languages do not support the
\emph{specification} and \emph{verification} of communication protocols in
concurrent and distributed systems. In turn, errors are discovered late in the
development process and potentially after deployment.

Even where all components are known statically, communication safety cannot be
guaranteed: as an example, the RE$^\mathrm{X}$ system's programming language specifies
sequential call / return interfaces for components, but not communication
protocols for concurrent components.
The adaptation in the Steam in-home streaming system is even more limited, being restricted to detection of input/output devices from a set of compatible possibilities. In both cases, the adaptive aspects of the software have been defined and designed ahead of time, as opposed to being composed \emph{on-demand} at runtime, leaving no scope for extending the system via runtime discovery and replacement.

This situation brings us to a key research question:
\begin{center}
\textbf{RQ}:
Can a programming language support \emph{static (compile-time) verification} of safe runtime dynamic self-adaptation, i.e., \emph{discovery, replacement and communication}?
\end{center}

The problem of static verification of safe communication is addressed by \emph{multiparty session types} \cite{Honda93, HondaVK98, Honda:2008:MAS:1328897.1328472}.
Multiparty session types (MPSTs) are a type formalism used to specify the
\emph{type}, \emph{direction} and \emph{sequence} of communication actions
between two or more participants. Session types guarantee that software conforms
to predefined communication protocols, rather than risking errors manifesting
themselves at runtime.

There is already some work in the literature on adaptation and session types,
but it does not answer our research question. We discuss related work in
\S\ref{sec:related}, but in brief, the state-of-the-art has some combination of
the following limitations: theory for a formal model such as the $\pi$-calculus
\cite{CoppoDV15,CastellaniDP16,DiGiustoP15,DiGiustoP15a}, rather than a
real-world programming language; omission of some aspects of adaptation, such as
runtime discovery \cite{Hu2017}; or verification by runtime monitoring
\cite{NeykovaY14,NeykovaY17,Fowler2016}, as opposed to static checking.

To answer our research question,
we implement \EnsembleS, the first actor language leveraging MPSTs
to provide compile-time verification of safe dynamic runtime adaptation:
we can statically guarantee that a discovered actor will comply
with a communication protocol, and guarantee that replacing an actor's behaviour
(e.g., to fix a bug) will not jeapordise communication safety.
Key to our approach is the combination of the actor
paradigm~\cite{Hewitt:1973:UMA:1624775.1624804}, for its process addressability
and explicit message passing, with \emph{explicit connection
actions}~\cite{Hu2017} in multiparty session types, which allow discovered
actors to be invited into a session.

\subparagraph{Contributions.}
The overarching contribution of this work is the design, implementation, and
formalisation of a language which supports dynamic self-adaptation while
guaranteeing communication safety. We achieve this through a novel integration
of an actor-based language and multiparty session types with explicit connection
actions. Specifically, we introduce:

\begin{enumerate}
\item
\textbf{\EnsembleS and its compiler} (\autoref{sec:ensemble-sessions}): we
present an actor language, \EnsembleS, which supports safe adaptable
applications using MPSTs. Our framework supports:
\begin{itemize}
\item MPST specifications, both standard and using explicit
  connection actions (\autoref{sub:channel-connections});
\item MPSTs to provide guarantees of protocol compliance in runtime discovery (\autoref{sub:discovery});
\item automatic generation of application code from MPSTs~(\autoref{sub:auto-skeleton-generation})
\end{itemize}
\item
\textbf{An adaptive DNS case study} (\autoref{sec:DNS}): using MPSTs and runtime discovery to show safe dynamic self-adaptation can be achieved in a non-trivial software service
\item
\textbf{A core calculus for \EnsembleS} (\autoref{sec:core-calculus}): we
formalise \EnsembleS and prove type safety and progress.
\end{enumerate}

The formalism makes several technical contributions: it is the first
actor-based calculus with statically-checked MPSTs; and it is the first
\emph{calculus} to provide a language design and semantics for explicit
connection actions, which had previously only been explored at the type level.
Our design requires exception handling in the style of~\citet{MostrousV18}
and~\citet{FowlerLMD19:stwt}, and the metatheory makes essential (and novel) use
of \emph{non-classical} multiparty session types~\cite{ScalasY19}.

The implementation and examples are available in the paper's companion artifact.
 
\begin{figure}[t]
  \small
\begin{minipage}[t]{0.475\textwidth}
~\header{Global protocol}
\vspace{-1em}
\begin{lstlisting}[basicstyle=\scriptsize, numbers={left}, language = ensemble]
global protocol Bookstore
    (role Sell, role Buy1, role Buy2) {
  book(string) from Buy1 to Sell;
  book(int) from Sell to Buy1;
  quote(int) from Buy1 to Buy2;
  choice at Buy2 {
    agree(string) from Buy2 to Buy1, Sell;
    transfer(int) from Buy1 to Sell;
    transfer(int) from Buy2 to Sell;
  } or {
    quit(string) from Buy2 to Buy1, Sell;
  } }
\end{lstlisting}\end{minipage}
\hfill
\begin{minipage}[t]{0.475\textwidth}
~\header{Local protocol for \lstinline+Sell+}
\vspace{-1em}
\begin{lstlisting}[language = ensemble, label={lst:scribble-buy1}]
local protocol Bookstore_Sell
    (self Sell,role Buy1,role Buy2) {
  book(string) from Buy1;
  book(int) to Buy1;
  choice at Buy2{
    agree(string) from Buy2;
    transfer(int) from Buy1;
    transfer(int) from Buy2;
  } or {
    quit(string) from Buy2;
} }
\end{lstlisting}
\end{minipage}
\caption{Global and local protocols for Bookstore}\label{lst:bookstore}
\end{figure}

\section{Multiparty Session Types}
\label{sec:session-types}

\emph{Multiparty session types} \cite{Honda:2008:MAS:1328897.1328472} are a type
formalism used to describe communication protocols in
concurrent and distributed systems.  An MPST describes communication among
multiple software components or participants, by specifying the \emph{type} and
the \emph{direction} of data exchanged, which is given as a sequence of send and
receive actions.

We first introduce MPSTs (formalised in \autoref{sec:core-calculus}) via
\emph{Scribble}~\cite{Yoshida:2013:SPL:3092395.3092399}, a specification
language for communicating protocols based on the theory of
multiparty session types.  We start with a \emph{global type}, which describes
the interactions among all communicating participants.  Using the Scribble tool,
a global protocol can be \emph{validated}, guaranteeing its correctness, and
then \emph{projected} for each participant. Projection returns a \emph{local
type}, which describes communication actions from the viewpoint of that
participant.

\subparagraph{\texttt{Bookstore} example.}
\autoref{lst:bookstore} shows the classic \texttt{Bookstore} (also known
as \emph{Two-Buyer}) example, written in Scribble. 
We have three communicating
participants (\emph{roles}): two buyers \texttt{Buy1} and \texttt{Buy2}, and one seller
\texttt{Sell}, where the buyers wish to buy a book from the seller.
\texttt{Buy1} sends the title of the book of type
{\color{purple}\textbf{string}} to \texttt{Sell} (line 3).
Next, \texttt{Sell} sends the price of the book of type
{\color{purple}\textbf{int}} to \texttt{Buy1} (line 4).  At this stage,
\texttt{Buy1} invites \texttt{Buy2} to share the cost of the
book, by sending them a \texttt{quote} of type {\color{purple}\textbf{int}} that
\texttt{Buy2} should pay (line 5). It is \texttt{Buy2}'s \emph{internal choice}
(line 6) to either \texttt{agree} (line 7), or \texttt{quit} the protocol (line
11). After agreement, both \texttt{Buy1} and \texttt{Buy2} \texttt{transfer}
their \texttt{quote} to \texttt{Sell} (lines 8 and 9, respectively).

Projecting the \texttt{Bookstore} global protocol into each of the communicating
participants returns their local protocols. \autoref{lst:bookstore} shows the
local protocol for \texttt{Sell}; we omit \texttt{Buy1} and \texttt{Buy2} as
they are similar.  Note that the local protocol only includes actions relevant
to \texttt{Sell}.

\subparagraph{Explicit connection actions.}
The \texttt{Bookstore} protocol assumes that all roles are connected at the
start of the session. This is undesirable when a participant is only needed for
\emph{part} of a session, or the identity of a participant depends on data
exchanged in the protocol.

\begin{figure}[t]
  \begin{minipage}{0.5\textwidth}
\begin{lstlisting}[basicstyle=\scriptsize, numbers={left}, language = ensemble]
explicit global protocol OnlineStore
    (role Customer, role Store, role Courier) {
  login(string) connect Customer to Store;
  do Browse(Customer, Store, Courier);
}

aux global protocol Deliver
    (role Customer, role Store, role Courier) {
  address(string) from Customer to Store;
  deliver(string) connect Store to Courier;
  ref(int) from Courier to Store;
  disconnect Courier and Store;
  ref(int) from Store to Customer;
  disconnect Store and Customer;
}
\end{lstlisting}
\end{minipage}
\hfill
\begin{minipage}{0.475\textwidth}
\begin{lstlisting}[basicstyle=\scriptsize, numbers={left}, language = ensemble]
aux global protocol Browse
    (role Customer, role Store, role Courier) {
  item(string) from Customer to Store;
  price(int) from Store to Customer;
  choice at Customer {
    do Browse(Customer, Store, Courier);
  } or {
    do Deliver(Customer, Store, Courier);
  } or {
    quit() from Customer to Store;
    disconnect Store and Customer;
  }
}
\end{lstlisting}
\end{minipage}
\caption{Global protocol for \texttt{OnlineStore}}
\label{fig:onlinestore}
\end{figure}

Consider Figure~\ref{fig:onlinestore}, which details the protocol
for an online shopping service, inspired by the travel agency protocol detailed
by~\citet{Hu2017}. The protocol is organised as three
subprotocols: \texttt{OnlineStore}, the entry-point;
\texttt{Browse}, where the customer
repeatedly requests quotes for items; and \texttt{Deliver}, where
the store requests delivery from a courier.
In contrast to \texttt{Bookstore}, each connection must be established explicitly
(note that \lstinline+connect+ replaces \lstinline+from+ when initiating a
connection).

Note in particular that \texttt{Courier} is only involved in the
\texttt{Deliver} subprotocol. The store can therefore \emph{choose}
which courier to use based on, for example, the weight of the item or the
customer's location. Furthermore, it is not necessary to involve the courier if
the customer does not choose to make a purchase.
 
\section{\EnsembleS: An Actor Language for Runtime Adaptation}
\label{sec:ensemble-sessions}
In this section, we present \EnsembleS, a new session-typed actor-based
language based on Ensemble~\cite{harvey-thesis,
Harvey:2017:AAJ:3144555.3144559}.
\EnsembleS actors are addressable, single-threaded entities with share-nothing
semantics, and communicate via message passing. However, differently from the
classic definition of the actor model
\cite{Hewitt:1973:UMA:1624775.1624804,DBLP:books/daglib/0066897}, the
communication model in \EnsembleS is channel-based.
\EnsembleS supports both \emph{static} and \emph{dynamic} topologies:
\begin{description}
    \item[Static Topologies] All participants are present at the start of the session and remain involved for
the duration of the session. This is based on traditional MPSTs~\cite{Honda:2008:MAS:1328897.1328472}.
    \item[Dynamic Topologies] Participants can connect and disconnect during a session. This builds on the more recent idea of explicit connection actions~\cite{Hu2017}.
\end{description}

\subsection{\EnsembleS: basic language features}
\label{sub:lang-overview}

\begin{wrapfigure}{l}{0.425\textwidth}
\begin{lstlisting}[escapechar=@,language=ensemble]
type Isnd is interface(out integer output)
type Ircv is interface(in integer input)

stage home {
	actor sender presents Isnd {
		value = 1;
		constructor() {}
		behaviour {
			send value on output;
			value := value + 1;
	} }
	actor receiver presents Ircv {
		constructor() {}
		behaviour{
			receive data from input;
			printString("\nreceived: ");
			printInt(data);
	} }
  boot {
    s = new sender();
    r = new receiver();
    establish topology(s, r);
  }
}
  \end{lstlisting}
  \caption{A simple \EnsembleS program\vspace{-0.75em}}
\label{lst:ensemble-send-recv}
\end{wrapfigure}
An \EnsembleS actor has its own private state
and a single thread of control expressed as a \texttt{behaviour}
clause, which is repeated until explicitly told to stop.
Every actor executes within a \texttt{stage}, which represents a memory space.
Actors do not share state, but instead communicate via message passing along
half-duplex, simply-typed channels.

\autoref{lst:ensemble-send-recv} shows a simple \EnsembleS program which
defines, instantiates and connects two actors, one of which sends linearly
increasing values to the other.
The program defines two interfaces \lstinline+Isnd+ and \lstinline+Ircv+, declaring an output and input channel respectively.
The \lstinline+boot+ clause (lines 19--23) is
executed first and creates an instance of each actor (lines 20--21), using the
appropriate \lstinline+constructor+ (lines 7 and 13, respectively). This creates
and begins executing new threads for each actor, which follow the logic of the
relevant \lstinline+behaviour+ clause. Next, the \lstinline+boot+ clause binds the
actor's channels together (line 22, discussed later in
\autoref{sub:channel-connections}).
Once bound, the
\lstinline+sender+ actor sends the contents of \lstinline+value+ on its channel,
increments it, and goes back to the beginning of its behaviour loop (lines
8--11). The \lstinline+receiver+ actor waits for a message, binds the message to
\lstinline+data+, displays it, and returns to the top of its behaviour loop
(lines 14--18).
\EnsembleS inherits Ensemble's native support for
runtime software adaptation actions~\cite{Harvey:2017:AAJ:3144555.3144559}:

\begin{description}
    \item[\underline{Discover}] The ability to \emph{locate} an arbitrary actor or stage reference at runtime, given an \texttt{interface} and \texttt{query}.
    \item[Install] Given an actor type, the ability to \texttt{spawn} it at a specified stage.
    \item[Migrate] The ability for an executing actor to \emph{move} to another
        stage.
    \item[\underline{Replace}] The ability to \emph{replace} an executing actor A by a new instantiation of actor B, the latter continuing at the same stage as A, if A and B have the same \texttt{interface}.
    \item[\underline{Interact}]: Given an actor reference (either spawned, discovered, or communicated),  the ability to \emph{connect} to its channels at runtime and then communicate.
\end{description}

We focus on the \underline{underlined} actions and apply session types to guarantee communication safety. The reason for this choice is that discover, replace and interact are actions that modify \textit{how} actors operate, whereas the other actions, install and migrate, affect \textit{where} actors operate, but not their behaviour.

\begin{figure*}[t]
	\centering
  {\small
    \clearpage{}\begin{tikzpicture}[font=\sffamily]
  \node[mybox=colBox] (SGP) {Scribble\\Global Protocol};
  \node[mybox=colBox,right=of SGP] (SLP) {Scribble\\Local Protocol};
  \node[mybox=colBox,right=of SLP] (EST) {EnsembleS\\Template};
  \node[mybox=colBox,right=of EST] (EXE) {Executable};
\draw[myarrow=SLP] (SGP) -- (SLP);
  \draw[myarrow=EST] (SLP) -- (EST);
  \draw[myarrow=EXE] (EST) -- (EXE);
  \path (SGP) -- node[below, yshift=-0.75cm]{Scribble Tool} (SLP);
  \path (SLP) -- node[below, yshift=-0.75cm]{StMungo} (EST);
  \path (EST) -- node[below, yshift=-0.75cm]{EnsembleS Compiler} (EXE);
\end{tikzpicture}\clearpage{}
\vspace{-2em}
  }
		\caption{Automatic Actor Skeleton Generation Process}
	\label{fig:skeleton-generation}
\end{figure*}
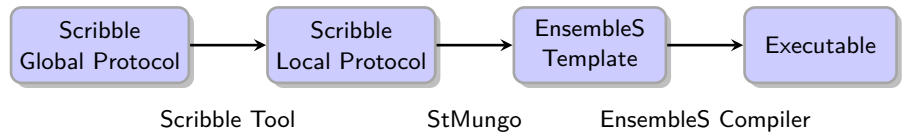

\subsection{Session types in \EnsembleS}
\label{sub:lang-mod}
\label{sub:auto-skeleton-generation}

A \texttt{session} type in \EnsembleS represents a communication protocol for an actor, i.e., a local protocol (or local session type) validated and projected from a global session type.

We extend the StMungo~\cite{Kouzapas:2016:TPM:2967973.2968595} tool to generate \EnsembleS template code that supports session types. \autoref{fig:skeleton-generation} shows an overview of the actor template code generation from a global session type, and \autoref{lst:ensemble-static-session} shows an example of the generated code.

First, a developer defines a global session type in Scribble~\cite{Yoshida:2013:SPL:3092395.3092399} (\autoref{fig:skeleton-generation}, first stage).
The Scribble tool checks that the protocol is well-formed and valid according to MPST theory and \emph{projects} the global protocol into local protocols for each participant (\autoref{fig:skeleton-generation}, second stage).
For each local protocol, the StMungo tool produces (\autoref{fig:skeleton-generation}, third stage) $i$) the \texttt{session} type, $ii$) the \textcolor{purple}{\bf interface} and type definitions, and $iii$) the \textcolor{red}{\bf actor} template.
The generated code is parsed by the \EnsembleS compiler, producing executable code (\autoref{fig:skeleton-generation}, fourth stage).

Let us now look at the \texttt{Buy1} local protocol, given in \autoref{lst:bookstore}.
Following the code generation process in \autoref{fig:skeleton-generation}, the \EnsembleS template items $i$), $ii$) and $iii$) for \texttt{Buy1} correspond respectively to the code blocks starting in lines 3, 14, and 24 in \autoref{lst:ensemble-static-session}.

The \texttt{Buy1} local protocol is translated as an \EnsembleS \texttt{session} type in \autoref{lst:ensemble-static-session} (lines 3--12).
It shows a sequence of send and receive actions (lines 4--6), followed by a choice at \texttt{Buy2} (lines 7--12), which determines the next set of communication actions.

Following session type specifications,
\EnsembleS channels define both the payload type and the \texttt{session} that this channel expects to interact with (lines 14--21,  \autoref{lst:ensemble-static-session}). The \EnsembleS compiler uses this information to ensure that the \texttt{session} of each channel matches the \texttt{session} associated with the actor it is connected to.

An \texttt{actor} may \texttt{follow} a \texttt{session} type (line 24, \autoref{lst:ensemble-static-session}).
This tells the \EnsembleS compiler that the logic within the \texttt{behaviour} clause of that actor must follow the communication protocol defined in the \texttt{session}.

It is important to note that the code generation in
\autoref{fig:skeleton-generation} is optional and the \EnsembleS typechecker is
independent of this process.

\begin{figure}[t]
\setlength{\columnsep}{\colsep}
  \begin{lstlisting}[ escapechar=@,
  language=ensemble,
basicstyle=\ttfamily\scriptsize,
  numbers={left}, multicols=2]
// FILE AUTOMATICALLY GENERATED
//************SESSIONS**************
type Buy1 is session(
	book(string) to Sell;
	book(int) from Sell;
	quote(int) to Buy2;
	choice at Buy2{
		Choice0_agree(string) from Buy2;
		transfer(int) to Sell;
	}	or {
		Choice0_quit(string) from Buy2;
	} )
//***********INTERFACES*************
type Buy1I is interface(
	out {Seller, string} toSell_string,
	in {Seller, integer} fromSell_integer,
	out {Buy2, integer} toBuy2_integer,
	in {Buy2, Choice0} fromBuy2_agreequit,
	in {Buy2, string} fromBuy2_string,
	out {Sell, integer} toSell_integer,
)
//*************ACTORS***************
stage home{
actor Buy1A presents Buy1I follows Buy1 {
 constructor() {}
 behaviour {
 payload1 = "";
 send payload1 on toSell_string;
 receive payload2 from fromSell_integer;
 payload3 = 42;
 send payload3 on toBuy2_integer;
 // Receive choice from other actor
 receive payload4 from fromBuy2_agreequit;
 switch(payload4) {
  case Choice0_agree:
    receive payload5 from fromBuy2_string;
    payload6 = 42;
    send payload6 on toSell_integer;
    break;
  case Choice0_quit:
    receive payload7 from fromBuy2_string;
    break;
  }
} }
// Omitted: Buy2A and SellA actors
boot {
 buyer1 = new Buy1A();
 buyer2 = new Buy2A();
 seller = new SellA();
 // other actors...
 establish topology(buyer1,buyer2,seller);
} }
\end{lstlisting}
\caption{EnsembleS static session template}
\label{lst:ensemble-static-session}
\end{figure}

\subsection{Channel connections: static and dynamic}
\label{sub:channel-connections}
If an actor follows a \texttt{session} type, then its channel connections must be \emph{1-1}. This is the standard linearity requirement for session types: if there are multiple senders on one channel, then their messages can interfere and it is not possible to statically check that the session is followed correctly.
\EnsembleS avoids this problem by using a single channel for each message type between each pair of participants.
For example, in \autoref{lst:bookstore}, each of the three actors communicates strings and integers with both of the other actors. Because channels are unidirectional, each actor therefore has 8 channels: 2 to send strings and 2 to send integers to both other actors, and similarly 4 channels for receiving.

\mysubsection{Static connections.}
\label{par:topology-stmt}
When using session types with \emph{static} topologies, and all actors in the
session are known from the beginning of the application, \EnsembleS provides the
\texttt{establish topology} statement to create the connections between the
specified \texttt{session} actors (line 22, \autoref{lst:ensemble-send-recv};
line 51, \autoref{lst:ensemble-static-session}).
A compile-time error is generated if the topology is ill-defined (e.g., if the
\lstinline+session+s do not compose or if the channels do not match).

\mysubsection{Dynamic connections.}
\EnsembleS supports reconfigurable channels and \emph{dynamic} connections, via \texttt{link} and \texttt{unlink} statements.
The \texttt{link} statement takes two references to actors which follow \texttt{session}s (line 6, \autoref{lst:ensemble-session-discovery}). It connects all of the channels of the two specified actors such that the channel and actor's \texttt{session}s match.
A compile-time error is raised if the \texttt{session}s are incompatible.
Conversely, the \texttt{unlink} statement disconnects (line 9, \autoref{lst:ensemble-session-discovery}).

\subsection{Adaptation via discovery and replacement}
\label{sub:discovery}

\begin{wrapfigure}{l}{0.45\textwidth}
\begin{lstlisting}[ escapechar=@,
  language=ensemble,
basicstyle=\ttfamily\scriptsize,
  numbers={left}]
// define query alpha
query_a = alpha();
actor_s = discover(
	Buyer1_interface, @\fcolorbox{black}{green}{Buyer1\_session}@, query_a);
if (actor_s[0].length > 1){
	@\fcolorbox{black}{yellow}{link me with actor\_s[0];}@
	msg = "book";
	send msg on toB_string;
	@\fcolorbox{black}{yellow}{unlink Buyer1\_session;}@
}
  \end{lstlisting}
  \caption{Session type-based discovery \vspace{-1em}}
  \label{lst:ensemble-session-discovery}
\end{wrapfigure}
\EnsembleS supports runtime discovery of \emph{local} or \emph{remote} actor
instances. As an example, in a sensor network, it may be desirable to connect to a
sensor which has a battery level above a certain threshold. The \EnsembleS query
language allows the user to define a query on non-functional properties (such as
battery level, signal strength, or name), as well as the channels exposed by an
actor's interface. This ensures that any discovered actor has the correct number
and type of channels, and satisfies user's preferences. To ensure that the
discovered actor also obeys a declared protocol, \EnsembleS uses
\texttt{session} types in the discovery process. The green box in
\autoref{lst:ensemble-session-discovery} shows how a \texttt{session} is
used in the actor discovery process, and the yellow box shows how such actors
are connected together.
Runtime discovery does not appear in the \texttt{session} because it does not affect the behaviour of an application directly. 

\begin{figure}[t]
  \begin{lstlisting}[escapechar=@,
  language=ensemble,
basicstyle=\ttfamily\scriptsize,
  numbers={left}, multicols=2]
// session and interface definitions
actor fastA presents accountingI
		    follows accountingSession{
	constructor() {}
	behaviour{
		receive data on input;
		quicksort(data);
		send data on output;
  }
}

actor slowA presents accountingI
		    follows accountingSession{
	pS= new property[2] of property("",0);
	constructor() {
		pS[0]:= new property("serial",823);
		pS[1]:= new property("version",2);
		publish pS;
	}
	behaviour{
		receive data on input;
		bubblesort(data);
		send data on output;
}	}

query alpha() { $serial==823 && $version<4; }

actor main presents mainI {
 constructor() { }
 behaviour {
	// Find the slow actors matching query
	actor_s = discover(accountingI,
    accountingSession, alpha());
	// Replace them with efficient versions
	if(actor_s[0].length > 1) {
    replace actor_s[0] with fastA();
	}
} }
  \end{lstlisting}
\caption{Session type-based replacement}
\label{lst:ensemble-session-replacement}
\end{figure}
\EnsembleS also supports the replacement of executing actors, much like the
hot-code swapping in Erlang~\cite{CesariniV16}.  The new actor
must present the same interface as it \emph{takes over} the channels of the
actor being replaced at the location it was executing.  Replacement happens at
the beginning of an actor's \texttt{behaviour} loop. Replacement has many uses,
such as updating, changing, or extending some of the functionalities of existing
software, and is particularly useful in embedded
systems~\cite{Hughes:2009:LLC:1821748.1821787, Hui:2004:DBD:1031495.1031506}.
The existing and new actors must follow the same \texttt{session}
type, guaranteeing that replacement will not break existing actor
interactions.

\autoref{lst:ensemble-session-replacement} shows an example of a \emph{main}
actor searching for actors of type \texttt{slowA} (line 32), and replacing them
with new actors of type \texttt{fastA} (lines 35--37). The \texttt{slowA} actors are located by defining a query (line 26) over user-defined properties, which are published (lines 16--18). The discovery process is the same as above, but now the discovered actors are used for replacement rather than just communication.

\subsection{Implementation}

\EnsembleS is implemented in C, and supports reference-counted garbage
collection and exceltions. Applications are compiled to Java source code,
and then to custom Java class files for use with a custom VM~\cite{6775058}.
These applications can be executed on the desktop, parallel accelerators
(e.g.~GPUs), Raspberry Pi, Lego NXT, and Tmote Sky hardware platforms, and use a
range of networking technologies.

Compact representations of session types are retained at runtime in order to
support discovery.
\EnsembleS skeleton generation code is based on the StMungo
tool~\cite{Kouzapas:2016:TPM:2967973.2968595}, which is implemented as an ANTLR
listener, and session typechecking is supported by modifying the original
Ensemble typechecker to ensure that each communication action is permitted by
the actor's declared session type.

Since \EnsembleS builds directly on top of the original Ensemble implementation,
it inherits Ensemble's runtime system. Performance results can be found
in~\cite{Harvey:2017:AAJ:3144555.3144559}.

\clearpage

\section{Case study: DNS}
\label{sec:DNS}
\begin{figure}[t]
\setlength{\columnsep}{\colsep}
\begin{lstlisting}[ escapechar=@,
  language=ensemble,
basicstyle=\ttfamily\scriptsize,
  numbers={left}, multicols=2]
type Iclient is interface(
	out{RootServer,string} RootServer_stringOut,
	in {RootServer,string} RootServer_stringIn,
	out{ZoneServer,string} ZoneServer_stringOut,
	in {ZoneServer,string} ZoneServer_stringIn,
	in {ZoneServer,choice_enum}	ZoneServer_choiceIn,
	in {RootServer,choice_enum}	RootServer_choiceIn)

type choice_enum is
	enum(TLDResponse,PartialResolution,
	InvalidDomain,ResolutionComplete,
	InvalidTLD)

query find_name(string n){ $name == n; }

actor c presents Iclient
	follows Client {
	dom_name = "nii.ac.jp";
	constructor() { }
	behaviour{
		rootQuery = find_name("jp");
		// Find Root Server
		root_s =
			discover(IServer, RootServer, rootQuery);
		// search until root_s non-empty
		link me with root_s[0];
		send domain_name on RootServer_stringOut;
		receive c_msg	from RootServer_choiceIn;
		switch(c_msg){
			case TLDResponse:
				receive ZoneServerAddr_msg
					from RootServer_stringIn;
			unlink RootServer;
while(true) Lookup : {
	// Find ZoneServer
	zone_s =
    discover(IServer,	ZoneServer,
      find_name(ZoneServerAddr_msg));
	link me with zone_s[0];
	// Ask ZoneServer
	send dom_name on ZoneServer_stringOut;
	receive c_msg2 from ZoneServer_choiceIn;
	switch(c_msg2){
		case PartialResolution:
			receive str_msg from ZoneServer_stringIn;
			ZoneServerAddr_msg := str_msg;
			unlink ZoneServer;
			continue Lookup;
		case InvalidDomain:
			receive str_msg from ZoneServer_stringIn;
			unlink ZoneServer;
			break;
		case ResolutionComplete:
			receive str_msg from ZoneServer_stringIn;
			unlink ZoneServer;
			break Lookup;
		}
		// keep looking
	}
case InvalidTLD:
	 receive str_msg from RootServer_stringIn;
	 unlink RootServer;
} } }
  \end{lstlisting}
\caption{EnsembleS DNS client}
\label{lst:ensemble-dns-client}
\end{figure}

 \begin{wrapfigure}{l}{0.525\textwidth}
\begin{lstlisting}[ escapechar=@,language=ensemble]
type Client is session(
  connect RootServer;
  RootRequest(DomainName) to RootServer;
  choice at RootServer{
    TLDResponse(ZoneServerAddress) from RootServer;
    disconnect RootServer;
    rec Lookup {
      connect ZoneServer;
      ResolutionRequest(DomainName) to ZoneServer;
      choice at ZoneServer {
        PartialResolution(ZoneServerAddress)
          from ZoneServer;
        disconnect ZoneServer;
        continue Lookup;
      } or {
        InvalidDomain(String) from ZoneServer;
        disconnect ZoneServer;
      } or {
        ResolutionComplete(IPAddress) from ZoneServer;
        disconnect ZoneServer;
      }
    }
  } or {
    InvalidTLD(String) from RootServer;
    disconnect RootServer;
  }
)
\end{lstlisting}
\caption{EnsembleS DNS client \texttt{session} type \vspace{-1em}}
\label{lst:ensemble-dns-client-session}
\end{wrapfigure}
 To illustrate the use of session types for adaptive programming, we consider a
real-world case study: the domain name system (DNS). DNS is a hierarchical,
globally distributed translation system that converts an internet host name
(domain name) into its corresponding numerical Internet Protocol (IP)
address~\cite{RFC1035}.

The process begins by transmitting a domain name to one of many well-known \emph{root} servers. This server either rejects bad requests, or provides the information to contact a \emph{zone} server. The zone server may know the IP address of the domain name; if not it refers the request to another zone server. This process continues until either the IP address is returned, or the name cannot be found.

To develop an adaptive DNS example, we assume no \emph{a priori} information
about server location, and instead use explicit discovery to find
root and zone servers based on session types and server properties. We use an
existing Scribble description of DNS as a starting point~\cite{Fowler2016}. To
illustrate adaptation we focus on the client who is querying DNS.

\autoref{lst:ensemble-dns-client-session} shows the \texttt{session} type for the client actor which asks DNS to resolve a domain name. The client first asks for a root server (lines 2--3), and then either is informed that the request is invalid (lines 24--25) or recursively queries zone servers (lines 7--22) until the IP address is found (lines 19--20), or an error is reported (lines 16--17). Based on this \texttt{session}, StMungo generates \EnsembleS types and interface definitions and a skeleton actor. Minimally completing the generated skeleton produces the code in \autoref{lst:ensemble-dns-client}.

In this example, discovery is used to locate the root server (lines 21--25, in
\autoref{lst:ensemble-dns-client}) and the zone server (line 37). In each case,
the \texttt{session} for the relevant server is provided to ensure that the
discovered actor follows the expected protocol.  When either server is located,
the client \texttt{link}s with it (lines 27 and 39), enabling communication.
When communication with the server is no longer required, the client
\texttt{unlink}s explicitly (lines 34, 47, 51, 55, 62). 

Although explicit discovery is used at the language level, there is nothing to
prevent the implementation of discovery from caching the addresses of the root
and zone servers. This does not affect the use of sessions in discovery or the
safety they provide, as the type-based guarantees are still enforced. However,
this would potentially improve performance of the system. Additionally, if a
cached entry becomes stale, the full discovery process can again be used without
code modification or degradation in trust.

A version of DNS which uses discovery allows the system to become more flexible
and resilient to changing operational conditions, such as topology changes in
the servers and their data. Session types ensure compatibility
with the discovered actors.
 
\section{A Core Calculus for EnsembleS}\label{sec:core-calculus}

In this section, we provide a formal characterisation of \textsc{EnsembleS}.
In doing so, we show that our integration of adaptation with multiparty session
types is safe, allowing adaptation while ruling out communication mismatches.

\subparagraph{Relationship to implementation.}
Our core calculus aims to distil the essence of the interplay between adaptation
and session-typed communication with explicit connection actions.
Therefore, we concentrate on a functional core calculus rather than an
imperative one: imperative variable binding serves only to clutter the
formalism, and our fine-grain call-by-value representation can be thought of as
an intermediate language.

Interfaces and unidirectional, simply-typed channels in
\textsc{EnsembleS} are an implementation artifact:
sending on a channel whose type changes is equivalent to sending on multiple
channels with different types.
Moreover, following theoretical accounts of multiparty session
types~\cite{Honda:2008:MAS:1328897.1328472, CoppoDYP16, Hu2017},
instead of having send and receive (resp.\ connect and accept) operations
followed by branching (as done in Mungo and StMungo), we have
unified \calcwd{send} and \calcwd{receive} constructs which communicate a label
along with the message payload.

Since session typing is the interesting part of discovery, we omit
properties and queries from the formalism; their inclusion is routine.
Finally, we concentrate on \emph{dynamic} topologies with explicit
connection actions rather than static topologies since they are important for
adaptation and more interesting technically.

\subsection{Syntax}

\begin{figure}[t]
{\footnotesize
\header{Syntax of Types and Terms}
\[
\begin{array}{lrcl}
\text{Actor class names} & \actorcls \\
\text{Actor definitions} & D & ::= & \actordef{\actorcls}{\loctya}{M} \\
\text{Roles} & \prole, \qrole, \srole, \trole \\
\text{Recursion Labels} & l \\
\text{Behaviours} & \behaviour & ::= & M \midspace \bstop \\
\text{Types} & A, B & ::= & \pidty{S} \midspace \one \\
\text{Values} & V, W & ::= & x \midspace () \\
\text{Actions} & L & ::= & \effreturn{V} \midspace \continue{l} \midspace \raiseexn \\
                    & & \midspace & \new{\actorcls} \midspace \self \midspace
                     \replace{V}{\behaviour} \midspace \newdiscover{S}  \\
                    & & \midspace & \newconn{\ell}{V}{W}{\prole}
                    \midspace  \newaccept{\prole}{\ell_i(x_i) \mapsto M_i}_i \\
                    & & \midspace & \newsend{\ell}{V}{\prole} \midspace
                    \newrecv{\prole}{\ell_i(x_i) \mapsto M_i}_i \\
                    & & \midspace & \newwait{\prole} \midspace \newdisconn{\prole} \\
\text{Computations} & M, N & ::= & \efflet{x}{M}{N} \midspace \trycatch{L}{M}
\midspace \annst{l}{M} \midspace L
\end{array}
\]
\header{Syntax of Session Types}
\[
\begin{array}{lrcl}
  \text{Session Actions} & \locacta, \locactb & ::= &
    \localtysend{\prole}{\ell}{\tya} \midspace \localtyconn{\prole}{\ell}{\tya}
      \midspace \localtyrecv{\prole}{\ell}{\tya}
      \midspace \localtyaccept{\prole}{\ell}{\tya} \midspace \localtywait{\prole} \\
\text{Session Types} & \loctya, \loctyb, \loctyc & ::= &
  \sumact{i \in I}{\alpha_i \then \locty_i} \midspace \recty{X}{\loctya} \midspace X \midspace
  \localtydisconn{\prole} \midspace \localend \\
\text{Communication Actions} & \commdir & ::= & {!} \midspace {?} \\
\text{Disconnection Actions} & \disconndir & ::= & {\waitact} \midspace {\disconnact}
\end{array}
\]
}
\caption{Syntax}
\label{fig:syntax}
\end{figure}

\subparagraph{Definitions.}
Figure~\ref{fig:syntax} shows the syntax of Core \textsc{EnsembleS} terms and types.
We let $\actorcls$ range over actor class names and $D$ range over definitions;
each definition $\actordef{\actorcls}{\loctya}{M}$ specifies the actor's class name,
session type, and behaviour.
Like class tables in Featherweight Java~\cite{IPW01:featherweight-java},
we assume a fixed mapping from class names to definitions.

\subparagraph{Values.}
Since our calculus is inherently effectful, we work in the setting of
\emph{fine-grain call-by-value}~\cite{LPT03:fgcbv}, where we have an
explicit static stratification of values and computations and an explicit
evaluation order similar to A-normal form~\cite{FlanaganSDF93}.
Values $V, W$ describe data that has been computed, and for the sake of simplicity,
consist of variables and the unit value.
Other base values (such as integers or booleans) can be encoded or added
straightforwardly.

\subparagraph{Computations.}
The $\efflet{x}{M}{N}$ construct evaluates $M$, binding its result to $x$ in
$N$.  The calculus supports exception handling over a single \emph{action} $L$
using $\trycatch{L}{M}$, where $M$ is evaluated if $L$ raises an exception, and
labelled recursion using $\annst{l}{M}$, stating that
inside term $M$, a process can recurse to label $l$ using $\continue{l}$.
\emph{Actions} $L$ denote the basic steps of a computation.
The $\effreturn{V}$ construct denotes a value.

\subparagraph{Concurrency and adaptation constructs.}
The $\new{\actorcls}$ construct spawns a new actor of class $\actorcls$ and
returns its PID. The $\self$ construct returns the current actor's PID.
An actor can replace the behaviour of
itself or another actor $V$ using $\replace{V}{\behaviour}$. An actor can
\emph{discover} other actors following a session type $\loctya$ using the
$\newdiscover{\loctya}$ construct, which returns the PID of the discovered
actor.

\subparagraph{Session communication constructs.}
An actor can connect to an actor $W$ playing role $\prole$ using
$\newconn{\ell}{V}{W}{\prole}$, sending a message with label $\ell$ and payload
$V$.
An actor can accept a connection from another actor playing role
$\prole$ using $\newaccept{\prole}{\msg{\ell_i}{x_i} \mapsto M_i}_i$, which
allows an actor to receive a choice of messages; given a message with label
$\ell_j$, the payload is bound to $x_j$ in the continuation $N_j$.
Once connected, an actor can communicate using the $\calcwd{send}$ and
$\calcwd{receive}$ constructs. An actor can disconnect from
$\prole$ using $\newdisconn{\prole}$, and await the disconnection of
$\prole$ using $\newwait{\prole}$.

\subparagraph{Types.}
Types, ranged over by $\tya, \tyb$, include the unit type $\one$ and process IDs
$\pidty{\loctya}$; the parameter $\loctya$ refers to the statically-known
initial session type of the actor (i.e., the session type declared in the
$\calcwd{follows}$ clause of a definition).
Unlike in channel-based session-typed systems, process IDs themselves need not
be linear: any number of actors can have a \emph{reference} to another actor,
but each actor may only be in a single session at a time. PIDs can be passed as
payloads in session communications.

\subparagraph{Session types.}

Session types are ranged over by $\loctya, \loctyb, \loctyc$ and follow the
formulation of~\citet{Hu2017}. A session type can be a choice of actions,
written $\sumact{i \in I}{\locact\then\loctya}$, a recursive session type
$\recty{X}{\loctya}$ binding recursion variable $X$ in continuation $\loctya$,
a recursion variable $X$, a disconnection action $\localtydisconn{\prole}$,
or the finished session $\localend$.
The syntax of session types is more liberal than traditional
`directed' presentations in order to allow output-directed choices to send or
connect to different roles.

Session actions
$\locact$ involve sending ($!$), receiving ($?$), connecting ($!!$), or
accepting ($??$) a message $\ell(A)$ with label $\ell$ and type $A$;
or awaiting another participant's disconnection
($\waitact$).
As well as disallowing self-communication, following~\citet{Hu2017},
we require the following syntactic restrictions on session types:

\begin{definition}[Syntactic validity]
  A choice type $S = \sumact{i \in I}{\locact_i \then \loctya_i}$ is \emph{syntactically valid}
  if:
  \begin{enumerate}
    \item it is an output choice, i.e., each $\locact_i$ is a send or connection
      action; or
    \item it is a \emph{directed} input choice, i.e., $S =\sumact{i \in I}{\localtyrecv{\prole}{\ell_i}{A_i} {.} \loctya_i}$
      or
      $S = \sumact{i \in I}{\localtyaccept{\prole}{\ell_i}{A_i} {.} \loctya_i}$; or
    \item the choice consists of single wait action $\localtywait{\prole} \then
      \loctya$.
  \end{enumerate}
\end{definition}
In the remainder of the paper, we assume that all session types are
syntactically valid.

\subparagraph{Session correlation.}
The most general form of explicit connection actions allows a participant to
leave and re-join a session, or accept connections from multiple different
participants. Such generality comes at a cost, since care must be taken to
ensure that the \emph{same} participant plays the role throughout the session.

To address this \emph{session correlation} issue,~\citet{Hu2017} propose two
solutions: either augment
global types with type-level assertions and check conformance dynamically, or adopt
a lightweight syntactic restriction which requires that
each local type must contain at most a single accept action as its top-level
construct. We opt for the latter, enforcing the constraint as part of our safety
property (\secref{sec:formalism:preservation}), and by requiring that
$\localtydisconn{\prole}$ does not have a continuation. (Note that the behaviour
will repeat, so $\prole$ will be able to accept again after disconnecting).
As~\citet{Hu2017} show, this design still supports the most common use cases of
explicit connection actions.

\subparagraph{Global types.}
Traditional MPST works~\cite{Honda:2008:MAS:1328897.1328472, CoppoDYP16}
use \emph{global types} to describe the interactions between participants at a
global level, which are then projected into \emph{local types}; projectability
ensures safety and deadlock-freedom.

Since we are using explicit connection actions, traditional approaches are
insufficiently flexible as they do not account for certain roles being present
in certain branches but not others. Following~\citet{ScalasDHY17} and
subsequently non-classical MPSTs~\cite{ScalasY19}, we instead formulate our
typing rules and safety properties using collections of local types.

It is, however, still convenient to write a global type and have local types
computed programatically. Global types are defined as follows:

{\footnotesize
\[
\begin{array}{lrcl}
  \text{Global Actions} & \globact & ::= & \globalsend{\prole}{\qrole}{\ell}{\tya}
    \midspace
    \globalconn{\prole}{\qrole}{\ell}{\tya}
    \midspace
    \globaldisconn{\prole}{\qrole} \\
  \text{Global Types} & G & ::= & \sumact{i \in I}{\globact_i \then G_i} \midspace \recty{X}{G} \midspace X \midspace \globalend
\end{array}
\]}
Global actions $\globact$ describe interactions between participants:
$\globalsend{\prole}{\qrole}{\ell}{\tya}$ states that role $\prole$ sends a
message with label $\ell$ and payload type $\tya$ to $\qrole$. Similarly,
$\globalconn{\prole}{\qrole}{\ell}{\tya}$ states that $\prole$ connects to
$\qrole$ by sending a message with label $\ell$ and payload type $\tya$. The
disconnection action $\globaldisconn{\prole}{\qrole}$ states that role $\prole$
disconnects from role $\qrole$.

We can write the \texttt{OnlineStore} example from~\autoref{sec:session-types}
as follows:

{\footnotesize
\[
  \bl
  \globalconn{\role{Customer}}{Store}{\var{login}}{\tylit{String}} \then
  \mu \mkwd{Browse} \then \\
  \quad
  {
    \bl
    \quad
    {
      \bl
      \globalsend{\role{Customer}}{\role{Store}}{\var{item}}{\tylit{String}}
        \then \globalsend{\role{Store}}{\role{Customer}}{\var{price}}{\tylit{Int}}
        \then \mkwd{Browse}
      \el
    }
    \\
    +
    \\
    \quad
    {
      \bl
      \globalsend{\role{Customer}}{\role{Store}}{\var{address}}{\tylit{String}}
        \then \globalconn{\role{Store}}{\role{Courier}}{\var{deliver}}{\tylit{String}}
        \then \\
      \globalsend{\role{Courier}}{\role{Store}}{\var{ref}}{\tylit{Int}}
        \then \globaldisconn{\role{Courier}}{\role{Store}} \then \globalsend{\role{Store}}{\role{Customer}}{\var{ref}}{\tylit{Int}}
        \then \\
      \globaldisconn{\role{Store}}{\role{Customer}} \then \globalend
      \el
    }
    \\
    +
    \\
    \quad
    {
      \bl
      \globalsend{\role{Customer}}{\role{Store}}{\var{quit}}{\one} \then
      \globaldisconn{\role{Store}}{\role{Customer}} \then \globalend
      \el
    }
    \el
  }
  \el
\]}Although projectability in our setting does not necessarily guarantee safety and
deadlock-freedom, we show a projection algorithm, adapted from that
of~\citet{Hu2017}, in Appendix~\ref{appendix:projection}. The resulting local
types can then be checked for safety~\secrefp{sec:formalism:preservation}.

\subparagraph{Protocols and Programs.}
Terms do not live in isolation; they refer to a set of \emph{protocols}, and
evaluate in the context of an actor.
A \emph{protocol} maps role names to local session types.

\begin{definition}[Protocol]
  A \emph{protocol} is a set $\{ \prole_i : \loctya_i \}_i$ mapping role names to
  session types.
\end{definition}

As an example, consider the protocol for the online shop example:

{\footnotesize
\[
  \left\{
    \bl
      \role{Customer} : \localtyconn{\role{Store}}{\var{login}}{\tylit{String}}
      \then \mu \mkwd{Browse} \then \\
      {
        \bl
        \qquad
          \localtysend{\role{Store}}{\var{item}}{\tylit{String}} \then
          \localtyrecv{\role{Store}}{\var{price}}{\tylit{Int}} \then
          \mkwd{Browse} \\
        \quad +
        \quad
          \localtysend{\role{Store}}{\var{address}}{\tylit{String}} \then
          \localtyrecv{\role{Store}}{\var{ref}}{\tylit{Int}} \then
          \localtywait{\role{Store}} \then \localend \\
        \quad +
        \quad
          \localtysend{\role{Store}}{\var{quit}}{\one} \then
          \localtywait{\role{Store}} \then \localend,
          \vspace{0.5em}
        \el
      }
\\
\role{Store} :
        \localtyaccept{\role{Customer}}{\var{login}}{\mkwd{String}} \then
        \mu \mkwd{Browse} \then \\
        {
          \bl
          \qquad
          \localtyrecv{\role{Customer}}{\var{item}}{\mkwd{String}} \then
          \localtysend{\role{Customer}}{\var{price}}{\mkwd{Int}} \then
          \mkwd{Browse} \\
\quad + \quad
          \localtyrecv{\role{Customer}}{\var{address}}{\mkwd{String}} \then
          \localtyconn{\role{Courier}}{\var{deliver}}{\mkwd{String}} \then
          \localtyrecv{\role{Courier}}{\var{ref}}{\mkwd{Int}} \then \\
          \qquad
          \localtywait{\role{Courier}} \then
          \localtysend{\role{Customer}}{\var{ref}}{\mkwd{Int}} \then
          \localtydisconn{\role{Customer}} \\
\quad + \quad
          \localtyrecv{\role{Customer}}{\var{quit}}{\one} \then
          \localtydisconn{\role{Customer}},
          \vspace{0.5em}
          \el
      }
\\
\role{Courier} :
      \localtyaccept{\role{Store}}{\var{deliver}}{\tylit{String}} \then
      \localtysend{\role{Store}}{\var{ref}}{\tylit{Int}} \then
      \localtydisconn{\role{Store}}
    \el
  \right\}
\]
}

We can now consider an implementation of a $\role{Store}$ actor, which uses
discovery to find a courier.  We write $\simplerecv{\ell}{x}{\prole}; M$ and
$\simpleaccept{\ell}{x}{\prole}; M$ as
syntactic sugar for
$\newrecv{\prole}{ \msg{\ell}{x} \mapsto M}$ and
$\newaccept{\prole}{ \msg{\ell}{x} \mapsto M}$ respectively, and write
$M; N$ as syntactic sugar for $\efflet{x}{M}{N}$ for a fresh variable $x$.
We assume the existence of a function \mkwd{lookupPrice}, and define
$\mkwd{CourierType}$ as
$
\localtyaccept{\role{Store}}{\var{deliver}}{\tylit{String}} \then
      \localtysend{\role{Store}}{\var{ref}}{\tylit{Int}} \then
      \localtydisconn{\role{Store}}
$.

{\footnotesize
\[
  \bl
  \calcwd{actor} \: \mkwd{Store} \: \calcwd{follows} \: \ty{\role{Store}} \: \{ \\
  {\bl
  \simpleaccept{\var{login}}{\var{credentials}}{\role{Customer}}; \\
  \annstone{\mkwd{Browse}} \\
  \quad
  {
    \bl
    \newrecvone{\role{Customer}} \{ \\
      \quad
        {
          \bl
            \msg{\var{item}}{\var{name}} \mapsto \\
            \quad
            {
              \bl
            \newsend{\var{price}}{\mkwd{lookupPrice}(\var{name})}{\role{Customer}};
            \\
            \continue{\mkwd{Browse}}
              \el
            }
            \\
            \msg{\var{address}}{\var{addr}} \mapsto \\
            \quad
            {
              \bl
              \efflettwo{\var{pid}}{\newdiscover{\mkwd{CourierType}}} \\
              \newconn{\var{deliver}}{\var{addr}}{\var{pid}}{\role{Courier}}; \\
              \simplerecv{\var{ref}}{\var{r}}{\role{Courier}}; \\
              \newwait{\role{Courier}}; \\
              \newsend{\var{ref}}{\var{r}}{\role{Customer}}; \\
              \newdisconn{\role{Customer}}
              \el
            } \\
            \msg{\var{quit}}{()} \mapsto \newdisconn{\role{Customer}} \\
          \el
        } \\
      \}
    \el
  }
  \el
  } \\
\}
  \el
\]
}

A \emph{program} consists of actor definitions, protocol definitions, and the
`boot' clause to be run in order to set up initial actor communication.

\begin{definition}[Program]
  An \EnsembleS \emph{program} is a 3-tuple $(\seq{D}, \seq{P}, M)$ of
  a set of definitions, protocols, and an initial term to be evaluated.
\end{definition}

In the context of a program, we write $\ty{\prole}$ to refer to the session type
associated with role $\prole$ as defined by the set of protocols.
Given an actor definition $\actordef{\actorcls}{\loctya}{M}$,
  we define $\sesstype{\actorcls} = \loctya$ and $\bhv{u} = M$.

\subsection{Typing rules}

\begin{figure}[t]
{\footnotesize
  \begin{minipage}[t]{0.26\textwidth}
  \headersig{Definition typing}{$\dseq{D}$}
  \begin{mathpar}
    \inferrule
    [T-Def]
    {
      \tseq[\loctya]{\cdot}{\loctya}{M}{\tya}{\localend}
    }
    { \dseq{\actordef{\actorcls}{\loctya}{M}} }
  \end{mathpar}
  \end{minipage}
  \hfill
  \begin{minipage}[t]{0.26\textwidth}
  \headersig{Value typing}{$\vseq{\Gamma}{V}{\tya}$}
    \begin{mathpar}
      \inferrule
      [T-Var]
      { x : A \in \Gamma }
      { \vseq{\Gamma}{x}{A} }

      \inferrule
      [T-Unit]
      { }
      { \vseq{\Gamma}{()}{\one} }
    \end{mathpar}
  \end{minipage}
  \hfill
  \begin{minipage}[t]{0.42\textwidth}
\headersig{Behaviour typing}{$\bseq[\loctya]{\Gamma}{\behaviour}$}
    \begin{mathpar}
    \inferrule
    [T-Stop]
    { }
    { \bseq[\loctya]{\Gamma}{\bstop} }

    \inferrule
    [T-Body]
    { \tseq[\loctya]{\Gamma}{\loctya}{M}{A}{\localend} }
    { \bseq[\loctya]{\Gamma}{M} }
  \end{mathpar}
  \end{minipage}

  \headersig{Typing rules for computations}{$\tseq{\Gamma}{\loctya}{M}{A}{\loctya'}$}

\subheader{Functional Rules}
\begin{mathpar}
  \inferrule
    [T-Let]
    {
      \tseq{\Gamma}{\loctya}{M}{\tya}{\loctya'} \\
      \tseq{\Gamma, x : \tya}{\loctya'}{N}{\tyb}{\loctya''}
    }
    { \tseq{\Gamma}{\loctya}{\efflet{x}{M}{N}}{B}{\loctya''} }

  \inferrule
    [T-Return]
    { \vseq{\Gamma}{V}{A} }
    { \tseq{\Gamma}{\loctya}{\effreturn{V}}{\tya}{\loctya} }

  \inferrule
    [T-Rec]
    { \tseq{\Gamma, l : \loctya}{\loctya}{M}{\tya}{\loctya'} }
    {
      \tseq{\Gamma}{\loctya}{\annst{l}{M}}{\tya}{\loctya'}
    }

  \inferrule
    [T-Continue]
    { }
    { \tseq{\Gamma, l : \loctya}{\loctya}{\continue{l}}{\tya}{\loctya'} }
\end{mathpar}

\subheader{Actor / Adaptation Rules}
\begin{mathpar}
  \inferrule
  [T-New]
  { \sesstype{u} = \loctyc  }
  { \tseq{\Gamma}{\loctya}{\new{u}}{\pidty{\loctyc}}{\loctya} }

  \inferrule
  [T-Self]
  { }
  { \tseq{\Gamma}{\loctya}{\self}{\pidty{\loctyb}}{\loctya}  }

  \inferrule
  [T-Discover]
  { }
  {  \tseq{\Gamma}{\loctya}{\newdiscover{\loctyc}}{\pidty{\loctyc}}{\loctya} }

  \inferrule
  [T-Replace]
  { \vseq{\Gamma}{V}{\pidty{\loctyc}} \\ \bseq[\loctyc]{\Gamma}{\behaviour}  }
  { \tseq{\Gamma}{\loctya}{\replace{V}{\behaviour}}{\one}{\loctya} }
\end{mathpar}
}

\caption{Typing rules (1)}
\label{fig:term-typing-1}
\end{figure}

Figures~\ref{fig:term-typing-1} and~\ref{fig:term-typing-2} show the typing
rules for \EnsembleS. Value typing, with judgement $\vseq{\Gamma}{V}{A}$, states
that under environment $\Gamma$, value $V$ has type $A$.
Judgement $\dseq{D}$ states that an actor definition $\actordef{u}{\loctya}{M}$
is well-typed if its body is typable under, and fully consumes, its statically-defined session type
$\loctya$.
The behaviour typing judgement $\bseq[\loctya]{\Gamma}{\behaviour}$ states that
given static session type $\loctya$, behaviour $\behaviour$ is well-typed under
$\Gamma$. Specifically, $\bstop$ is always well-typed, and $M$ is well-typed if
it is typable under and fully consumes $\loctya$.

\subsubsection{Term typing.}\label{sec:formalism:term-typing}
The typing judgement for terms
$\tseq[\loctyb]{\Gamma}{\loctya}{M}{A}{\loctya'}$ reads ``in an actor following $\loctyb$, under typing environment $\Gamma$ and with current session
type $\loctya$, term $M$ has type $A$ and updates the session type to
$\loctya'$''. Note that the term typing judgement, reminiscent of parameterised
monads~\cite{A09:parameterised-monads}, contains a session precondition
$\loctya$ and may perform
some session communication actions to arrive at postcondition $\loctya'$.

\subparagraph{Functional rules.}
Rule \textsc{T-Let} is a sequencing operation: given a construct
$\efflet{x}{M}{N}$ where $M$ has pre-condition $\loctya$ and post-condition
$\loctya'$, and where $N$ has pre-condition $\loctya'$ and post-condition
$\loctya''$, the overall construct has pre-condition $\loctya$ and
post-condition $\loctya''$.

Following Kouzapas \emph{et al.}~\cite{Kouzapas:2016:TPM:2967973.2968595}, we formalise recursion through
annotated
expressions: term $\annst{l}{M}$ states that $M$ is an expression which can loop to
$l$ by evaluating $\continue{l}$.
We take an equi-recursive view of session types, identifying recursive sessions
with their unfolding ($\mu X. S = S \{ \mu X . S / X\}$), and assume that recursion is guarded.
Rule \textsc{T-Rec}
extends the typing environment with a recursion label defined at the current
session type.
Rule \textsc{T-Continue} ensures that the pre-condition must match the label
stored in the environment, but has arbitrary type and any post-condition since
the return type and post-condition depend on the enclosing loop's base case.

\subparagraph{Actor and adaptation rules.}
Rule \textsc{T-New} states that creating an actor of
class $u$ returns a PID parameterised by the session type declared in the class
of $u$. Rule \textsc{T-Self} retrieves a PID for the current actor,
parameterised by the statically-defined session type of the local actor (i.e.,
the $\loctyb$ in the judgement $\tseq[\loctyb]{\Gamma}{\loctya}{M}{A}{\loctya'}$).
Rule \textsc{T-Discover} states $\newdiscover{\loctyc}$ returns a PID of type
$\pidty{\loctyc}$. Finally, given a behaviour $\behaviour$ typable under a
static session type $\loctyc$, and a process ID with the matching static type
$\pidty{\loctyc}$, \textsc{T-Replace} allows replacement, and returns the unit
type.

\begin{figure}[t]

{\footnotesize
\subheader{Exception handling rules}
\vspace{-1em}
\begin{mathpar}
  \inferrule
  [T-Raise]
  { }
  { \tseq{\Gamma}{\loctya}{\raiseexn}{\tya}{\loctya'} }

  \inferrule
  [T-Try]
  {
    \tseq{\Gamma}{\loctya}{L}{A}{\loctya'} \\
    \tseq{\Gamma}{\loctya}{M}{A}{\loctya'}
  }
  { \tseq{\Gamma}{\loctya}{\trycatch{L}{M}}{A}{\loctya'} }
\end{mathpar}

\subheader{Session communication rules}
\begin{mathpar}
  \inferrule
  [T-Conn]
  {
    \localtyconn{\prole_j}{\ell_j}{\tya_j} \in \{\locact_i\}_{i \in I} \\
    \vseq{\Gamma}{V}{\tya_j} \\
    \vseq{\Gamma}{W}{\pidty{\loctyb}} \\
    \loctyb = \ty{\prole_j}
  }
  { \tseq{\Gamma}
      {\sumact{i \in I}{\locact_i \then \loctya_i}}
      {\newconn{\ell_j}{V}{W}{\prole_j}}
      {\one}
      {\loctya'_j}
  }

  \inferrule
  [T-Send]
  { \localtysend{\prole_j}{\ell_j}{\tya_j} \in \{ \locact_i \}_{i \in I} \\
    \vseq{\Gamma}{V}{\tya_j} }
  { \tseq{\Gamma}{\sumact{i \in
  I}{\locact_i \then \loctya_i}}{\newsend{\ell_j}{V}{\prole_j}}{\one}{\locty'_j}   }

  \inferrule
  [T-Accept]
  {
    (\tseq{\Gamma, x_i : \tyb_i}{\locty_i}{M_i}{\tya}{S})_{i \in I}
  }
  { \tseq{\Gamma}
         {\sumact{i \in I}{\localtyaccept{\qrole}{\ell_i}{\tyb_i} \then \locty_i}}
         {\newaccept{\qrole}{\ell_i(x_i) \mapsto M_i}_{i \in I}}
         {\tya}
         {\locty}
  }

  \inferrule
  [T-Recv]
  {
    (\tseq{\Gamma, x_i : \tyb_i}{\locty_i}{M_i}{A}{S})_{i \in I}
  }
  { \tseq{\Gamma}
         {\sumact{i \in I}{\localtyrecv{\qrole}{\ell_i}{\tyb_i} \then \locty_i}}
         {\newrecv{\qrole}{\ell_i(x_i) \mapsto M_i}_{i \in I}}
         {A}
         {\locty}
  }

  \inferrule
  [T-Wait]
  { }
  { \tseq{\Gamma}
         {\localtywait{\qrole} \then \loctya}
         {\newwait{\qrole}}
         {\one}
         {\locty}
  }

  \inferrule
  [T-Disconn]
  { }
  { \tseq{\Gamma}
    {\localtydisconn{\qrole}}
         {\newdisconn{\qrole}}
         {\one}
         {\localend} }
\end{mathpar}
}

  \caption{Typing rules (2)}
  \label{fig:term-typing-2}
\end{figure}

\subparagraph{Exception handling rules.}
Figure~\ref{fig:term-typing-2} shows the rules for exception handling and
session communication. \textsc{T-Raise} denotes raising an exception; since
it does not return, it can have an arbitrary return type and postcondition. Rule
\textsc{T-Try} types an exception handler $\trycatch{L}{M}$ which acts over a
single action $L$. If $L$ raises an exception, then $M$ is evaluated instead.
Since $L$ only scopes over a single action, the $\calcwd{try}$ and $\calcwd{catch}$
clauses have the same pre- and post-conditions to allow the action to be retried
if necessary.

\begin{remark}
    Following~\citet{MostrousV18}, our $\trycatch{L}{M}$ construct scopes over a
    \emph{single} action and is discarded afterwards.
We opt for this simple approach since in our setting exceptions are a means
    to an end, but (at the coast of a more involved type system) we could
    potentially scope over multiple actions as long as the handler is compatible
    with all potential exit conditions~\cite{Gordon20}. We leave a
    thorough exploration to future work.
\end{remark}

\subparagraph{Session communication rules.}
Rule \textsc{T-Conn} types a term $\newconn{\ell_j}{V}{W}{\prole_j}$.
Given the precondition is a choice type
containing a branch $\localtyconn{\prole}{\ell_j}{\tya_j} \then \loctya'_j$,
and the remote actor reference is $W$ of type $\pidty{S}$,
the rule ensures that $S$ is compatible with the type of $\prole_j$,
and ensures that the label and payload are compatible with the session type.
The session type is then advanced to $\loctya'_j$.
Rule \textsc{T-Send} follows the same pattern.

Given a session type $\sumact{i \in I}{\localtyaccept{\prole}{x_i}{A_i}} \then
S_i$, rule \textsc{T-Accept} types term $\newaccept{\prole}{\msg{\ell_i}{x_i} \mapsto M_i}_{i
\in I}$, enabling an actor to accept connections with messages
$\ell_i$, binding the payload $x_i$ in each continuation $M_i$.
Like $\calcwd{case}$ expressions in functional languages, each continuation must
be typable under an environment extended with $x_i : A_i$, under session type
$\loctya_i$, and each branch must have same result type and postcondition.  Rule
\textsc{T-Recv} is similar.

Rule \textsc{T-Wait} handles waiting for a participant $\prole$ to disconnect
from a session, requiring a pre-condition of $\localtywait{\prole} \then \loctya$,
returning the unit type and advancing the session type to $\loctya$.
Rule \textsc{T-Disconnect} is similar and advances the session type to $\localend$.

\subsection{Operational semantics}

We describe the semantics of \EnsembleS via a deterministic reduction relation
on terms, and a nondeterministic reduction relation on configurations.

\begin{figure}[t]
  {\footnotesize \header{Runtime syntax} \vspace{-2em}}
  {\footnotesize
    \[
  \begin{array}{lccl}
\text{Names} & n & ::= & a \midspace s \vspace{0.3em} \\
\text{Configurations} & \config{C}, \config{D}, \config{E} & ::= & (\nu n)
  \config{C} \midspace \config{C} \parallel \config{D} \midspace
  \eactor{a}{M}{\connstate}{\behaviour} \midspace \zap{\sessindextwo{s}{\prole}}
  \midspace \confzero
  \\
    \text{Connection state} & \connstate & ::= & \bot \midspace
    \newconnsess{s}{\prole}{\rolesetq} \vspace{0.3em} \\
  \text{Runtime environments} \!\! & \rtenv & ::= & \cdot \midspace \rtenv, a : \loctya \midspace
  \rtenv, \sessindexroles{s}{\prole}{\rolesetq}{\loctya} \vspace{0.3em} \\
  \text{Evaluation contexts} & E & ::= & F \midspace \efflet{x}{E}{M} \\
  \text{Top-level contexts} & F & ::= & [~] \midspace \trycatch{[~]}{M} \\
  \text{Pure contexts} & \ep & ::= & [~] \midspace \efflet{x}{\ep}{M} \\
  \end{array}
\]
}

{\footnotesize
  \headersig{Term reduction}{$M \teval N$}
  \[
    \begin{array}{lrcl}
      \textsc{E-Let} & \efflet{x}{\effreturn{V}}{M} & \teval & M \{ V / x \} \\
      \textsc{E-TryReturn} & \trycatch{\effreturn{V}}{M} & \teval & \effreturn{V} \\
      \textsc{E-TryRaise} & \trycatch{\raiseexn}{M} & \teval
                          & M \\
      \textsc{E-Rec} & \annst{l}{M} & \teval & M \{ \annst{l}{M} / \continue{l} \} \\
      \textsc{E-LiftM} & E[M] & \teval & E[N] \quad \text{if } M \teval N
    \end{array}
  \]
}
  {\footnotesize
  \headersig{Configuration reduction (1)}{$\config{C} \ceval \config{D}$}
  \subheader{Actor / adaptation rules}
  \begin{mathpar}
    \inferrule
    [E-Loop]
    { }
    { \eactor{a}{\effreturn{V}}{\bot}{M} \ceval \eactor{a}{M}{\bot}{M} }

    \inferrule
    [E-New]
    { b \text{ is fresh} \\ \bhv{u} = M }
    {
      \eactor{a}{E[\new{u}]}{\connstate}{\behaviour} \ceval \\\\
      (\nu b) (\eactor{a}{E[\effreturn{b}]}{\connstate}{\behaviour} \parallel
     \eactor{b}{M}{\bot}{M})
    }

    \inferrule
    [E-Replace]
    { }
    { \eactor{a}{E[\replace{b}{\behaviour'}]}{\connstate_1}{\behaviour_1} \parallel
      \eactor{b}{N}{\connstate_2}{\behaviour_2} \ceval \\\\
      \eactor{a}{E[\effreturn{()}]}{\connstate_1}{\behaviour_1} \parallel
      \eactor{b}{N}{\connstate_2}{\behaviour'} }

    \inferrule
    [E-ReplaceSelf]
    { }
    { \eactor{a}{E[\replace{a}{\behaviour'}]}{\connstate}{\behaviour}
      \ceval \\\\
      \eactor{a}{E[\effreturn{()}]}{\connstate}{\behaviour'}}

    \inferrule
    [E-Discover]
    { \sesstype{b} = \loctya \\\\
      \neg ((N = \effreturn{V} \vee N = \raiseexn) \wedge \behaviour_2 = \bstop)  }
    { \eactor{a}{E[\newdiscover{\loctya}]}{\connstate_1}{\behaviour_1} \parallel
        \eactor{b}{E'[N]}{\connstate_2}{\behaviour_2} \ceval\\\\
      \eactor{a}{E[\effreturn{b}]}{\connstate_1}{\behaviour_1} \parallel
      \eactor{b}{E'[N]}{\connstate_2}{\behaviour_2}
    }

    \inferrule
    [E-Self]
    { }
    { \eactor{a}{E[\self]}{\connstate}{\behaviour}
        \ceval
      \eactor{a}{E[\effreturn{a}]}{\connstate}{\behaviour} }
  \end{mathpar}
  }

  \caption{Operational semantics (1)}
  \label{fig:config-reduction-1}
  \end{figure}

\subsubsection{Runtime syntax}

Figure~\ref{fig:config-reduction-1} shows the runtime syntax and the first part
of the reduction rules for \EnsembleS.

Whereas static syntax and typing rules describe code that a user would write,
runtime syntax arises during evaluation. We introduce two types of runtime name:
$s$ ranges over \emph{session names}, which are created when a process initiates
a session, and $a$ ranges over \emph{actor names}, which uniquely identify each
actor once it has been spawned by $\calcwd{new}$.

\subparagraph{Configurations.}
Configurations, ranged over by $\config{C}, \config{D}, \config{E}$, represent
the concurrent fragment of the language. Like in the
$\pi$-calculus~\cite{Milner99:picalc},
name restrictions $(\nu n)\config{C}$ bind name $n$ in $\config{C}$,
$\config{C} \parallel \config{D}$ denotes $\config{C}$ and $\config{D}$ running
in parallel, and the $\confzero$ configuration denotes the inactive process.

Actors are represented at runtime as a 4-tuple
$\eactor{a}{M}{\connstate}{\behaviour}$, where $a$ is the actor's runtime name;
$M$ is the term currently evaluating; $\connstate$ is the connection state; and
$\behaviour$ is the actor's current behaviour.
A connection state is either \emph{disconnected}, written $\bot$, or playing role
$\prole$ in session $s$ and connected to roles $\rolesetq$, written
$\newconnsess{s}{\prole}{\rolesetq}$.

Inspired by~\citet{MostrousV18} and~\citet{FowlerLMD19:stwt}, a
\emph{zapper thread} $\zap{\sessindextwo{s}{\prole}}$ indicates that participant
$\prole$ in session $s$ cannot be used for future communications, for example
due to the actor playing the role crashing due to an unhandled exception.

\subparagraph{Runtime typing environments.}
Whereas $\Gamma$ is an unrestricted typing environment used for typing values
and configurations, we introduce $\rtenv$ as a linear runtime environment.
Runtime environments can contain entries of type $a : S$, stating that actor $a$
has \emph{statically-defined} session type $S$, and entries of type
$\sessindexroles{s}{\prole}{\rolesetq}{S}$, stating that in session $s$, role
$\prole$ is connected to roles $\rolesetq$ and \emph{currently} has session type
$S$.

\subparagraph{Evaluation contexts.}
Due to our fine-grain call-by-value presentation, evaluation contexts $E$ allow
nesting only in the immediate subterm of a $\calcwd{let}$ expression.
The top-level frame $F$ can either be a hole, or a single, top-level exception
handler. Pure contexts $\ep$ do not include exception handling frames.

To run a program, we place it in an \emph{initial configuration}.

\begin{definition}[Initial configuration]
  An \emph{initial configuration} for an \EnsembleS program with boot clause $M$
  is of the form $(\nu a)(\eactor{a}{M}{\bot}{\bstop})$.
\end{definition}

\subsubsection{Reduction rules}

  \begin{figure}[t]
  \footnotesize{
  \headersig{Configuration reduction (2)}{$\config{C} \ceval \config{D}$}
  \subheader{Session reduction rules}
  \begin{mathpar}
\inferrule
    [E-ConnInit]
    { j \in I }
    { \eactor{a}{E[F[\newconn{\ell_j}{V}{b}{\qrole}]]}{\bot}{\behaviour_1} \parallel
      \eactor{b}{E'[F'[\newaccept{\prole}{\ell_i(x_i) \mapsto M_i}_{i \in
      I}]]}{\bot}{\behaviour_2} \ceval \\
(\nu s)
      (\eactor{a}{E[\effreturn{()}]}{\newconnsess{s}{\prole}{\qrole}}{\behaviour_1} \parallel
      \eactor{b}{E'[M_j \{ V / x_j \}]}{\newconnsess{s}{\qrole}{\prole}}{\behaviour_2})
    }

\inferrule
    [E-Conn]
    { \qrole \not\in \rolesetr }
    {
      \eactor{a}{E[F[\newconn{\ell_j}{V}{b}{\qrole}]]}{\newconnsess{s}{\prole}{\rolesetr}}{\behaviour_1} \parallel
        \eactor{b}{E'[F'[\newaccept{\prole}{\ell_i(x_i) \mapsto N_i}_{i \in
        I}]]}{\bot}{\behaviour_2} \ceval \\
\eactor{a}{E[\effreturn{()}]}{\newconnsess{s}{\prole}{\rolesetr, \qrole}}{\behaviour_1} \parallel
        \eactor{b}{E'[N_j \{ V / x_j \}]}{\newconnsess{s}{\qrole}{\prole}}{\behaviour_2}
    }

    \inferrule
    [E-ConnFail]
    {
      ((N = \effreturn{V} \vee N = \ep[\raiseexn]) \wedge \behaviour_2 = \bstop) \vee
      \connstate_2 \ne \bot
    }
    {
      \eactor{a}{E[\newconn{\ell_j}{V}{b}{\qrole}]}{\connstate_1}{\behaviour_1} \parallel
        \eactor{b}{N}{\connstate_2}{\behaviour_2} \ceval
\eactor{a}{E[\raiseexn]}{\connstate_1}{\behaviour_1} \parallel
        \eactor{b}{N}{\connstate_2}{\behaviour_2}
    }

    \inferrule
    [E-Disconn]
    { }
    { \eactor{a}{E[F[\newwait{\qrole}]]}{\newconnsess{s}{\prole}{\rolesetr, \qrole}}{\behaviour_1}
      \parallel
      \eactor{b}{E'[F'[\newdisconn{\prole}]]}{\newconnsess{s}{\qrole}{\prole}}{\behaviour_2}
      \ceval \\
      \eactor{a}{E[\effreturn{()}]}{\newconnsess{s}{\prole}{\rolesetr}}{\behaviour_1} \parallel
        \eactor{b}{E'[\effreturn{()}]}{\bot}{\behaviour_2}
    }

    \inferrule
    [E-Comm]
    { j \in I \\ \qrole \in \rolesetr \\ \prole \in \rolesets }
    {
      \eactor{a}{E[F[\newsend{\ell_j}{V}{\qrole}]]}{\newconnsess{s}{\prole}{\rolesetr}}{\behaviour_1} \parallel
        \eactor{b}{E'[F'[\newrecv{\prole}{\ell_i(x_i) \mapsto M_i}_{i \in
        I}]]}{\newconnsess{s}{\qrole}{\rolesets}}{\behaviour_2} \ceval \\
\eactor{a}{E[\effreturn{()}]}{\newconnsess{s}{\prole}{\rolesetr}}{\behaviour_1} \parallel
        \eactor{b}{E'[M_j \{ V / x_j \}]}{\newconnsess{s}{\qrole}{\rolesets}}{\behaviour_2}
    }

    \inferrule
    [E-Complete]
    { }
    {
      (\nu s)(\eactor{a}{\effreturn{V}}{\newconnsess{s}{\prole}{\emptyset}}{\behaviour})
      \ceval
      \eactor{a}{\effreturn{V}}{\bot}{\behaviour}
    }
  \end{mathpar}
}
  \caption{Operational semantics (2)}
  \label{fig:config-reduction-2}
\end{figure}

\begin{figure}[t]
  {\footnotesize
  \headersig{Configuration reduction (3)}{$\config{C} \ceval \config{D}$}
  \subheader{Exception handling rules}
  \begin{mathpar}
    \inferrule
    [E-CommRaise]
    { \subj{M} = \qrole }
    {
      \eactor{a}{E[M]}{\newconnsess{s}{\prole}{\rolesetr}}{\behaviour} \parallel
      \zap{\sessindextwo{s}{\qrole}}
        \ceval \\\\
      \eactor{a}{E[\raiseexn]}{\newconnsess{s}{\prole}{\rolesetr}}{\behaviour}
      \parallel \zap{\sessindextwo{s}{\qrole}}
    }

\inferrule
    [E-FailS]
    { }
    { \eactor{a}{\ep[\raiseexn]}{\newconnsess{s}{\prole}{\rolesetr}}{\behaviour} \ceval \\\\
      \eactor{a}{\raiseexn}{\bot}{\behaviour} \parallel \zap{\sessindextwo{s}{\prole}} }

    \inferrule
    [E-FailLoop]
    { }
    { \eactor{a}{\ep[\raiseexn]}{\bot}{M} \ceval\\\\ \eactor{a}{M}{\bot}{M} }
  \end{mathpar}

  \subheader{Administrative rules}
\begin{mathpar}
  \inferrule
   [E-LiftM]
   { M \teval M' }
   {
     \eactor{a}{E[M]}{\connstate}{\behaviour} \ceval
     \eactor{a}{E[M']}{\connstate}{\behaviour}
   }

  \inferrule
  [E-Equiv]
  { \config{C} \equiv \config{C}' \\
    \config{C}' \ceval \config{D}' \\\\
    \config{D}' \equiv \config{D} }
  { \config{C} \ceval \config{D} }

   \inferrule
   [E-Par]
   { \config{C} \ceval \config{C}' }
   { \config{C} \parallel \config{D} \ceval \config{C}' \parallel \config{D} }

   \inferrule
   [E-Nu]
   { \config{C} \ceval \config{D} }
   { (\nu n) \config{C} \ceval (\nu n) \config{D}}
 \end{mathpar}

 \headersig{Configuration equivalence}{$\config{C} \equiv \config{D}$}
\begin{mathpar}
  \config{C} \parallel \config{D} \equiv \config{D} \parallel \config{C}

  \config{C} \parallel (\config{D} \parallel \config{E}) \equiv (\config{C} \parallel \config{D}) \parallel \config{E}

  (\nu n_1)(\nu n_2)\config{C} \equiv (\nu n_2)(\nu n_1) \config{C}

  \config{C} \parallel (\nu n)\config{D} \equiv (\nu n)(\config{C} \parallel
  \config{D}) \quad \text{if } n \not\in \fn{\config{C}}

  (\nu s)(\zaptwo{s}{\prole_1} \parallel \cdots \parallel \zaptwo{s}{\prole_n}) \parallel
  \config{C} \equiv \config{C}

  \config{C} \parallel \confzero \equiv \config{C}
\end{mathpar}
 }

 \caption{Operational semantics (3)}
  \label{fig:config-reduction-3}
\end{figure}

Term reduction $\teval$ is standard $\beta$-reduction, save for
\textsc{E-TryRaise} which evaluates the failure continuation in the case of an
exception.
We consider four
subcategories of configuration reduction rules: actor and adaptation rules; session
communication rules; exception handling rules; and administrative rules.

\subparagraph{Actor / adaptation rules.}
Given a
fully-evaluated actor, \textsc{E-Loop} runs the term specified by the actor's
behaviour.
Rule \textsc{E-New} allows actor $a$ to spawn a new actor of class $\actorcls$
by creating a fresh runtime actor name $b$ and a new actor process of the form
$\eactor{b}{M}{\bot}{M}$ where $M$ is the behaviour specified by $\actorcls$,
returning the process ID $b$.
Rules \textsc{E-Replace} and \textsc{E-ReplaceSelf} handle replacement by
changing the behaviour of an actor, returning the unit value to the caller.
Rule \textsc{E-Discover} returns the process ID of an actor $b$ if it has the
desired static session type $S$.
Rule \textsc{E-Self} returns the PID of the local actor.

\subparagraph{Session communication rules.}
An actor begins a session by connecting to another actor while disconnected;
such a case is handled by rule \textsc{E-ConnInit}. Suppose we have a
disconnected actor $a$ evaluating a connection statement
$\newconn{\ell_j}{V}{b}{\prole}$, evaluating in parallel with a disconnected
actor $b$ evaluating an accept statement $\newaccept{\prole}{\ell_i(x_i) \mapsto
M_i}_{i \in I}$. Rule \textsc{E-ConnInit} returns the unit value to actor $a$;
creates a fresh session name restriction $s$, sets the connection
state of $a$ to $\newconnsess{s}{\prole}{\qrole}$ and of $b$ to
$\newconnsess{s}{\qrole}{\prole}$;
accepting actor $b$ then evaluates continuation $M_j$ with $V$
substituted for $x_j$.
Since exception handlers only scope over a single communication action, the
top-level frames $F, F'$ in each actor are discarded if the
communication succeeds.
Rule \textsc{E-Conn} handles the case where the connecting actor is already part
of a session and behaves similarly to \textsc{E-ConnInit}, without creating a
new session name restriction.
A connection can fail if an actor attempts to connect to another actor which is
terminated or is
already involved in a session; in these cases, \textsc{E-ConnFail} raises an
exception in the connecting actor.

Rule \textsc{E-Disconn} handles the case where an actor $b$ leaves a session,
synchronising with an actor $a$. In this case, the unit value is returned to
both callers, and the connection state of $b$ is set to $\bot$.
Rule \textsc{E-Comm} handles session communication when two participants are
already connected to the same session, and is similar to \textsc{E-Conn}.
Rule \textsc{E-Complete} garbage collects a session after it has completed and
sets the initiator's connection state to $\bot$.

\subparagraph{Exception handling rules.}
Exception handling rules allow safe session communication in
the presence of exceptions. Rule \textsc{E-CommRaise} states that if an actor is
attempting to communicate with a role no longer present due to an
exception, then an exception should be raised. We write $\subj{E[M]} = \prole$ if
$M \in \{ \newsend{\ell}{V}{\prole}, \newrecv{\prole}{\ell_i(x_i)\mapsto N_i}_i,
  \newwait{\prole}, \newdisconn{\prole}\}$.
Rule \textsc{E-FailS} states that if a connected actor encounters an unhandled
exception, then a zapper thread will be generated for the current role, the
actor will become disconnected, and the current evaluation context will be
discarded. Rule \textsc{E-FailLoop} restarts an actor encountering an
unhandled exception.

\subparagraph{Administrative rules.}
The remaining rules are administrative: \textsc{E-LiftM} allows term reduction
inside an actor; \textsc{E-Equiv} allows reduction modulo structural congruence;
\textsc{E-Par} allows reduction under parallel composition; and \textsc{E-Nu}
allows reduction under name restrictions.

\subparagraph{Configuration equivalence.}
Reduction includes configuration equivalence $\equiv$, defined as the smallest
congruence relation satisfying the axioms in
Figure~\ref{fig:config-reduction-3}.
The equivalence rules extend the usual $\pi$-calculus structural congruence
rules with a `garbage collection' equivalence, which allows us to
discard a session where all participants have exited due to an error.

\subsection{Metatheory}
We now turn our attention to showing that session typing allows runtime
adaptation and discovery while precluding communication mismatches and
deadlocks.

\subsubsection{Runtime typing}
To reason about the metatheory, we introduce typing rules for
configurations~(\autoref{fig:rt-typing}): the judgement
$\cseq{\Gamma}{\rtenv}{\config{C}}$ states that configuration $\config{C}$ is
well-typed under term typing environment $\Gamma$ and runtime typing environment
$\rtenv$.

\begin{figure}[t]
{\footnotesize
    \headersigarg
    {Runtime Typing Rules}
    {\framebox{$\vphantom{\rtenv;\config{C}}\vseq{\Gamma}{V}{A}$}\:\framebox{$\cseq{\Gamma}{\rtenv}{\config{C}}$}}
    \vspace{-1em}
  \begin{mathpar}
    \inferrule
    [T-Pid]
    { \cseq{\Gamma, a : \pidty{\loctya}}{\rtenv, a : \loctya}{\config{C}} }
    { \cseq{\Gamma}{\rtenv}{(\nu a) \config{C}} }

    \inferrule
    [T-Session]
    {
      \rtenv' = \{\sessindexroles{s}{\prole_i}{\rolesetq_i}{\locty_{\prole_i}}\}_{i \in I} \\
      \proparg{\rtenv'} \\
      s \not\in \rtenv \\
      \cseq{\Gamma}{\rtenv, \rtenv'}{\config{C}} \\\\
      \prop \text{ is a safety property}
    }
    { \cseq{\Gamma}{\rtenv}{(\nu s) \config{C}} }

    \inferrule
    [T-Par]
    { \cseq{\Gamma}{\rtenv_1}{\config{C}} \\ \cseq{\Gamma}{\rtenv_2}{\config{D}} }
    { \cseq{\Gamma}{\rtenv_1, \rtenv_2}{\config{C} \parallel \config{D} } }

    \inferrule
    [T-Zap]
    { }
    { \cseq{\Gamma}{\sessindexroles{s}{\prole}{\rolesetq}{S}}{\zap{\sessindextwo{s}{\prole}}} }

    \inferrule
    [T-Zero]
    { }
    { \cseq{\Gamma}{\cdot}{\confzero} }

    \inferrule
    [T-DisconnectedActor]
    {
\loctyb = \loctya \vee \loctyb = \localend \\
      a : \pidty{\loctya} \in \Gamma \\\\
      \tseq[\loctya]{\Gamma}{\loctyb}{M}{A}{\localend} \\
      \bseq[\loctya]{\Gamma}{\behaviour}
    }
    { \cseq{\Gamma}{a : \loctya}{\eactor{a}{M}{\bot}{\behaviour}} }

    \inferrule
    [T-ConnectedActor]
    {
      a : \pidty{\loctyb} \in \Gamma \\\\
      \tseq{\Gamma}{\loctya}{M}{A}{\localend} \\
      \bseq{\Gamma}{\behaviour}
    }
    { \cseq{\Gamma}{a : \loctyb, \sessindexroles{s}{\prole}{\rolesetq}{\locty}}{
      \eactor{a}{M}{\newconnsess{s}{\prole}{\rolesetq}}{\behaviour}}
    }
  \end{mathpar}
}
  \caption{Runtime typing rules}
  \label{fig:rt-typing}
\end{figure}

Rule \textsc{T-Pid} types actor name restriction $(\nu a)\config{C}$
by adding a PID into the term
environment, and extending the runtime typing environment $a:S$; the linearity
of the runtime typing environment therefore means that the system must contain precisely
one actor with name $a$.

Session name restrictions $(\nu s)\config{C}$ are typed by
\textsc{T-Session}.
We follow the formulation of~\citet{ScalasY19} which types multiparty
sessions using a parametric safety property $\varphi$; we discuss safety
properties in more depth in Section~\ref{sec:formalism:preservation}.
Let $\rtenv'$ be a
runtime typing environment containing only mappings of the form
$\sessindexroles{s}{\prole_i}{\rolesetq_i}{S_i}$. Assuming $\rtenv$ does not contain any mappings
involving session $s$ and $\rtenv'$ satisfies $\varphi$, the rule states that $\config{C}$ is typable under typing environment $\Gamma$ and runtime typing environment $\rtenv, \rtenv'$.
It is sometimes convenient to annotate session $\nu$-binders with their
environment, e.g., $(\nu s : \rtenv') \config{C}$.

Rule \textsc{T-Par} types each subconfiguration of a parallel composition by
splitting the linear runtime environment. Rule \textsc{T-Zap} types a zapper
thread $\zap{\sessindextwo{s}{\prole}}$, assuming the runtime environment
contains an entry $\sessindexroles{s}{\prole}{\rolesetq}{\loctya}$ for any session type
$\loctya$.

Finally, rules \textsc{T-DisconnectedActor} and \textsc{T-ConnectedActor} type
disconnected and connected actor configurations respectively.
Given an actor with name $a$ and static session type $\loctyb$, both rules require that
the typing environment contains $a : \pidty{\loctyb}$ and
runtime environment contains $a : \loctyb$.
Both rules require that the current session type is fully consumed by the
currently-evaluating term and that the actor's behaviour should be typable under
$\loctyb$.
Rule \textsc{T-DisconnectedActor} requires that the currently-evaluating term
must be typable under either $\loctyb$ or $\localend$, whereas to type a
connection state of $\newconnsess{s}{\prole}{\rolesetq}$ and current session type
$\loctya$, \textsc{T-ConnectedActor} requires an entry
$\sessindexroles{s}{\prole}{\rolesetq}{\loctya}$ in the runtime environment.

\subsubsection{Preservation}\label{sec:formalism:preservation}
We now prove that reduction preserves typability and thus that actors only
perform communication actions specified in their session types.  Due to our use
of explicit connection actions, classical MPST approaches are too limited for
our purposes.  Our approach, following that of~\citet{ScalasY19}, is to
introduce a labelled transition system (LTS) on local types, and specify a
generic safety property based around local type reduction. The property can then
refined; in our case, we will later specialise the property in order to prove
progress.

\begin{figure}[t]
{\footnotesize
\header{Labels}
\[
  \begin{array}{lrcl}
    \text{Labels} & \ltslbl & ::= & \ltscomm{s}{\prole}{\qrole}{\ell}{A} \midspace
      \ltsconnpair{s}{\prole}{\qrole}{\ell} \midspace
      \ltsdisconnact{s}{\prole}{\qrole} \\
    \text{Synchronisation labels} & \synclbl & ::= &
      \ltspair{s}{\prole}{\qrole}{\ell} \midspace
      \ltsconnpair{s}{\prole}{\qrole}{\ell} \midspace
      \ltsdisconnpair{s}{\prole}{\qrole}
  \end{array}
\]

  \header{Reduction on runtime typing environments}

  \subheadersig{Local Reduction}{$\Delta \annarrow{\ltslbl} \Delta'$}
\begin{mathpar}
  \inferrule
  [ET-Conn]
  { \exists j \in I . \locact_j = \localtyconn{\qrole}{\ell_j}{\tya_j} \\\\
    \ty{\qrole} = \sumact{k \in K}{\localtyaccept{\prole}{\ell_k}{\tyb_k} \then \loctyb_k} \\
    j \in K \\
    \tya_j = \tyb_j
  }
  {
    \sessindexroles{s}{\prole}{\rolesetr}{\sumact{i \in I}{\locact_i \then \loctya_i}}
    \annarrow{\ltsconnpair{s}{\prole}{\qrole}{\ell_j}}
    \sessindexroles{s}{\prole}{\rolesetr, \qrole}{\locty_j},
    \sessindexroles{s}{\qrole}{\prole}{\loctyb_j}
  }

  \inferrule
  [ET-Act]
  { \exists j \in I . \locact_j = \localcomm{\qrole}{\ell_j}{\tya_j} \\
    \qrole \in \rolesetr
  }
  { \sessindexroles{s}{\prole}{\rolesetr}{\sumact{i \in I}{\locact_i \then \locty_i}}
      \annarrow{\ltscomm{s}{\prole}{\qrole}{\ell_j}{\tya_j}}
      \sessindexroles{s}{\prole}{\rolesetr}{\locty_j}
  }

  \inferrule
  [ET-Wait]
  { }
  { \sessindexroles{s}{\prole}{\rolesetr, \qrole}{\localtywait{\qrole} \then S} \annarrow{\ltswait{s}{\prole}{\qrole}}
    \sessindexroles{s}{\prole}{\rolesetr}{S}
  }

  \inferrule
  [ET-Disconn]
  { }
  { \sessindexroles{s}{\prole}{\qrole}{\localtydisconn{\qrole}}
      \annarrow{\ltsdisconn{s}{\prole}{\qrole}} \cdot } \\

  \inferrule
    [ET-Rec]
    { \rtenv, \sessindexroles{s}{\prole}{\rolesetq}{\locty \{ \recty{X}{\locty} / X \}} \annarrow{\ltslbl} \rtenv' }
    { \rtenv, \sessindexroles{s}{\prole}{\rolesetq}{\recty{X}{\locty}} \annarrow{\ltslbl} \rtenv' }

  \inferrule
  [ET-Cong1]
  { \rtenv \annarrow{\ltslbl} \rtenv' }
  { \rtenv, \sessindexroles{s}{\prole}{\rolesetq}{\locty} \annarrow{\ltslbl} \rtenv', \sessindexroles{s}{\prole}{\rolesetq}{\locty} }

  \inferrule
  [ET-Cong2]
  { \rtenv \annarrow{\ltslbl} \rtenv' }
  { \rtenv, a {:} \loctya \annarrow{\ltslbl} \rtenv', a {:} \loctya }

\end{mathpar}

  \subheadersig{Synchronisation}{$\Delta \syncannarrow{\synclbl} \Delta'$}
\begin{mathpar}
  \inferrule
  [ET-ConnSync]
  {
    \rtenv \annarrow{\ltsconnpair{s}{\prole}{\qrole}{\ell}} \rtenv'
  }
  { \rtenv \syncannarrow{\ltsconnpair{s}{\prole}{\qrole}{\ell}} \rtenv' }

  \inferrule
  [ET-Comm]
  { \rtenv_1 \annarrow{\ltscomm[{!}]{s}{\prole}{\qrole}{\ell}{\tya}} \rtenv'_1
    \\
  \rtenv_2 \annarrow{\ltscomm[{?}]{s}{\qrole}{\prole}{\ell}{\tya}} \rtenv'_2 }
  { \rtenv_1, \rtenv_2 \syncannarrow{\ltspair{s}{\prole}{\qrole}{\ell}} \rtenv'_1, \rtenv'_2 }

  \inferrule
  [ET-Disconn]
  { \rtenv_1 \annarrow{\ltswait{s}{\prole}{\qrole}} \rtenv'_1 \\
    \rtenv_2 \annarrow{\ltsdisconn{s}{\qrole}{\prole}} \rtenv'_2 }
  { \rtenv_1, \rtenv_2 \syncannarrow{\ltsdisconnpair{s}{\prole}{\qrole}} \rtenv'_1, \rtenv'_2 }
\end{mathpar}
}
\caption{Labelled transition system for runtime typing environments}
\label{fig:rt-lts-reduction}
\end{figure}

\subparagraph{Reduction on runtime typing environments.}
Figure~\ref{fig:rt-lts-reduction} shows the LTS on runtime typing environments.
There are two judgements: $\rtenv \annarrow{\ltslbl} \rtenv'$, which handles
reduction of individual local types, and a \emph{synchronisation} judgement
$\rtenv \syncannarrow{\synclbl} \rtenv'$.

Rule \textsc{ET-Conn} handles the reduction of role $\prole$,
where the choice session type contains a connection action
$\localtyconn{\qrole}{\ell_j}{A_j}\then\loctya'_j$.
If $\qrole$ has a statically-defined session type $\sumact{k \in
K}{\localtyaccept{\prole}{\ell_k}{B_k} \then \loctyb_k}$
which can accept $\ell_j$ from from $\prole$, and the
payload types match, reduction
advances $\prole$'s session type,
adds $\qrole$ to $\prole$'s connected role set,
and introduces an entry for $\qrole$ into the
environment. The reduction emits a label
$\ltsconnpair{s}{\prole}{\qrole}{\ell_j}$.

Given a role $\prole$ connected to $\qrole$ with a session choice containing a
send or receive action $\localcomm{\qrole}{\ell_j}{A} \then \loctya'_j$,
rule \textsc{ET-Act} will
emit a label $\ltscomm{s}{\prole}{\qrole}{\ell_j}{A_j}$ and advance the session
type of $\prole$.

Rule \textsc{ET-Wait} handles the reduction of $\localtywait{\qrole}\then\loctya$ actions, $\sessindexroles{s}{\prole}{\rolesetr,
\qrole}{\localtywait{\qrole}\then\loctya}$, where $\prole$ waits for $\qrole$ to
disconnect: the reduction emits label $\ltsdisconnact[\waitact]{\prole}{\qrole}$
and removes $\qrole$ from $\prole$'s connected roles. Similarly,
rule \textsc{ET-Disconn} handles disconnection, by emitting label
$\ltsdisconnact[\disconnact]{\prole}{\qrole}$ and removing the entry from the
environment.
\textsc{ET-Rec} handles recursive types, and the \textsc{ET-Cong} rules
handle reduction of sub-environments.

Rule \textsc{ET-ConnSync} states that connection is a synchronisation action,
and rules \textsc{ET-Comm} and \textsc{ET-Disconn} handle synchronisation
between dual actions in sub-environments, emitting synchronisation labels
$\ltspair{s}{\prole}{\qrole}{\ell}$ and $\ltsdisconnpair{s}{\prole}{\qrole}$
respectively. We omit the congruence rules for synchronisation actions.
We say that a runtime environment \emph{reduces}, written $\rtenv \synceval$, if
there exists some $\rtenv'$ such that $\rtenv \synceval \rtenv'$.

\subparagraph{Safety.}
A \emph{safety property} describes a set of invariants on typing environments
which allow us to prove preservation. Since the type system is parametric
in the given safety property, we can tweak the property to permit or rule out
different typing environments satisfying particular behavioural properties;
however, we need only prove type preservation once, using the weakest safety
property. Our safety property is different to the safety property described
by~\citet{ScalasY19} in order to account for explicit connection actions.

{
\raggedright
\begin{definition}[Safety Property]\label{def:safety}
  $\prop$ is a \emph{safety property} of runtime typing contexts $\rtenv$ if:
  \begin{enumerate}
    \item $\proparg{\rtenv, \sessindexroles{s}{\prole}{\rolesetr}{\sumact{i \in I}{\locacta_i \then
            \loctya_i}}, \sessindexroles{s}{\qrole}{\rolesets}{\sumact{j \in J}{
    \localtyrecv{\prole}{\ell_j}{\tyb_j} \then \loctyb_j}}}$ implies that if
      $\localtycomm{\qrole}{{!}}{\ell_k}{\tya_k} \in \{ \locact_i \}_{i \in I}$,
      then $k \in J$, $\qrole \in \rolesetr$, $\prole \in \rolesets$, and $\tya_k = \tyb_k$.
\item $\proparg{\rtenv, \sessindexroles{s}{\prole}{\rolesetr}{\sumact{i \in I}{\locact_i \then
      \locty_i}}}$ implies that
      if $\locact_i = \localtyconn{\qrole}{\ell_j}{\tya_j} \in \{ \locact_i
      \}_{i \in I}$,
      then $\qrole \not\in \rolesetr$,
      $\sessindexrolesnoty{s}{\qrole}{\rolesets} \not\in \dom{\rtenv}$, and
      $\ty{\qrole} = \sumact{k \in K}{\localtyaccept{\prole}{\ell_k}{\tyb_k} \then \loctyb_k}$
      with $j \in K$ and $\tya_j = \tyb_j$.
    \item $\proparg{\rtenv, \sessindexroles{s}{\prole}{\rolesetq}{\recty{X}{S}}}$ implies
      $\proparg{\rtenv, \sessindexroles{s}{\prole}{\rolesetq}{S \{ \recty{X}{S} / X \}}}$
    \item $\proparg{\rtenv}$ and $\rtenv \synceval \rtenv'$ implies $\proparg{\rtenv'}$
  \end{enumerate}

  A runtime typing environment is \emph{safe}, written
  $\safe{\rtenv}$, if $\proparg{\rtenv}$ for a safety property $\prop$.
\end{definition}
}
Clause (1) ensures that communication actions between participants are
compatible: if $\prole$ is sending a message with label $\ell$ and payload type
$\tya$ to $\qrole$, and $\qrole$ is receiving from $\prole$,
then the two roles must be connected,
and $\qrole$ must be able to receive $\ell$ with a matching payload.

Clause (2) states that if $\prole$ is connecting to a role $\qrole$ with label
$\ell$, then
$\qrole$ should not already be involved in the session, and should be able to
accept from $\prole$ on message label $\ell$ with a compatible payload type.
The requirement that $\qrole$ is not already involved in the session rules
out the \emph{correlation} errors described in
Section~\ref{sec:formalism:term-typing}.
Clause (3) handles recursion, and clause (4) requires that
safety is preserved under environment reduction.

\subparagraph{Concretising the safety property.}
In order to deduce that a runtime typing environment $\rtenv$ is safe, we define
$\proparg{\rtenv} = \{ \rtenv' \midspace \rtenv \syncevalstar \rtenv' \}$
and verify that $\prop$ is a safety property by ensuring that it satisfies all
clauses in~\autoref{def:safety}.

\subparagraph{Properties on protocols and programs.}
It is useful to distinguish active and inactive session types, depending on
whether their associated role is currently involved in a session, and identify
the initiator of a session.

\begin{definition}[Active and Inactive Session Types]
A session type $S$ is \emph{inactive}, written $\tyinactive{S}$, if $S =
\localend$ or $S = \sumact{i \in I}{\localtyaccept{\prole}{\ell_i}{\tya_i} \then {\loctya_i}}$. Otherwise, $S$ is \emph{active}, written $\tyactive{S}$.
\end{definition}

\begin{definition}[Initiator, unique initiator]
Given a protocol $\proto$, a role $\prole : \loctya_\prole \in \proto$ is an \emph{initiator} if
$S_\prole = \sumact{i \in I}{ \locact_i \then \loctya_i }$, and each $\locact_i$ is a connection action
$\localtyconn{\qrole}{\ell_i}{A_i}$.
Role $\prole$ is a \emph{unique initiator} of $\proto$ if
$\tyinactive{\loctya_{\qrole}}$ for all
$\qrole \in \proto \setminus \{ \prole : \loctya_{\prole} \} $.
\end{definition}

A protocol is \emph{well-formed} if it is safe and has a unique initiator.

\begin{definition}[Well-formed protocol]
  A protocol $P = \{ \prole_i : \loctya_i \}_{i \in I}$ is \emph{well-formed}
  if it has a unique initiator $\qrole$ of type $\loctya$ and
  $\safe{ \sessindexroles{s}{\qrole}{\emptyset}{\loctya} }$ for any $s$.
\end{definition}

By way of example, the online shopping protocol is well-formed: \role{Customer}
is the protocol's unique initiator, and it is straightforward to verify that
$\safe{ \sessindexroles{s}{\role{Customer}}{\emptyset}{\,\ty{\role{Customer}}} }$.

\begin{definition}[Well-formed program]
  A program $(\seq{D}, \seq{P}, M)$ is \emph{well-formed} if:
\begin{enumerate}
    \item For each actor definition $D = \actordef{u}{\loctya}{N} \in \seq{D}$,
        there exists some role $\prole \in \seq{P}$ such that $\ty{\prole}
          = \loctya$, and
        $\tseq[\loctya]{\cdot}{\loctya}{N}{A}{\localend}$
    \item Each protocol $P \in \seq{P}$ is well-formed and has a distinct set of
      roles
    \item The `boot clause' $M$ is typable under the empty typing environment
      and does not perform any communication actions:
      $\tseq[\localend]{\cdot}{\localend}{M}{A}{\localend}$
  \end{enumerate}
\end{definition}
When discussing the metatheory, we only consider configurations defined with
respect to a well-formed program. Specifically, we henceforth assume that each
actor definition in the system follows a session type matched by a role in a
given protocol, assume each role belongs to a single protocol, and assume that
all protocols are well-formed.

Given a safe runtime environment, configuration reduction preserves typability;
details can be found in Appendix~\ref{appendix:proofs}.
We write $\mathcal{R}^?$ for the reflexive closure of a relation $\mathcal{R}$.

\begin{restatable}[Preservation (Configurations)]{theorem}{configpres}\label{thm:configpres}
    Suppose $\cseq{\Gamma}{\rtenv}{\config{C}}$ with $\safe{\rtenv}$ and where
    $\config{C}$ is defined wrt.\ a well-formed program.
    If $\config{C} \ceval \config{C}'$, then there exists some $\rtenv'$ such that $\rtenv \maybesynceval
  \rtenv'$ and $\cseq{\Gamma}{\rtenv'}{\config{C}'}$.
\end{restatable}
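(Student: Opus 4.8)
The plan is to prove the statement by induction on the derivation of $\config{C} \ceval \config{C}'$, analysing the final rule applied, after first establishing the usual auxiliary results. I would begin with a \emph{term-level subject reduction} lemma: if $\tseq{\Gamma}{\loctya}{M}{A}{\loctya'}$ and $M \teval N$ then $\tseq{\Gamma}{\loctya}{N}{A}{\loctya'}$, proved by induction on $\teval$ and used directly for the \textsc{E-LiftM} rule. This relies on a \emph{substitution lemma} (if $\tseq{\Gamma, x{:}A}{\loctya}{M}{B}{\loctya'}$ and $\vseq{\Gamma}{V}{A}$ then $\tseq{\Gamma}{\loctya}{M\{V/x\}}{B}{\loctya'}$), which is also needed wherever a payload or recursion body is substituted (\textsc{E-Let}, \textsc{E-Rec}, \textsc{E-ConnInit}, \textsc{E-Conn}, \textsc{E-Comm}). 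Next I would prove a \emph{context-decomposition} lemma for the fine-grain call-by-value evaluation contexts: a typing of $E[M]$ (where $E$ may carry the top-level try-frame $F$) factors through a typing of $M$, and re-plugging a term with the same type and session pre/post-condition preserves typability; crucially, since \textsc{T-Raise} admits an arbitrary return type and postcondition, $E[\raiseexn]$ is always typable, which is precisely what the exception cases require. I would also show that configuration typing is preserved by structural congruence $\equiv$ (for \textsc{E-Equiv}), the only subtle axiom being the garbage-collection rule for fully-zapped sessions, which holds because such a session contributes only entries consumable by \textsc{T-Zap}.

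The administrative rules are the inductive cases and carry the interesting environment bookkeeping. For \textsc{E-Par} I split $\rtenv = \rtenv_1, \rtenv_2$ via \textsc{T-Par}, apply the IH to the reducing component to get $\rtenv_1 \maybesynceval \rtenv_1'$, and lift the transition to the whole environment using \textsc{ET-Cong1}/\textsc{ET-Cong2}. For \textsc{E-Nu} I distinguish the two kinds of name. When $n = a$, rule \textsc{T-Pid} contributes an actor entry $a{:}\loctya$ untouched by $\synceval$, so the IH transfers verbatim. When $n = s$, rule \textsc{T-Session} contributes a session environment $\rtenv_s$ satisfying the safety property; I apply the IH to the body under $\rtenv, \rtenv_s$, obtain $\rtenv, \rtenv_s \maybesynceval \rtenv''$, observe that the transition either leaves the session-$s$ entries alone or advances them, and use clause (4) of Definition~\ref{def:safety} to re-establish the safety premise of \textsc{T-Session} on the residual session environment before rebuilding the binder. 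This is the mechanism by which an internal synchronisation is absorbed by a $\nu$-binder and does not surface in the top-level $\rtenv$.

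Among the base cases I would dispatch the non-communicating rules first, since they leave the environment fixed ($\rtenv' = \rtenv$, the reflexive case of $\maybesynceval$): \textsc{E-Loop}, \textsc{E-New}, \textsc{E-Self}, \textsc{E-Discover}, \textsc{E-Replace}, \textsc{E-ReplaceSelf}, \textsc{E-ConnFail}, \textsc{E-CommRaise}, \textsc{E-FailS}, and \textsc{E-FailLoop}. Each follows by inversion on \textsc{T-DisconnectedActor}/\textsc{T-ConnectedActor} and re-assembly: the replacement rules use the \textsc{T-Replace} premise $\bseq[\loctyc]{\Gamma}{\behaviour}$ to retype the swapped behaviour against the target's static type, \textsc{E-New} retypes the spawned child against $\sesstype{u}$ under a fresh \textsc{T-Pid}, and the failure rules use the context-decomposition lemma to retype the residual $\raiseexn$ while \textsc{T-Zap} absorbs the session entry released by \textsc{E-FailS}. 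The genuinely new environment transitions come from the session rules acting on an \emph{existing} session, \textsc{E-Conn}, \textsc{E-Disconn}, and \textsc{E-Comm}: here I match the configuration step to a synchronisation $\rtenv \syncannarrow{\synclbl} \rtenv'$ built from \textsc{ET-ConnSync}, \textsc{ET-Disconn}, or \textsc{ET-Comm}, advancing (and, for \textsc{E-Conn}, introducing) the relevant session entries in lockstep with the actors' connection states, and then retype each actor with its advanced entry. \textsc{E-ConnInit} and \textsc{E-Complete} are reflexive at the level of $\rtenv$ because they respectively create and discard a $\nu s$-bound session; for \textsc{E-ConnInit} the real content is constructing the fresh session environment $\{\sessindexroles{s}{\prole}{\qrole}{\loctya'},\sessindexroles{s}{\qrole}{\prole}{\loctyb'}\}$ and discharging the safety premise of \textsc{T-Session}.

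That last point is where I expect the main difficulty to lie: bridging the \emph{static} well-formedness of the program and the \emph{runtime} session. From well-formedness I know the connecting actor plays the protocol's unique initiator $\prole$, so $\safe{\sessindexroles{s}{\prole}{\emptyset}{\ty{\prole}}}$; the connection is a synchronisation \textsc{ET-Conn}/\textsc{ET-ConnSync} out of this singleton environment, so clause (4) of Definition~\ref{def:safety} yields safety of the two-role environment, while clauses (1)--(2) guarantee that the accepting partner's statically-declared type really does offer a matching accept branch with equal payload type --- exactly what lets the substitution $M_j\{V/x_j\}$ in the accepting actor typecheck. The same appeal to clauses (1)--(2) is the crux of \textsc{E-Comm} and \textsc{E-Conn}: the operational premises only guarantee that the roles are connected and that a branch label exists, and it is the safety invariant that supplies the payload-type agreement needed to apply the substitution lemma. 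Threading this safety argument uniformly through connection, communication, and the $\nu s$ reconstruction --- rather than any single calculation --- is the technically delicate heart of the proof.
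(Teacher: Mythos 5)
Your proposal follows the paper's own proof almost exactly: the same induction on the derivation of $\config{C} \ceval \config{C}'$, the same auxiliary lemmas (term-level preservation, substitution, subterm typability, preservation under $\equiv$), the same split into environment-preserving and environment-advancing cases, and the same use of program well-formedness (unique initiator) together with clauses (1), (2) and (4) of the safety property to discharge \textsc{E-ConnInit}, \textsc{E-Conn} and \textsc{E-Comm}. Your treatment of \textsc{E-Nu} and \textsc{E-Par} is, if anything, more explicit than the paper's, which dispatches both in a single line, and you also cover \textsc{E-ConnFail}, which the paper leaves implicit.

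There is, however, one concrete technical gap. You state your context-decomposition lemma as: re-plugging a term with \emph{the same} type and session pre/post-condition preserves typability. That version is too weak for exactly the cases you identify as the heart of the proof. In \textsc{E-Comm} (and likewise \textsc{E-Conn}, \textsc{E-ConnInit}, \textsc{E-Disconn}), the sender's redex $\newsend{\ell_j}{V}{\qrole}$ is typed with precondition $\sumact{i \in I}{\locact_i \then \loctya_i}$ and postcondition $\loctya'_j$, and after reduction you must re-plug $\effreturn{()}$, whose precondition is the \emph{advanced} type $\loctya'_j$, not the original choice; so a same-precondition replacement lemma simply does not apply, and your step ``retype each actor with its advanced entry'' is not justified by the lemma you have. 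The paper resolves this with a second, more liberal replacement lemma that holds only for \emph{pure} contexts (Lemma~\ref{lem:pure-replacement}): the precondition of the hole may change, because in a nest of $\calcwd{let}$s nothing constrains it --- only \textsc{T-Try} frames pin the precondition, since the handler must share the action's pre- and post-conditions. Invoking it in the communication cases is sound precisely because the operational rules discard the top-level frames $F, F'$ on success, leaving pure contexts $E, E'$. Your plan needs this strengthened lemma, and the observation about frame discarding, to go through; the exception cases, by contrast, are fine as you state them, since $\raiseexn$ can be typed with the very same pre/post-conditions as the term it replaces.
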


Preservation shows that each actor conforms to its session type, and that
communication never introduces unsoundness due to mismatching payload types.

\subsubsection{Progress}\label{sec:formalism:progress}
We now show a progress property, which shows that given
\emph{protocols} which satisfy a progress property, \EnsembleS
\emph{configurations} do not get stuck due to deadlocks.

A \emph{final} runtime typing environment contains a single, disconnected role
of type $\localend$, reflecting the intuition that all roles will eventually
disconnect from a protocol initiator.

\begin{definition}[Final environment]
  An environment $\rtenv$ is \emph{final}, written $\finished{\rtenv}$,
  $\rtenv = \{
      \sessindexroles{s}{\prole}{\emptyset}{\localend} \}$ for some $s$ and
      $\prole$.
\end{definition}

So far, we have considered \emph{safe} protocols, which ensure the absence of
communication mismatches. We say that an environment \emph{satisfies progress}
if each active role can eventually perform an action, each potential send is
eventually matched by a receive, and non-reducing environments are final.
Let $\roles{\synclbl}$ denote the roles referenced in a
synchronisation label (i.e., $\roles{\synclbl} = \{ \prole, \qrole\} $ for
$\synclbl \in \{ \ltsconnpair{s}{\prole}{\qrole}{\ell},
\ltspair{s}{\prole}{\qrole}{\ell},
\ltsdisconnpair{s}{\prole}{\qrole} \}$).

\begin{definition}[Progress (Runtime typing
  environments)]\label{def:env-progress}
  A runtime typing environment $\rtenv$ \emph{satisfies progress}, written
  $\prog{\rtenv}$, if:
  \begin{itemize}
  \item \textit{(Role progress)} for each $\sessindexroles{s}{\prole_i}{\rolesetq_i}{\loctya_i} \in
      \rtenv$ s.t.\ $\tyactive{\loctya_i}$, $\rtenv
      \synceval^* \rtenv' \syncannarrow{\synclbl}$ with $\prole \in \roles{\synclbl}$
    \item \textit{(Eventual comm.)}
      if $\rtenv \!\syncevalstar\! \rtenv' \!\annarrow{\ltscomm[{!}]{s}{\prole}{\qrole}{\ell}{A}}$,
      then $\rtenv' \syncannarrow{\seq{\synclbl}} \rtenv''
        \annarrow{\ltscomm[{?}]{s}{\qrole}{\prole}{\ell}{A}}$,
        with $\prole \not\in \roles{\seq{\synclbl}}$
    \item \textit{(Correct termination)} $\rtenv \syncevalstar
     \rtenv' \not\synceval$ implies $\finished{\rtenv}$
   \end{itemize}
\end{definition}

The online shopping example satisfies progress, since all roles will always
eventually be able to fire an action once connected, and since all roles
disconnect, the non-reducing final environment will be of the form
$\sessindexroles{s}{\role{Customer}}{\emptyset}{\localend}$.

\begin{definition}[Progress (Programs)]
    A well-formed program $(\seq{D}, \seq{P}, M)$ \emph{satisfies progress}
    if each $P \in \seq{P}$ has a unique initiator $\qrole$ of type $\loctya$ and
    $\prog{ \sessindexroles{s}{\qrole}{\emptyset}{\loctya} }$ for any $s$.
\end{definition}

It is useful to define a \emph{configuration context} $\config{G}$ as
the one-hole context $\config{G} ::=  [~] \midspace (\nu s)\config{G} \midspace \config{G}
\parallel \config{C}$.
A \emph{session} consists of a session name restriction and all connected actors
and zapper threads.  Each well-typed configuration can be written as a sequence
of sessions, followed by all disconnected actors.

\begin{definition}[Session]
  A configuration is a \emph{session} $\sgroup$ if it can be written:\\
$
  (\nu s)(\eactor{a_1}{M_1}{\newconnsess{s}{\prole_1}{\rolesetq_1}}{\behaviour_1}
  \parallel \cdots \parallel
  \eactor{a_m}{M_m}{\newconnsess{s}{\prole_m}{\rolesetq_m}}{\behaviour_m}
  \parallel
  \zap{\sessindextwo{s}{\prole_{m + 1}}}
  \parallel \cdots
  \parallel
  \zap{\sessindextwo{s}{\prole_{n}}}
  )
$
\end{definition}

An actor is \emph{terminated} if it has reduced to a value or has an unhandled
exception, and has the behaviour $\bstop$.  An unmatched discover occurs when no
other actors match a given session type.  An actor is \emph{accepting} if it is
ready to accept a connection.

\begin{definition}[Terminated actor, unmatched discover, accepting actor]\hfill
  \begin{itemize}
    \item An actor $\eactor{a}{M}{\connstate}{\behaviour}$ is \emph{terminated}
      if $M = \effreturn{V}$ or $M = \ep[\raiseexn]$, and $\behaviour = \bstop$.
\item An actor $\eactor{a}{E[\newdiscover{S}]}{\connstate}{\behaviour}$ which is a subconfiguration
      of $\config{C}$ has an \emph{unmatched discover} if no other
      non-terminated actor in
      $\config{C}$ has session type $S$.
\item An actor $\eactor{a}{M}{\connstate}{\behaviour}$
      is \emph{accepting} if $M =
      E[\newaccept{\prole}{\msg{\ell_j}{x_j} \mapsto N_j}_j]$ for
      some evaluation context $E$ and role $\prole$.
\end{itemize}
\end{definition}

Unhandled exceptions will propagate through a session, progressively cancelling
all roles. A \emph{failed session} consists of only zapper threads.

\begin{definition}[Failed session]
  We say that a session $\sgroup$ is a \emph{failed session}, written
  $\failed{\sgroup}$, if
  $\sgroup \equiv (\nu s)(\zaptwo{s}{\prole_1} \parallel \cdots \parallel \zaptwo{s}{\prole_n})$.
\end{definition}

The key \emph{session progress} lemma establishes the reducibility of each session which does not
contain an unmatched discover and is typable under a reducible runtime typing
environment.

\begin{restatable}[Session Progress]{lemma}{sessgroupprogress}\label{lem:sess-group-progress}
  If $\cseq{\cdot}{\cdot}{\config{C}}$ where $\config{C}$ does not contain an
  unmatched discover, $\config{C} \equiv
  \config{G}[\sgroup]$ and $\sgroup = (\nu s: \rtenv) \config{D}$ with
  $\prog{\rtenv}$, and $\sgroup$ is not a failed session, then $\config{C} \ceval$.
\end{restatable}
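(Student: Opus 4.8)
The plan is to exhibit a single reduction of $\config{C}$. Since $\ceval$ is closed under structural congruence, parallel composition, and name restriction (\textsc{E-Equiv}, \textsc{E-Par}, \textsc{E-Nu}), it suffices to locate a redex among the actors of the session $\sgroup = (\nu s : \rtenv)\config{D}$; reductions whose partner lies in the surrounding context $\config{G}$ (discovery, replacement) simply reduce the whole of $\config{C}$ directly. Because $\sgroup$ is not a failed session it contains at least one live actor $\eactor{a}{M}{\newconnsess{s}{\prole}{\rolesetq}}{\behaviour}$, and inverting \textsc{T-ConnectedActor} on each such actor ties its currently-evaluating term $M$ to its role entry $\sessindexroles{s}{\prole}{\rolesetq}{\loctya}$ in $\rtenv$. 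Note that a connected actor has already accepted its connection, so it is never poised at an $\calcwd{accept}$ (accepting actors are disconnected and hence lie outside $\sgroup$).

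First I would discharge all reductions that are insensitive to the protocol. Decomposing each live term as $M = E[L]$: if $L$ is a $\beta$/$\calcwd{let}$/$\calcwd{try}$/$\calcwd{rec}$ redex then \textsc{E-LiftM} applies; if $L$ is $\new{u}$, $\self$, $\replace{V}{\behaviour}$ (the target PID denotes a unique existing actor by \textsc{T-Pid}), or $\newdiscover{\loctya}$ (here the no-unmatched-discover hypothesis supplies a non-terminated partner, so \textsc{E-Discover} fires), then the corresponding adaptation rule applies; if $M$ is a raise then \textsc{E-FailS} applies (or \textsc{E-LiftM} via \textsc{E-TryRaise} if it sits under a handler); and if $M = \effreturn{V}$ then $\loctya = \localend$ by \textsc{T-Return}, the role set is empty, and \textsc{E-Complete} applies. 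Two further cases reduce at once: if the head action's subject role is zapped, \textsc{E-CommRaise} fires; and if $L$ is a connection $\newconn{\ell}{V}{b}{\qrole}$, inspecting $b$ shows one of \textsc{E-ConnInit}, \textsc{E-Conn}, \textsc{E-ConnFail} always applies. Hence I may assume every live actor is poised at a $\calcwd{send}$, $\calcwd{receive}$, $\calcwd{wait}$, or $\calcwd{disconnect}$ whose shape matches the head of $\loctya$ and whose partner role is itself live.

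It remains to drive one of these via $\prog{\rtenv}$. If $\rtenv \not\synceval$, the Correct-termination clause (applied with zero steps) forces $\finished{\rtenv}$, a single $\localend$ role with no zappers; but then the sole live actor is at a value and was already handled by \textsc{E-Complete}, so in the present regime $\rtenv \synceval$. Taking an enabled synchronisation $\rtenv \syncannarrow{\synclbl} \rtenv'$ whose roles are live (justified below), I lift it by a case split on $\synclbl$: a communication label yields \textsc{E-Comm} between the sender and the matching receiver, and a disconnection label yields \textsc{E-Disconn} between the $\calcwd{wait}$ and $\calcwd{disconnect}$ actors. In either case $\config{C} \ceval$, as required.

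The main obstacle is precisely this operational correspondence between the environment LTS of Figure~\ref{fig:rt-lts-reduction} and configuration reduction, for two reasons. First, adaptation actions are invisible to the LTS, so a role may be ``ahead'' in $\rtenv$ while its actor is still blocked on a $\calcwd{discover}$/$\calcwd{new}$/$\calcwd{self}$/$\calcwd{replace}$; this is why those redexes must be cleared first, and why the no-unmatched-discover hypothesis is essential. Second, a zapper thread keeps a cancelled role's type in $\rtenv$ although no actor plays it, so an enabled environment synchronisation could in principle be offered between two cancelled roles while a live actor idles; ruling this out is what justifies ``whose roles are live'' above, and requires showing a partially-failed, non-reducing session is impossible — any live actor waiting on a cancelled role fires \textsc{E-CommRaise}, and $\prog{\rtenv}$ (Role progress and Eventual communication) forbids a genuine deadlock among the remaining live roles. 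Discharging this bookkeeping, together with the connection case where the environment always fires \textsc{ET-Conn} on the static type of the target whereas the configuration needs a concrete accepting actor or a failure, is where I expect the bulk of the work to lie; the split on $\rtenv \synceval$ and the appeal to Correct termination are then routine.
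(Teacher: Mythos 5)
Your overall scaffolding (clear the protocol-insensitive redexes first, deal with zapper threads via \textsc{E-CommRaise}, then drive a communication from $\prog{\rtenv}$) matches the shape of the paper's proof, which packages the first step as a readiness lemma (Lemma~\ref{lem:readiness}) and the zapper bookkeeping as exception-aware environments (Lemmas~\ref{lem:zap-typ} and~\ref{lem:zap-env-props}). However, there is a genuine gap at your lifting step: ``taking an enabled synchronisation $\rtenv \syncannarrow{\synclbl} \rtenv'$ \ldots{} a communication label yields \textsc{E-Comm} between the sender and the matching receiver.'' The environment LTS (rule \textsc{ET-Act}) may fire \emph{any} branch of a non-directed output choice $\sumact{i \in I}{\locact_i \then \loctya_i}$, whereas the actor inhabiting that role has already \emph{syntactically committed} to one particular branch: by \textsc{T-Send} its term is $E[\newsend{\ell_j}{V}{\qrole_j}]$ for a single $j \in I$, possibly addressed to a different partner and label than those named in $\synclbl$. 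So an enabled environment synchronisation need not correspond to any configuration redex; and conversely, the redex the configuration is actually poised at (the committed branch) need not be among the currently enabled synchronisations --- whether it ever becomes enabled may depend on intermediate synchronisations whose roles have themselves committed to other branches, a potential cyclic dependency your argument never excludes. This is a third obstacle, distinct from the two you list (adaptation invisibility and zapper liveness), and it is the one the paper identifies as the reason Hu--Yoshida-style output choices make ``reduction of environments being reflected by configurations'' hard.

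The paper closes exactly this gap with its flattening machinery: Lemma~\ref{lem:flattening-typ} shows that a ready configuration is typable under a \emph{flattened} environment (Definitions~\ref{def:flattening-types} and~\ref{def:flattening-envs}) in which every output choice is cut down to the unary branch its actor committed to; Lemma~\ref{lem:flattening-reducibility} --- the technical heart, a proof by contradiction using \textit{(eventual communication)} and a choice of synchronisation label untouched by flattening so as to preclude cycles --- shows that flattening preserves reducibility; and Lemma~\ref{lem:flat-progress} shows that for a \emph{flat} environment an enabled synchronisation does lift to an \textsc{E-Comm}/\textsc{E-Conn}/\textsc{E-CommRaise} redex, which is precisely the step you wanted to take directly. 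Without an analogue of these three lemmas your proof does not go through. (A smaller under-argued point: for \textsc{E-Complete} you assert the connected role set of a returned actor is empty; the paper derives this from $\prog{\zapenv}$, not from \textsc{T-Return} alone.) The remainder of your outline --- the readiness-style case analysis, the $\finished{\rtenv}$ split justified by \textit{(correct termination)}, and the failure-propagation reasoning --- is sound and corresponds to Lemmas~\ref{lem:readiness}, \ref{lem:zapenv-reduction}, and~\ref{lem:zap-env-props} respectively.
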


There are several steps to proving Lemma~\ref{lem:sess-group-progress}.
First,
we introduce \emph{exception-aware} reduction on runtime typing environments,
which explicitly accounts for zapper threads at the type level, and show that
exception-aware environments threads retain safety and progress. Second, we
introduce \emph{flattenings}, which show that runtime typing
environments containing only unary output choices can type configurations
blocked on communication actions, and that flattenings preserve environment
reducibility. Finally, we show that configurations typable under flat,
reducible typing environments can reduce. Full details can be found in
Appendix~\ref{appendix:proofs}.

We can now show our second main result: in the absence of unmatched discovers,
a configuration can either reduce, or it consists only of accepting and
terminating actors.

\begin{restatable}[Progress]{theorem}{progress}\label{thm:progress}
    Suppose $\cseq{\cdot}{\cdot}{\config{C}}$ where $\config{C}$ is defined
    wrt.\ a well-formed program which satisfies progress, and $\prog{\rtenv}$ for
    each $(\nu s : \rtenv) \config{C}'$ in $\config{C}$.
If $\config{C}$ does not contain an unmatched discover,
  either
  $\exists \config{D}$ such that $\config{C}
  \ceval \config{D}$, or
  $\config{C} \equiv \confzero$, or
  $\config{C} \equiv (\nu b_1 \cdots \nu b_n)(\eactor{b_1}{N_1}{\bot}{\behaviour_1} \parallel \cdots \parallel
    \eactor{b_n}{N_n}{\bot}{\behaviour_n})$
  where each $b_i$ is terminated or accepting.
\end{restatable}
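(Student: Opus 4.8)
The plan is to reduce the theorem to the \emph{Session Progress} lemma (\autoref{lem:sess-group-progress}) plus a canonical-forms analysis of disconnected actors. The entry point is the structural observation stated just before the theorem: every well-typed configuration can be rearranged by $\equiv$ into a sequence of sessions followed by a parallel composition of disconnected actors. Concretely, I would first prove $\config{C} \equiv \sgroup_1 \parallel \cdots \parallel \sgroup_k \parallel \config{A}$, where each $\sgroup_j = (\nu s_j : \rtenv_j)\config{D}_j$ is a session in the sense of the \emph{Session} definition and $\config{A}$ is a parallel composition of actors all in connection state $\bot$. This follows by induction on the derivation of $\cseq{\cdot}{\cdot}{\config{C}}$, using \textsc{T-Session} to group the roles of each session under its binder and the linearity of $\rtenv$ to ensure each session-connected actor lies in exactly one such group.

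Next I would dispatch the sessions. Any failed session, i.e.\ one with $\failed{\sgroup_j}$, consists solely of zapper threads, so the garbage-collection axiom of $\equiv$ deletes it. If at least one non-failed session $\sgroup_j$ survives, I arrange $\config{C} \equiv \config{G}[\sgroup_j]$ for a configuration context $\config{G}$ with $\sgroup_j = (\nu s_j : \rtenv_j)\config{D}_j$; by hypothesis $\prog{\rtenv_j}$ holds and $\sgroup_j$ is not failed, so \autoref{lem:sess-group-progress} yields $\config{C} \ceval$, discharging the first disjunct. Otherwise no session survives, leaving $\config{C} \equiv \config{A}$ (all actors disconnected, and no zappers, since a session containing any connected actor would be non-failed); if $\config{A}$ is empty this is $\confzero$, the second disjunct.

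The remaining work is a progress argument for $\config{A} = \eactor{a_1}{M_1}{\bot}{\behaviour_1} \parallel \cdots \parallel \eactor{a_n}{M_n}{\bot}{\behaviour_n}$: I show that if no actor can step then each is terminated or accepting. Fixing an actor and applying canonical forms to its typed term $M_i$, the active redex is either a functional step (fire \textsc{E-LiftM}), a $\new{u}$ (\textsc{E-New}), a $\self$ (\textsc{E-Self}), a $\replace{V}{\behaviour}$ whose target exists by linearity of $\rtenv$ (\textsc{E-Replace}/\textsc{E-ReplaceSelf}), a $\newdiscover{S}$ which by the no-unmatched-discover hypothesis has a non-terminated match (\textsc{E-Discover}), a returned value or unhandled exception (either \textsc{E-Loop}/\textsc{E-FailLoop} fires when $\behaviour \neq \bstop$, or the actor is \emph{terminated}), an $\newaccept{\prole}{\cdots}$ redex (the actor is \emph{accepting}), or a connection $\newconn{\ell}{V}{b}{\qrole}$. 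Crucially, by \textsc{T-DisconnectedActor} a disconnected actor's current session type is its full static type or $\localend$, and by well-formedness the connecting actor's role is the unique initiator, so every other role is \emph{inactive}; hence the target $b$, playing $\qrole$, has static type $\ty{\qrole}$ that is an accept-choice or $\localend$. Examining this disconnected $b$: if $b$ is terminated then \textsc{E-ConnFail} fires; if $b$ is accepting then \textsc{E-ConnInit} fires; and otherwise $b$ is at an inactive type but is neither a returned value (which by \textsc{T-Return} would force its type to $\localend$, making it terminated under $\bot$) nor an accept, so $b$ itself reduces by one of the non-session rules above. In every case $\config{C} \ceval$, so if nothing reduces no connect redex is present and each actor is terminated or accepting, giving the third disjunct.

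The main obstacle is the final paragraph: ruling out a \emph{mutual-connect deadlock} among disconnected actors. The argument rests on the interplay of the well-formedness condition (a unique initiator, so all other roles are inactive), \textsc{T-Conn}'s requirement that the target's PID carry exactly the session type of the role being connected to, and \textsc{T-DisconnectedActor}'s pinning of a disconnected actor's session type to its static type or $\localend$. Making the case analysis on the connect target $b$ airtight—correctly classifying a disconnected $b$ at an accept-choice type as either reducible or accepting, and invoking \textsc{E-ConnFail} when $b$ has already terminated—is where the care is required; the session case is essentially immediate once \autoref{lem:sess-group-progress} is available.
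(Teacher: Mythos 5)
Your proposal is correct and takes essentially the same route as the paper's proof: put $\config{C}$ in canonical form, discard failed sessions via the garbage-collection equivalence, apply Lemma~\ref{lem:sess-group-progress} to any surviving session, and finish with a term-progress case analysis of the disconnected actors, using well-formedness (unique initiator, hence all other roles inactive) to show a blocked communication under $\bot$ is either an accept or a connect whose target is accepting, terminated, or itself reducible. If anything, your treatment of the connect target $b$ is \emph{more} explicit than the paper's, which simply asserts that a blocked $\calcwd{connect}$ reduces by \textsc{E-ConnInit} or \textsc{E-ConnFail} and glosses over the case where $b$ is neither accepting nor terminated but still computing.
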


The proof eliminates all failed sessions by the structural congruence
rules; shows that the presence of sessions implies reducibility
(Lem.~\ref{lem:sess-group-progress}); and reasons about disconnected actors.

In addition to each
actor conforming to its session type~(\autoref{thm:configpres}),
Theorem~\ref{thm:progress} guarantees that the system does not deadlock. It
follows that session types ensure safe communication.

Theorem~\ref{thm:progress} assumes the absence of unmatched discovers. This is not a
significant limitation in practice, however, as unmatched discovers can be mitigated with
timeouts, where a timeout would trigger an exception.

\section{Related Work}
\label{sec:related}

\subparagraph{Behavioural typing for actors.}
\citet{MostrousV11} present the first theoretical account of session types in an actor
language; their work effectively overlays a channel-based session
typing discipline on mailboxes using Erlang's
reference generation capabilities.

\citet{NeykovaY17} use MPSTs to specify
communication in an actor system, implemented in Python.
Fowler~\cite{Fowler2016} implements similar ideas in Erlang,
with extensions to allow subsessions~\cite{DemangeonH12} and failure handling.
\citet{NeykovaY17a} later improve the recovery
mechanism of Erlang by using MPSTs to calculate a minimal set of affected roles.
The above works check multiparty session typing dynamically. We are first to
both formalise and implement static multiparty session type checking for an
actor language.

Active objects (AOs)~\cite{BoerCJ07} are actor-like concurrent objects
where results of asynchronous method invocations are returned through futures.
\citet{BagherzadehR17} study \emph{order types} for an AO calculus, which
characterise causality and statically rule out data races. In contrast to MPSTs,
order types work bottom-up through type inference.
\citet{KamburjanDC16} apply an MPST-like system to Core ABS~\cite{JohnsenHSSS10},
a core AO calculus;
they establish soundness via a translation to register automata rather than via an operational semantics.

\citet{deLiguoroP18} introduce \emph{mailbox types}, a type system for
first-class, unordered mailboxes. Their calculus generalises the actor
model, since each process can be associated with more than one mailbox.
Their type discipline allows multiple writers and a single reader for each
mailbox, and ensures conformance, deadlock-freedom, and for many programs,
junk-freedom. Our approach is based on MPSTs and is more
process-centric.

\subparagraph{Non-classical multiparty session types.}
MPSTs were introduced
by~\citet{Honda:2008:MAS:1328897.1328472}.
\emph{Classical} MPST theories are grounded in binary duality: safety follows as
a consequence of \emph{consistency} (pointwise binary duality of interactions
between participants), and deadlock-freedom follows from projectability from a
global type.

Unfortunately, classical MPSTs are restrictive: there are many protocols which
are intuitively safe but not consistent. \citet{ScalasY19} introduced the first
\emph{non-classical} multiparty session
calculus. Instead of ensuring safety using binary duality, they define
an LTS on local types and \emph{safety property} suitable for proving type
preservation; since the type system is \emph{parametric} in the safety property
(inspired by~\citet{IgarashiK04} in the $\pi$-calculus), the property can be
customised in order to guarantee different properties such as deadlock-freedom
or liveness.
\citet{Hu2017} formalise MPSTs with explicit connection actions via an LTS on
types rather than providing a concrete language design or calculus; in our
setting, a calculus is vital in order to account for the impact of adaptation
constructs. A key contribution of our work is the use of non-classical MPSTs to
prove preservation and progress properties for a calculus incorporating MPSTs
with explicit connection actions.

\subparagraph{Adaptation.}
None of the above work considers adaptation. The literature on formal studies of
adaptation is mainly based on process calculi, without programming language
design or implementation. Bravetti \emph{et al.} \cite{BravettiGPZ12} develop a
process calculus that allows parts of a process to be dynamically replaced with
new definitions. Their later work \cite{BravettiGPZ12a} uses temporal logic
rather than types to verify adaptive processes. Di Giusto and P\'{e}rez
\cite{DiGiustoP15} define a session type system for the same process calculus,
and prove that adaptation does not disrupt active sessions.
Later,~\citet{DiGiustoP15a} use an event-based approach so that adaptation
can depend on the state of a session protocol.
Anderson and Rathke \cite{AndersonR12} develop an MPST-like calculus and study
dynamic software update providing guarantees of communication safety and
liveness. Differently from our work, they do not consider runtime discovery of
software components and do not provide an implementation.

Coppo \emph{et al.} \cite{CoppoDV15} consider self-adaptation, in which a system
reconfigures itself rather than receiving external updates.
They define an MPST calculus with self-adaptation
and prove type safety. Castellani \emph{et al.} \cite{CastellaniDP16} extend
\cite{CoppoDV15} to also guarantee properties of secure information flow,
but neither of these works have been implemented.
Dalla Preda \emph{et al.} \cite{PredaGGLM16} develop the AIOCJ system based on
choreographic programming for runtime updates. Their work is implemented in the
Jolie language~\cite{MontesiGZ07}, but they do not consider runtime discovery.

In this work we focus on correct communication in the absence of adversaries, and do not consider security. The literature on security and behavioural types is surveyed by Bartoletti \emph{et al.} \cite{BartolettiCDDGP15} and could provide a basis for future work on security properties.
 
\section{Conclusion and Future Work}
\label{sec:future-conc}

Modern computing increasingly requires software components to \emph{adapt} to
their environment, by \emph{discovering}, \emph{replacing}, and
\emph{communicating} with other components which may not be part of the
system's original design. Unfortunately, up until now, existing programming
languages have lacked the ability to support adaptation both
\emph{safely} and \emph{statically}.
We therefore asked:
\begin{quote}
  \textit{Can a programming language support static (compile-time) verification of safe runtime
  dynamic self-adaptation, i.e., discovery, replacement and communication?}
\end{quote}

We have answered this question in the affirmative by introducing \EnsembleS, an
actor-based language supporting adaptation, which uses multiparty session types
to guarantee communication safety, using explicit connection
actions to invite discovered actors into a session.
We have demonstrated the safety of our system by proving  type soundness
theorems which state that each actor follows its session type, and that
communication does not introduce deadlocks. Our formalism makes essential use of
\emph{non-classical} MPSTs.

\subparagraph{Future work.}

Currently, in both the theory and implementation, each actor only takes part in
a single session. Unlike dynamically-checked implementations of session typing
for actors~\cite{NeykovaY14, Fowler2016}, this means that a message received by an actor in one
session cannot trigger an interaction in another (e.g., the Warehouse example
in~\cite{NeykovaY14}).
A key focus for future work will be to allow actors to partake in multiple sessions.

Currently, \EnsembleS discovery and replacement requires type equality.  We
envisage that we could relax this constraint to subtyping~\cite{ScalasDHY17} or
perhaps bisimilarity on local types to increase expressiveness.

In order to avoid session correlation errors, we require that each role includes
at most a single top-level $\calcwd{accept}$ construct (c.f.~\cite{Hu2017}). It
would be interesting to investigate the more general setting, which would likely
require dependent types.
 
\bibliography{master}

\newpage

\appendix

\begin{adjustwidth}{-70pt}{-60pt}
  \changetext{0pt}{12em}{}{}{}\onecolumn
  \nolinenumbers
  \section{Global types and projection}\label{appendix:projection}

In this section, we show how global types can be projected into the local types
used in the core formalism. Unlike in classic MPST works, it is not the case
that projection alone guarantees safety and deadlock-freedom in itself.
Rather, projection produces local types,
which can be validated separately as described in the main
body of the paper.

This section is mostly a straightforward adaptation of the
work of~\citet{Hu2017} to our setting.

\begin{figure}
  \small
\header{Meta-level definitions}

\[
  \begin{array}{rcl}
    \isoutput{L} & \defeq & L \in \{ \localtysend{\prole}{\ell}{A}, \localtyconn{\prole}{\ell}{A} \} \\
    \isinput{L} & \defeq & L \in \{ \localtyrecv{\prole}{\ell}{A},  \localtyaccept{\prole}{\ell}{A} \} \\
    \subj{\globalsend{\prole}{\qrole}{\ell}{A}},
    \subj{\globalconn{\prole}{\qrole}{\ell}{A}},
    \subj{\globaldisconn{\prole}{\qrole}} & \defeq & \prole \\
    \obj{\globalsend{\prole}{\qrole}{\ell}{A}},
    \obj{\globalconn{\prole}{\qrole}{\ell}{A}},
    \obj{\globaldisconn{\prole}{\qrole}} & \defeq & \qrole \\
  \end{array}
\]

\header{Projection}
{\framebox{$\project{G}{\rrole}$}~\framebox{$\hyproject{G}{\role{r}}{\recenv}$}}

\[
  \begin{array}{rcl}
    \project{G}{\rrole} & = & \hyproject{G}{\rrole}{\emptyset} \\
    \hyproject{(\globalsend{\prole}{\qrole}{\ell}{\tya} \then G)}{\rrole}{\recenv} & = &
      {
        \begin{cases}
          \localtysend{\qrole}{\ell}{\tya} \then (\hyproject{G}{\rrole}{\recenv}) & \rrole = \prole \\
          \localtyrecv{\prole}{\ell}{\tya} \then (\hyproject{G}{\rrole}{\recenv}) & \rrole = \qrole \\
          \hyproject{G}{\rrole}{\recenv} & \rrole \not\in \{ \prole, \qrole \} \\
        \end{cases}
      }
      \rowskip
      \\
\hyproject{(\globalconn{\prole}{\qrole}{\ell}{\tya} \then G)}{\rrole}{\recenv} & = &
      {
        \begin{cases}
          \localtyconn{\qrole}{\ell}{\tya} \then (\hyproject{G}{\rrole}{\recenv}) & \rrole = \prole \\
          \localtyaccept{\prole}{\ell}{\tya} \then (\hyproject{G}{\rrole}{\recenv}) & \rrole = \qrole \\
          \hyproject{G}{\rrole}{\recenv} & \rrole \not\in \{ \prole, \qrole \} \\
        \end{cases}
      }
      \rowskip
      \\
\hyproject{(\globaldisconn{\prole}{\qrole} \then {G})}{\rrole}{\recenv} & = &
     {
       \begin{cases}
          \localtydisconn{\qrole} & \rrole = \prole \text{ and }
          \hyproject{G}{\rrole}{\emptyset} = \localend \\
          \localtywait{\prole} \then (\hyproject{G}{\rrole}{\emptyset}) & \rrole = \qrole \\
\hyproject{G}{\rrole}{\recenv} & \rrole \not\in \{ \prole, \qrole \}
       \end{cases}
     }
     \rowskip
     \\
\hyproject{\sumact{i \in I}{G_i}}{\rrole}{\recenv} & = &
     {
       \begin{cases}
         X & \forall i \in I. (\hyproject{G_i}{\rrole}{\recenv}) = X \\
         \localend & \forall i \in I. (\hyproject{G_i}{\rrole}{\recenv}) = \localend \\
         \sumact{j \in J \subseteq I}{L_j = \hyproject{G_j}{\rrole}{\recenv}} &
         {
           \begin{array}{l}
             \size{I} > 1, \size{J} > 0, \text{ and } \\
             \forall k \in I \setminus J . \hyproject{G_k}{\rrole}{\recenv} = \localend \text{ or } X \in \recenv, \text{ and } \\
             \quad
             \text{either }
             {
               \begin{cases}
                 \forall j \in J. \isoutput{L_j} \\
                 \exists \role{\prole} . \forall j \in J. \isinput{L_j} \wedge \subj{L_j} = \role{\prole}
               \end{cases}
             }
           \end{array}
         }
       \end{cases}
     }
     \rowskip
     \\
     \hyproject{\recty{X}{G}}{\rrole}{\recenv} & = &
     {
       \begin{cases}
         \localend & \hyproject{G}{\rrole}{\recenv, X} = X' \text{ or } \localend \\
         \hyproject{G}{\rrole}{\Delta, X} & \text{otherwise}
       \end{cases}
     } \rowskip \\
     \hyproject{X}{\rrole}{\recenv} & = & X \rowskip \\
     \hyproject{\globalend}{\rrole}{\recenv} & = & \localend
  \end{array}
\]

\caption{Global types and projection}
\label{fig:appendix:global-types}
\end{figure}

\subparagraph{Syntax of types.}
Similar to the presentation of local types, the key difference between this
presentation of global types and standard MPST systems is that instead of having
branching of the form $\role{r} \rightarrow \role{s} \{ \ell_i : G_i\}_i$, where
a single role makes a choice, this presentation of global types provides
arbitrary summations of \emph{global actions}. The standard sending action is
therefore $\globalsend{\prole}{\qrole}{\ell}{\tya} \then G$ which involves role
$\prole$ sending message $\ell(\tya)$ to $\qrole$.

A key technical innovation is \emph{explicit connection actions}
($\globalconn{\prole}{\qrole}{\ell}{\tya}$) which establish a connection between two roles
$\role{\prole}$ and $\role{\qrole}$ by sending a label $\ell$. Conversely,
\emph{disconnection} $\globaldisconn{\prole}{\qrole}$ is a directed
disconnection where $\prole$ disconnects from $\qrole$, and $\qrole$ waits for
$\prole$'s disconnection.

Instead of assuming that all roles are connected at the start of the the
session, in order to communicate, each role must be connected explicitly. This
allows communication paths where one party isn't involved (as in Hu and
Yoshida's travel agent example).

\subparagraph{Projection.}

Global types can be \emph{projected} at a role in order to obtain a local type.
In standard MPST systems, a global type is well-formed if it is closed,
contractive, and projectable at all roles. A well-formed global type is then
guaranteed to be deadlock-free.

Define $\roles{G}$ to be the set of roles contained within global type $G$.

The projection function
$\hyproject{G}{\rrole}{\recenv}$ projects global type $G$ at role $\rrole$, and is
parameterised by a set of recursion variables $\recenv$. The inclusion of the
$\recenv$ parameter allows the pruning of unguarded recursion variables.

Projection on unary choices, recursive types, type variables, and session
termination are standard. Disconnection ensures that the disconnecting role is
unused in the remainder of the protocol.

Now consider the case of projecting $n$-ary choices for $n > 1$.
Suppose we have $\hyproject{\sumact{i \in I}{G_i}}{\rrole}{\recenv}$.
If $\hyproject{G_i}{\rrole}{\recenv} = \localend$ for all $G_i$, then the result of the
projection is $\localend$ (and similarly for a recursion variable $X$).

Otherwise, there will be some subset $J$ of $I$ such that each $L_j$ for $j \in
J$ is a communication action, and each $L_k \in I \setminus J$ projects to
either $\localend$ or a recursion variable.
For each $j \in J$, there is a further condition: the choice must consist of all
send actions, or all receive actions. If the latter, then the receiver must be
the same in all branches. This will ensure syntactically well-formed local
types used in the paper.

\subparagraph{Unfolding.}
We define the 1-unfolding of a global type $G$, $\unf{G}$, as
$\unf{\recty{X}{G}} = \unf{G\{ \globalend / X \}}$ and defined recursively over
the other constructs.

\subparagraph{Role enabling.}

In order to ensure that projection produces syntactically well-formed local
types, we require global types to satisfy \emph{role enabling}: intuitively,
this means that after a choice occurs, each role must be `activated' by
receiving a message before performing any further communications. In turn, this
avoids mixed choice.

Let $R$ range over sets of roles.

\headersig{Role enabling}{$\enableseq{R}{G}$}
\begin{mathpar}
    \inferrule
    { \subj{\globact} \subseteq R \\
      \enableseq{R \cup \obj{\globact}}{G} }
    { \enableseq{R}{\globact \then G} }

    \inferrule
    { \size{I} > 1 \\
      \exists \prole \in R .
      \forall i \in I.
        \subj{\globact_i} = \{ \prole \} \wedge
        \enableseq{\{ \prole \} \cup \obj{\globact_i}}{G_i}
    }
    { \enableseq{R}{\sumact{i \in I}{G_i}} }
\end{mathpar}

\subparagraph{Syntactic validity of projected local types.}

\begin{definition}[Initiator (global type)]
We say that $\prole$ is an \emph{initiator} of $G$ if $\prole \in
\roles{G}$ and $\project{G}{\prole} = \sumact{i \in I}{\globact_i \then G_i}$,
where each $\globact_i$ is a connection action. We say that $\prole$ is the
\emph{unique initiator} of $G$ if it is the only initiator.
\end{definition}

Given the above definitions, we can show that the set of projected local types
are syntactically valid.

\begin{lemma}
    Suppose $\prole$ is the unique initiator of $G$.
If $\enableseq{\prole}{\unf{G}}$ and $(\project{G}{\qrole} = L_\qrole)_{\qrole \in \roles{G}}$,
    then each $L_\qrole$ is syntactically valid.
\end{lemma}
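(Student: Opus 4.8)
The plan is to prove the lemma by induction on the structure of $G$, reading ``$L_\qrole$ is syntactically valid'' as ``every choice subterm occurring in $L_\qrole$ satisfies the syntactic validity definition''. Since the enabling judgement $\enableseq{R}{G}$ is itself defined structurally over $G$, I would strengthen the statement, carrying along an arbitrary enabled role set $R$ and recursion environment $\recenv$, and induct with $\enableseq{R}{\unf{G}}$ as an explicit hypothesis: the point of this generalisation is that the enabling premise for each branch of a choice, $\enableseq{\{\srole\} \cup \obj{\globact_i}}{G_i}$, is exactly what the induction hypothesis consumes. Recursion is handled via the guardedness assumption and the $1$-unfolding $\unf{G}$, treating recursion variables $X$ (and $\globalend$) as leaves with trivial projections.

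The easy cases are $\globalend$ and $X$, whose projections $\localend$ and $X$ contain no choice subterm, and $\recty{X}{G'}$, which recurses into the guarded body. For a unary prefix $\globact \then G'$, the projection at an uninvolved role discards the head and is covered by the induction hypothesis; at the subject it prepends a single send/connect (a singleton output choice, clause~1); at the object of a send/connect it prepends a single receive/accept (a singleton directed input choice, clause~2); and at the object of a disconnection it prepends $\localtywait{\prole}$ (clause~3). The case $\rrole = \prole$ of a disconnection yields the non-choice $\localtydisconn{\qrole}$, for which there is nothing to check. In every subcase the validity of the continuation follows from the induction hypothesis.

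The interesting case is the $n$-ary choice $\sumact{i \in I}{\globact_i \then G_i}$ with $|I| > 1$. Here I would exploit the fact that the projection is \emph{assumed} defined: the side condition on the projection clause for choices immediately tells us that $L_\qrole = \sumact{j \in J}{L_j}$ is either an all-output choice or an all-input choice whose branches share a common subject role $\srole$. Meanwhile the enabling rule for choices supplies a unique role $\srole \in R$ that is the subject of every $\globact_i$. Projecting at $\rrole = \srole$ gives heads that are all sends or connects, hence an output choice (clause~1); projecting at any other role, the side condition already yields a directed input choice, and enabling guarantees that such a role is activated by an input before it may act, so each $L_j$ is genuinely input-prefixed rather than collapsed to $\localend$ or a variable. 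Validity of the continuations is again the induction hypothesis.

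The main obstacle, and the step I would treat with care, is upgrading the ``all inputs with a common subject $\srole$'' supplied by the projection to the stronger \emph{homogeneity} demanded by clause~2, namely that the branches are \emph{either} all $\localtyrecv{\srole}{\cdot}{\cdot}$ \emph{or} all $\localtyaccept{\srole}{\cdot}{\cdot}$. Whether $\rrole$'s first action from $\srole$ is a receive or an accept is determined solely by whether $\srole$ and $\rrole$ are already connected at the point where $\rrole$ becomes active; since $\rrole$ cannot act before being activated (by enabling), its connection state is unchanged from the choice point, and that state is fixed by the prefix of $G$ shared by all branches. I would therefore introduce an auxiliary invariant that tracks, for each role, the set of partners it is connected to as one descends through $G$, prove that this invariant is branch-uniform at a choice point, and conclude that the input kind is uniform across $J$. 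This is also where the single-top-level-accept correlation restriction and the unique-initiator hypothesis pay off, the latter fixing the one role whose top-level projection is a connection (output) choice and guaranteeing that every other role begins inactive.
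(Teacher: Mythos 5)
Your overall strategy coincides with the paper's: the paper's proof is a two-sentence sketch that proceeds by induction on the derivation of $\hyproject{G}{\rrole}{\recenv}$, notes that partiality of projection forces disconnection actions to occur only as unary choices, and asserts that role enabling together with the side conditions on projecting non-unary choices ``precludes mixed choice''. Your base cases, the unary-prefix case, the output side of the choice case, and the strengthening over $R$ and $\recenv$ all track that sketch. You are also right that the whole weight of the lemma rests on the input-choice case, since the projection side condition only guarantees $\isinput{L_j}$ with a common subject, whereas clause~2 of syntactic validity demands that the branches be uniformly receives or uniformly accepts.

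However, your proposed resolution of that step contains a genuine gap: the invariant you want --- ``receive versus accept is determined by the connection state of the pair at the choice point, which is branch-uniform'' --- presupposes that $G$ uses a connect action only between currently-disconnected roles and a plain send only between connected ones, and nothing in the lemma's hypotheses (unique initiator, enabling, definedness of all projections) enforces this. Concretely, consider
\[
G \;=\; \globalconn{p}{q}{\lbl{a}}{\one} \then \bigl( (\globalsend{\prole}{\qrole}{\lbl{b}}{\one} \then \globaldisconn{p}{q} \then \globalend) \;+\; (\globalconn{p}{q}{\lbl{c}}{\one} \then \globaldisconn{p}{q} \then \globalend) \bigr).
\]
Here $\prole$ is the unique initiator, $\enableseq{\prole}{\unf{G}}$ holds since every action has subject $\prole$, and every projection is defined: at $\prole$ the choice is an output choice (one send head, one connect head), and at $\qrole$ it passes the $\isinput{}$-with-common-subject side condition. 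Yet $\project{G}{\qrole}$ equals $\localtyaccept{\prole}{\lbl{a}}{\one} \then (\localtyrecv{\prole}{\lbl{b}}{\one} \then \localtywait{\prole} \then \localend \,+\, \localtyaccept{\prole}{\lbl{c}}{\one} \then \localtywait{\prole} \then \localend)$, whose inner choice mixes a receive with an accept and therefore fails clause~2. So the branch-uniformity invariant you plan to prove is false for this hypothesis-satisfying $G$ (and, on a literal reading of the definitions, the lemma's conclusion fails for it too). Closing the argument requires something beyond the stated assumptions: either strengthen the projection side condition on input choices to demand a uniform kind (all receives or all accepts from the common subject), or add connection-consistency of $G$ as an explicit well-formedness hypothesis, after which your invariant argument does go through. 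Note that the paper's own sketch is silent on precisely this point --- its ``mixed choice'' claim addresses only input/output mixing --- so this auxiliary assumption must be made explicit rather than derived, and flagging that is the one place where your attempt needed to depart from, not refine, the plan you wrote.
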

\begin{proof}[Proof sketch.]
    The result can be established by induction on the derivation of
    $\hyproject{G}{\prole}{\recenv}$.

    Due to the partiality of the projection operation, disconnection actions
    must only occur as unary choices, and role enabling coupled with the
    conditions on the projection of non-unary choices preclude mixed choice.
\end{proof}

   \clearpage
  \section{Proofs for Section~\ref{sec:core-calculus}}\label{appendix:proofs}

\subsection{Preservation}

\begin{lemma}[Preservation (Terms)]\label{lem:term-pres}
  If $\tseq{\Gamma}{\loctya}{M}{A}{\loctya'}$ and $M \teval N$, then $\tseq{\Gamma}{\loctya}{N}{A}{\loctya'}$.
\end{lemma}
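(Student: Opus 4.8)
The statement is the term-level subject-reduction lemma for the deterministic reduction $\teval$, so the plan is to proceed by induction on the derivation of $M \teval N$, equivalently by a case analysis on the last reduction rule, invoking the induction hypothesis only in the congruence rule \textsc{E-LiftM}. In each base case I would first invert the typing derivation of $M$ to expose the typings of its immediate subterms, and then reassemble a derivation of $N$ carrying the same type $A$ and the same pre- and post-conditions $\loctya, \loctya'$.

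Before the case analysis I would establish two auxiliary lemmas. The first is a standard value-substitution lemma, proved by induction on the typing derivation: if $\vseq{\Gamma}{V}{\tya}$ and $\tseq{\Gamma, x : \tya}{\loctya}{M}{\tyb}{\loctya'}$, then $\tseq{\Gamma}{\loctya}{M \{ V / x \}}{\tyb}{\loctya'}$. The second is a decomposition/replacement lemma for evaluation contexts: because $E ::= F \midspace \efflet{x}{E}{M}$ places its hole either at the top level or in the first component of a $\calcwd{let}$ (so no context binder is in scope at the hole), any typing of $E[M]$ factors through a typing $\tseq{\Gamma}{\loctya}{M}{\tyb}{\loctya_1}$ of the redex, and substituting for $M$ any $N$ with that same typing yields a typing of $E[N]$ with the original type and effects. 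Combined with the induction hypothesis applied to the inner reduction, this immediately discharges \textsc{E-LiftM}.

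The remaining base cases are then short. For \textsc{E-Let} I would invert \textsc{T-Let} and \textsc{T-Return} (the latter forcing the intermediate session type to coincide with $\loctya$), reducing the goal to an application of value substitution. For \textsc{E-TryReturn} and \textsc{E-TryRaise} I would invert \textsc{T-Try}, whose two premises type the body and the handler at the \emph{same} $A$, $\loctya$, and $\loctya'$; the result $\effreturn{V}$ (respectively the handler $M$) is then typed directly by one of those premises, using the arbitrary type and postcondition granted by \textsc{T-Raise} to supply the body premise in the latter case.

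The main obstacle is \textsc{E-Rec}, the unfolding $\annst{l}{M} \teval M \{ \annst{l}{M} / \continue{l} \}$, which needs a dedicated recursion-substitution lemma: from the premise $\tseq{\Gamma, l : \loctya}{\loctya}{M}{A}{\loctya'}$ obtained by inverting \textsc{T-Rec}, substituting the whole loop for each $\continue{l}$ must preserve typing. The delicate point is that \textsc{T-Continue} types each occurrence of $\continue{l}$ with an \emph{arbitrary} result type and postcondition, so one cannot simply plug in a single fixed typing of $\annst{l}{M}$. The plan is to prove the lemma by induction on the derivation of the term being substituted into, and, in the \textsc{T-Continue} base case, to re-derive a typing of $\annst{l}{M}$ whose result type and postcondition match exactly those demanded at that occurrence; this is feasible because the loop inherits the same typing flexibility from its own internal $\continue{l}$, and because the equi-recursive identification $\mu X . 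S = S \{ \mu X . S / X \}$ keeps the session pre- and post-conditions aligned across the unfolding, with guardedness ensuring the unfolding is well defined. I expect getting this induction and the bookkeeping of result types and postconditions exactly right to be the crux of the argument; the other cases are routine once the two auxiliary lemmas are in hand.
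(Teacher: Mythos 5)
Your overall route is the one the paper takes: the paper's entire proof of this lemma is the single sentence ``By induction on the derivation of $M \teval N$'', and the auxiliary results you set up are exactly the paper's separately stated Lemmas~\ref{lem:substitution}, \ref{lem:subterm-typability} and~\ref{lem:replacement}, so your \textsc{E-Let}, \textsc{E-TryReturn}, \textsc{E-TryRaise} and \textsc{E-LiftM} cases are correct and match what the paper implicitly intends.

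The gap is in \textsc{E-Rec}, the case you rightly call the crux. Your plan is, at each \textsc{T-Continue} leaf demanding a pair $(C, \loctyc)$, to re-derive $\tseq{\Gamma}{\loctya}{\annst{l}{M}}{C}{\loctyc}$, on the grounds that ``the loop inherits the same typing flexibility from its own internal $\continue{l}$''. That justification fails: \textsc{T-Continue} makes the demand \emph{at an occurrence} arbitrary, but the \emph{result type} of the loop body is pinned by its value-producing exits and cannot track that demand. Concretely, take the body $M_0 = \efflet{y}{\continue{l}}{\efflet{z}{\replace{y}{\bstop}}{\effreturn{z}}}$. It is well typed, $\tseq{\Gamma, l : \loctya}{\loctya}{M_0}{\one}{\localend}$, by giving the continue the demand $(\pidty{\loctyb}, \localend)$ so that the dead code $\replace{y}{\bstop}$ typechecks with $y : \pidty{\loctyb}$; hence $\tseq{\Gamma}{\loctya}{\annst{l}{M_0}}{\one}{\localend}$. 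After \textsc{E-Rec} the term is $\efflet{y}{\annst{l}{M_0}}{\efflet{z}{\replace{y}{\bstop}}{\effreturn{z}}}$, and any typing of it must, by \textsc{T-Let}, \textsc{T-Replace} and \textsc{T-Var}, give $\annst{l}{M_0}$ a result type of the form $\pidty{\loctyb'}$, whereas every derivation for $\annst{l}{M_0}$ has result type $\one$, since its result is that of $\effreturn{z}$ and \textsc{T-Replace} returns $\one$. So at this leaf your re-derivation step has nothing it could produce; indeed the unfolded term is not typable at all, so no argument can close this case for such terms.

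What this shows is that the \textsc{E-Rec} case goes through only under an additional (implicit, and practically harmless) hypothesis: that $\continue{l}$ occurs in tail position, so that every \textsc{T-Continue} leaf's demand coincides with the loop's own result type and postcondition $(\tya, \loctya')$. Under that hypothesis your substitution lemma becomes easy --- graft the single original derivation of $\tseq{\Gamma}{\loctya}{\annst{l}{M}}{\tya}{\loctya'}$, weakened to the leaf's environment, at each leaf, with no re-derivation needed. To be clear, you have not missed an idea that the paper supplies: its one-line proof does not confront this point either, and your proposal is the only one of the two that even identifies it. But the justification you give for the crucial step is not valid as stated, and the case cannot be repaired without restricting where $\continue{l}$ may occur.
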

\begin{proof}
  By induction on the derivation of $M \teval N$.
\end{proof}

\begin{lemma}[Preservation (Equivalence)]\label{lem:equiv-pres}
  If $\cseq{\Gamma}{\rtenv}{\config{C}}$ and $\config{C} \equiv \config{D}$,
  then $\cseq{\Gamma}{\rtenv}{\config{D}}$.
\end{lemma}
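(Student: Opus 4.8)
The plan is to proceed by induction on the derivation of $\config{C} \equiv \config{D}$. Since $\equiv$ is the smallest congruence relation closed under the six axioms of Figure~\ref{fig:config-reduction-3}, every such derivation is built from those axioms together with reflexivity, transitivity, and the contextual-closure rules for parallel composition and name restriction. Because a congruence is symmetric, I would establish each axiom as a \emph{bidirectional} statement, i.e.\ $\cseq{\Gamma}{\rtenv}{\config{C}}$ iff $\cseq{\Gamma}{\rtenv}{\config{D}}$, which discharges the symmetry case for free; reflexivity is immediate, transitivity composes two appeals to the induction hypothesis, and each contextual-closure case follows by inverting the relevant typing rule (\textsc{T-Par}, \textsc{T-Pid}, or \textsc{T-Session}), applying the induction hypothesis to the modified subconfiguration, and re-applying the rule with the same binding or environment split. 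The configuration typing rules are syntax-directed on the shape of $\config{C}$, so the required inversion lemmas are routine.

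For the three axioms governing parallel composition, namely commutativity, associativity, and $\config{C} \parallel \confzero \equiv \config{C}$, the argument rests on the observation that the runtime environment is a commutative monoid under concatenation with unit $\cdot$: inverting \textsc{T-Par} yields a split $\rtenv = \rtenv_1, \rtenv_2$, and reassociating or commuting the two halves (using that $\confzero$ is typed under $\cdot$ by \textsc{T-Zero}) reconstructs a derivation for the other side \emph{with the same} $\rtenv$. The restriction-swap axiom $(\nu n_1)(\nu n_2)\config{C} \equiv (\nu n_2)(\nu n_1)\config{C}$ is handled by inverting the two outermost $\nu$-binders (each via \textsc{T-Pid} or \textsc{T-Session}) and reordering the order in which their bindings are added to $\rtenv$; the freshness side-conditions (e.g.\ $s \notin \rtenv$, and linearity of actor names) are stable under this reordering because the two bound names are distinct.

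The two genuinely interesting cases are scope extrusion and garbage collection. For scope extrusion, $\config{C} \parallel (\nu n)\config{D} \equiv (\nu n)(\config{C} \parallel \config{D})$ with $n \notin \fn{\config{C}}$, I would invert \textsc{T-Par} and then the relevant $\nu$-rule to expose the binding environment for $n$, and reassemble the derivation with the $\nu$-rule outermost and \textsc{T-Par} inside; the hypothesis $n \notin \fn{\config{C}}$ guarantees that $\config{C}$'s derivation never refers to the binding for $n$, so the environment entry for $n$ can be moved freely between the inner and outer scope without disturbing $\config{C}$'s typing or the overall $\rtenv$. For garbage collection, $(\nu s)(\zaptwo{s}{\prole_1} \parallel \cdots \parallel \zaptwo{s}{\prole_n}) \parallel \config{C} \equiv \config{C}$, the forward direction uses that the failed-session subconfiguration is typed under the \emph{empty} top-level environment: \textsc{T-Session} consumes the entire session environment $\{\sessindexroles{s}{\prole_i}{\rolesetq_i}{S_i}\}_i$ (each zapper being typed by \textsc{T-Zap}), so the contribution of the whole $(\nu s)(\cdots)$ to the outer environment is $\cdot$, leaving exactly the $\rtenv$ that types $\config{C}$. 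For the reverse direction I would synthesise a witness by typing each $\zaptwo{s}{\prole_i}$ with \textsc{T-Zap} under the trivially safe environment $\{\sessindexroles{s}{\prole_i}{\emptyset}{\localend}\}_i$, which satisfies $\prop$ vacuously (it has no communication, connection, or recursion redexes and does not reduce), and close it off with \textsc{T-Session}.

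I expect scope extrusion to be the main obstacle, as it is the only case requiring a nontrivial restructuring of the derivation together with a careful use of the free-name side condition to justify relocating the binding for $n$. The garbage-collection reverse direction is secondary, needing only the small lemma that the all-$\localend$ environment is safe; the remaining cases reduce to the monoidal structure of $\rtenv$ and straightforward inversion.
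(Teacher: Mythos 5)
Your proposal is correct and follows exactly the paper's approach: the paper's entire proof is the one-line ``By induction on the derivation of $\config{C} \equiv \config{D}$,'' and your case analysis (bidirectional axiom cases, monoidal reshuffling of $\rtenv$ for the parallel axioms, scope extrusion via strengthening/weakening, and the all-$\localend$ witness environment for re-introducing a garbage-collected session) is a sound elaboration of precisely that induction.
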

\begin{proof}
  By induction on the derivation of $\config{C} \equiv \config{D}$.
\end{proof}

\begin{lemma}[Substitution]\label{lem:substitution}
  If $\tseq{\Gamma, x : A}{S}{M}{B}{S'}$ and $\vseq{\Gamma}{V}{A}$, then
  $\tseq{\Gamma}{S}{M \{ V / x \}}{B}{S'}$.
\end{lemma}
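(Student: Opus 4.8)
The plan is to proceed by induction on the derivation of the term typing judgement $\tseq{\Gamma, x : A}{S}{M}{B}{S'}$. The session pre- and post-conditions $S, S'$ record only communication structure and are untouched by replacing an ordinary value variable, so they are threaded unchanged through every case; the real content is that typing of the \emph{data} occurring in $M$ is preserved. A helpful structural observation is that this judgement lives over the \emph{unrestricted} environment $\Gamma$, not the linear runtime environment $\rtenv$, so weakening and exchange on $\Gamma$ are freely available and there is no linearity bookkeeping to discharge.

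First I would establish the companion statement for values: if $\vseq{\Gamma, x : A}{W}{B}$ and $\vseq{\Gamma}{V}{A}$, then $\vseq{\Gamma}{W\{V/x\}}{B}$. This is immediate by cases on \textsc{T-Var} and \textsc{T-Unit}: if $W = x$ then $B = A$ and $W\{V/x\} = V$, so the goal is exactly the hypothesis on $V$; if $W$ is a variable $y \neq x$ then $y : B \in \Gamma$ and $W\{V/x\} = y$; and if $W = ()$ then $B = \one$ and the term is unchanged. This value lemma discharges every case of the main induction in which a value is typed directly, namely \textsc{T-Return}, \textsc{T-Replace}, \textsc{T-Conn}, and \textsc{T-Send}: substitution merely pushes into the value positions, where the value lemma applies, after which the same rule is reapplied with the session indices intact.

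The remaining cases are the axioms and the binding constructs. For the axiom rules \textsc{T-New}, \textsc{T-Self}, \textsc{T-Discover}, \textsc{T-Raise}, \textsc{T-Wait}, \textsc{T-Disconn}, and \textsc{T-Continue}, the term contains no free data variable affected by the substitution, so $M\{V/x\} = M$ and the same rule applies unchanged (for \textsc{T-Continue} the recursion label $l$ is a distinct syntactic category residing in $\Gamma$, and is untouched by substituting a value for $x$). For the binding rules---\textsc{T-Let}, which binds a variable in its continuation; \textsc{T-Rec}, which records a recursion label; and \textsc{T-Accept}/\textsc{T-Recv}, which bind each payload variable $x_i$ in their branches---I would adopt the usual Barendregt convention that the bound name is chosen fresh for both $x$ and $V$, so that the substitution commutes with the binder. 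Then the induction hypothesis applies to each sub-derivation after reordering the extended environment by exchange and re-typing $V$ under it by weakening; reassembling with the same rule yields the result. The \textsc{T-Try} case simply applies the induction hypothesis to both the action and the handler, which share the same pre- and post-condition.

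The proof carries no genuine difficulty. The only point requiring care is the treatment of the binding constructs, where freshness of the bound name (so that substitution commutes with the binder) together with exchange and weakening on the unrestricted $\Gamma$ must be invoked; I therefore expect this to be the sole, and entirely routine, obstacle.
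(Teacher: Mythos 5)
Your proposal is correct and follows exactly the paper's approach: the paper proves this lemma by induction on the derivation of $\tseq{\Gamma, x : A}{S}{M}{B}{S'}$, which is precisely the induction you carry out (the paper simply leaves the case analysis, the value sub-lemma, and the freshness bookkeeping implicit). No divergence or gap to report.
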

\begin{proof}
  By induction on the derivation of $\tseq{\Gamma, x : A}{S}{M}{B}{S'}$.
\end{proof}

\begin{lemma}[Subterm typability]\label{lem:subterm-typability}
  Suppose $\deriv{D}$ is a derivation of $\tseq{\Gamma}{S}{E[M]}{A}{S'}$.
  Then there exists some subderivation $\deriv{D}'$ of $\deriv{D}$ concluding
  $\tseq{\Gamma}{S}{M}{B}{S''}$ for some type $B$ and local type $S''$, where
  the position of $\deriv{D}'$ in $\deriv{D}$ corresponds to that of the hole in
  $E$.
\end{lemma}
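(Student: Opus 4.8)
The plan is to proceed by induction on the structure of the evaluation context $E$, whose grammar is $E ::= F \midspace \efflet{x}{E}{M}$ with frames $F ::= [~] \midspace \trycatch{[~]}{M}$. The key observation I would exploit at each step is that the term typing rules of \autoref{fig:term-typing-1} and \autoref{fig:term-typing-2} are syntax-directed: the outermost constructor of $E[M]$ uniquely determines the last rule applied in $\deriv{D}$, so I may invert $\deriv{D}$ and read off the required subderivation directly. Throughout, the environment $\Gamma$ and the precondition $S$ stay fixed, since every rule I invert passes both unchanged to the relevant premise.

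First I would handle the two frame cases. When $E = [~]$ we have $E[M] = M$, so $\deriv{D}$ itself concludes $\tseq{\Gamma}{S}{M}{A}{S'}$; I take $\deriv{D}' = \deriv{D}$ with $B = A$ and $S'' = S'$, sitting at the root of $\deriv{D}$ as required. When $E = \trycatch{[~]}{N}$, the term $E[M] = \trycatch{M}{N}$ can only be typed by \textsc{T-Try}, whose left premise is a subderivation concluding $\tseq{\Gamma}{S}{M}{A}{S'}$ (here $M$ occupies the action slot $L$). That premise is exactly the $\deriv{D}'$ I want, and it is located at the position matching the hole.

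For the inductive case $E = \efflet{x}{E'}{N}$, we have $E[M] = \efflet{x}{E'[M]}{N}$, which can only be typed by \textsc{T-Let}. Inverting yields a subderivation $\deriv{D}_1$ of $\deriv{D}$ concluding $\tseq{\Gamma}{S}{E'[M]}{C}{S'''}$ for some $C$ and $S'''$, where the precondition is still $S$ because the left premise of \textsc{T-Let} inherits the conclusion's precondition. I then apply the induction hypothesis to $E'$ and $\deriv{D}_1$, obtaining a subderivation $\deriv{D}'$ of $\deriv{D}_1$ (and hence of $\deriv{D}$) concluding $\tseq{\Gamma}{S}{M}{B}{S''}$ at the position of the hole in $E'$; since that position corresponds to the hole in $E$, this discharges the case.

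I expect the main obstacle to be purely bookkeeping rather than mathematical. The one point genuinely requiring care is justifying the inversions, i.e.\ confirming that the rules are truly syntax-directed so that nothing other than the expected rule (\textsc{T-Try} for a try-catch, \textsc{T-Let} for a let) can conclude the relevant typing. Beyond that, the delicate part is threading the positional claim---that $\deriv{D}'$ sits in $\deriv{D}$ at the location of the hole---through the let case, where the hole lies strictly inside the left premise; this follows because subderivation membership is transitive and the hole of $E$ is, by definition, the hole of $E'$. No hard reasoning is involved, and in particular no substitution or reduction argument is needed.
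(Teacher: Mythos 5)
Your proposal is correct and matches the paper's proof, which is exactly an induction on the structure of $E$ (the paper leaves the details implicit). Your case analysis — the hole, the \textsc{T-Try} frame, and the \textsc{T-Let} step with inversion and transitivity of subderivations — is precisely the fleshed-out version of that argument, including the key observation that both \textsc{T-Let} and \textsc{T-Try} pass $\Gamma$ and the precondition $S$ unchanged to the premise containing the hole.
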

\begin{proof}
  By induction on the structure of $E$.
\end{proof}

\begin{lemma}[Replacement]\label{lem:replacement}
  If:

  \begin{enumerate}
    \item $\deriv{D}$ is a derivation of $\tseq[\loctyc]{\Gamma}{S}{E[M]}{A}{T}$
    \item $\deriv{D}'$ is a subderivation of $\deriv{D}$ concluding
      $\tseq[\loctyc]{\Gamma}{S}{M}{B}{T'}$, where the position of $\deriv{D}'$ in
      $\deriv{D}$ corresponds to that of the hole in $\ep$
    \item $\tseq[\loctyc]{\Gamma}{S}{N}{B}{T'}$
  \end{enumerate}

  Then $\tseq[\loctyc]{\Gamma}{S}{E[N]}{A}{T}$.
\end{lemma}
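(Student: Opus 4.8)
The plan is to proceed by induction on the structure of the evaluation context $E$, exactly mirroring the companion Subterm typability lemma. The crucial observation is that evaluation contexts place the hole in a syntactically determined position, so each shape of $E$ forces the final rule of the derivation $\deriv{D}$ (there is no subsumption, and every syntactic form has a unique typing rule). Consequently the derivation decomposes cleanly along the spine of $E$, and I can rebuild it with $N$ in place of $M$. It is worth noting at the outset that along this spine both the typing environment $\Gamma$, the precondition $S$, and the follows-type $\loctyc$ are unchanged (in \textsc{T-Let} the environment is extended only in the continuation, which does not contain the hole), which is precisely why the hypotheses quantify $\deriv{D}'$ over the same $\Gamma$, $S$, and $\loctyc$.

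First I would dispatch the two base cases. If $E = [~]$, the hole is the whole term, so $\deriv{D}' = \deriv{D}$ and matching the positions forces $B = A$ and $T' = T$; assumption (3) then already gives $\tseq[\loctyc]{\Gamma}{S}{N}{A}{T}$, which is exactly $\tseq[\loctyc]{\Gamma}{S}{E[N]}{A}{T}$. If $E = \trycatch{[~]}{M'}$, the last rule of $\deriv{D}$ must be \textsc{T-Try}, and its left premise is precisely the subderivation $\deriv{D}'$ at the hole, so again $B = A$ and $T' = T$. I would then re-apply \textsc{T-Try}, replacing its left premise by the derivation of $\tseq[\loctyc]{\Gamma}{S}{N}{A}{T}$ from (3) and keeping the unchanged right premise for the handler $M'$, concluding $\tseq[\loctyc]{\Gamma}{S}{\trycatch{N}{M'}}{A}{T}$.

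The single inductive case is $E = \efflet{x}{E'}{M'}$, where $E[M] = \efflet{x}{E'[M]}{M'}$ and the last rule is \textsc{T-Let}. Its first premise types the bound computation $E'[M]$ with some type $C$ and postcondition $S''$, and the subderivation $\deriv{D}'$ at the hole lies inside this premise, at the position of the hole of $E'$. I would apply the induction hypothesis to this first premise, using $\deriv{D}'$ together with (3), to obtain $\tseq[\loctyc]{\Gamma}{S}{E'[N]}{C}{S''}$, and then re-apply \textsc{T-Let} with this and the unchanged second premise $\tseq[\loctyc]{\Gamma, x : C}{S''}{M'}{A}{T}$ to derive $\tseq[\loctyc]{\Gamma}{S}{E[N]}{A}{T}$.

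I do not expect a genuine obstacle here, since the argument is really careful bookkeeping; the one point requiring attention is the $\trycatch{[~]}{M'}$ frame, whose hole occupies the position of an action rather than a general computation. Here I would rely on $E[N]$ being a well-formed term (so $N$ is syntactically an action whenever the hole sits under a $\calcwd{try}$) and on \textsc{T-Try} demanding identical pre- and post-conditions and result type for both clauses, so that substituting an $N$ carrying the same $B$ and $T'$ leaves the surrounding rule application intact. Throughout, I would verify that the follows-annotation $\loctyc$ is threaded unchanged through every rule instance, which holds because none of the rules encountered along an evaluation context alters it.
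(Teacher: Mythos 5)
Your proposal is correct and follows exactly the paper's proof, which consists of the single line ``By induction on the structure of $E$''; your case analysis (hole, $\calcwd{try}$ frame via \textsc{T-Try}, and $\calcwd{let}$ spine via \textsc{T-Let} plus the induction hypothesis) is precisely the expansion of that induction. Your side remarks---that $\Gamma$, $S$, and $\loctyc$ are threaded unchanged along the context spine, and that the $\calcwd{try}$ case is what forces the precondition to be fixed (in contrast to Lemma~\ref{lem:pure-replacement} for pure contexts)---are accurate and capture why the lemma is stated the way it is.
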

\begin{proof}
  By induction on the structure of $E$.
\end{proof}

Due to the absence of exception handling frames which constrain the
pre-condition to match that of the failure continuation,
pure contexts admit a more liberal replacement lemma.

\begin{lemma}[Replacement (Pure contexts)]\label{lem:pure-replacement}
  If:

  \begin{enumerate}
    \item $\deriv{D}$ is a derivation of $\tseq[\loctyc]{\Gamma}{S}{\ep[M]}{A}{T}$
    \item $\deriv{D}'$ is a subderivation of $\deriv{D}$ concluding
      $\tseq[\loctyc]{\Gamma}{S}{M}{B}{T'}$, where the position of $\deriv{D}'$ in
      $\deriv{D}$ corresponds to that of the hole in $\ep$
    \item $\tseq[\loctyc]{\Gamma}{S'}{N}{B}{T'}$
  \end{enumerate}

  Then $\tseq[\loctyc]{\Gamma}{S'}{\ep[N]}{A}{T}$.
\end{lemma}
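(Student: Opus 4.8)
The plan is to proceed by induction on the structure of the pure context $\ep$, mirroring the proof of the preceding replacement lemma (Lemma~\ref{lem:replacement}) but carefully tracking how the precondition is allowed to change. The only difference between the two statements is that here $N$ is typed under a precondition $S'$ that need not coincide with the precondition $S$ of $M$, and the conclusion threads this $S'$ through to $\ep[N]$; the postcondition $T'$ at the hole, by contrast, is still required to match. The induction will make precise that pure contexts are exactly the contexts in which the leading precondition can be altered freely while leaving every downstream judgement untouched.

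For the base case $\ep = [~]$, the hole sits at the root of $\deriv{D}$, so $\deriv{D}' = \deriv{D}$ and hence $B = A$ and $T' = T$. Assumption (3) then reads $\tseq[\loctyc]{\Gamma}{S'}{N}{A}{T}$, which is precisely the required $\tseq[\loctyc]{\Gamma}{S'}{\ep[N]}{A}{T}$. For the inductive case $\ep = \efflet{x}{\ep'}{M_0}$, the derivation $\deriv{D}$ must end in \textsc{T-Let}, with premises $\tseq[\loctyc]{\Gamma}{S}{\ep'[M]}{C}{S''}$ and $\tseq[\loctyc]{\Gamma, x : C}{S''}{M_0}{A}{T}$ for some type $C$ and session type $S''$. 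Since the hole of $\ep$ lies within $\ep'$, the subderivation $\deriv{D}'$ occurs inside the derivation of the first premise, at the position of the hole of $\ep'$. Applying the induction hypothesis to that premise — with $N$ and its precondition $S'$ supplied by assumption (3) — yields $\tseq[\loctyc]{\Gamma}{S'}{\ep'[N]}{C}{S''}$, where crucially the value type $C$ and the postcondition $S''$ are unchanged. Re-applying \textsc{T-Let} with the untouched second premise then gives $\tseq[\loctyc]{\Gamma}{S'}{\efflet{x}{\ep'[N]}{M_0}}{A}{T}$, as required.

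The substantive point — and the reason the general replacement lemma (Lemma~\ref{lem:replacement}) cannot permit $S' \neq S$ — concerns how a changed precondition propagates through each frame. Under \textsc{T-Let}, the postcondition $S''$ of the first premise, which is precisely the precondition consumed by the continuation $M_0$, is preserved by the induction hypothesis; only the \emph{leading} precondition of the first premise, and hence of the whole $\calcwd{let}$, is altered, so the modification never reaches $M_0$ or any subsequent premise. A general evaluation context $E$ may, however, contain a top-level frame $\trycatch{[~]}{M}$, and rule \textsc{T-Try} requires the handler $M$ and the guarded action to share the \emph{same} precondition; changing the precondition of the guarded subterm would invalidate the handler's premise. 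The absence of exception-handling frames in $\ep$ is exactly what lifts this restriction. Consequently the only real bookkeeping obligation is to confirm that the intermediate session types in the let-chain are carried through unchanged by the induction — which the induction hypothesis supplies directly — so I expect no serious obstacle beyond this careful tracking of pre- and post-conditions.
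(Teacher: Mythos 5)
Your proof is correct and follows essentially the same route as the paper, which proves this lemma by induction on the structure of $\ep$ with the observation that the precondition is unconstrained in the absence of exception-handling frames. Your write-up simply fleshes out that one-line argument, correctly handling the base and \textsc{T-Let} cases and correctly identifying \textsc{T-Try}'s shared-precondition requirement as the reason the general replacement lemma cannot permit $S' \neq S$.
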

\begin{proof}
  By induction on the structure of $\ep$, noting that $\loctya$ is not constrained
  by exception handling frames.
\end{proof}

\configpres*
\begin{proof}
  By induction on the derivation of $\config{C} \ceval \config{C}'$.
  Where there is a choice of whether $\connstate = \bot$ or $\connstate =
  \sessindexroles{s}{\prole}{\rolesetq}{\loctya}$, we show the latter case; the technique for
  proving the former case is identical.

  \begin{proofcase}{E-Loop}

    \[
      \eactor{a}{\effreturn{V}}{\bot}{M} \ceval
      \eactor{a}{M}{\bot}{M}
    \]

    Assumption:

    \begin{mathparsmall}
      \inferrule*
      {
        a : \pidty{\loctyb} \in \Gamma \\
        \loctya = \loctyb \vee \loctya = \localend \\
        \tseq[\loctyb]{\Gamma}{\loctya}{\effreturn{V}}{\tyb}{\localend} \\
        \inferrule*
        { \tseq[\loctyb]{\Gamma}{\loctyb}{M}{\tya}{\localend} }
        { \bseq[\loctyb]{\Gamma}{M} }
      }
      { \cseq{\Gamma}{a : \loctyb}{\eactor{a}{\effreturn{V}}{\bot}{M}} }
    \end{mathparsmall}

    Recomposing:

    \begin{mathparsmall}
      \inferrule*
      {
        a : \pidty{\loctyb} \in \Gamma \\
        \tseq[\loctyb]{\Gamma}{\loctyb}{M}{\tya}{\localend} \\
        \inferrule*
        { \tseq[\loctyb]{\Gamma}{\loctyb}{M}{\tya}{\localend} }
        { \bseq[\loctyb]{\Gamma}{M}}
      }
      { \cseq{\Gamma}{a : \loctyb}{\eactor{a}{M}{\bot}{M}} }
    \end{mathparsmall}

    as required.
  \end{proofcase}

  \begin{proofcase}{E-New}
    \[
      \eactor{a}{E[\new{u}]}{\connstate}{\behaviour} \ceval
      (\nu b) (\eactor{a}{E[\effreturn{b}]}{\connstate}{\behaviour} \parallel
     \eactor{b}{M}{\bot}{M})
    \]
    where $b$ is fresh, and $u.\mkwd{behaviour} = M$.

    Assumption:
    \begin{mathparsmall}
      \inferrule*
      {
        a : \pidty{\loctya} \in \Gamma \\
        \tseq[\loctya]{\Gamma}{\loctya}{E[\new{u}]}{\tya}{\localend} \\
        \bseq[\loctya]{\Gamma}{\behaviour_1}
      }
      { \cseq
          {\Gamma}
          {a : \loctya, \sessindexroles{s}{\prole}{\rolesetq}{\loctya'}}
          {\eactor{a}{E[\new{u}]}{\newconnsess{s}{\prole}{\rolesetq}}{\behaviour_1}}
      }
    \end{mathparsmall}

    By Lemma~\ref{lem:subterm-typability}:
    \begin{mathparsmall}
      \inferrule*
      { u.\mkwd{sessionType} = \loctyb }
      { \tseq[\loctya]{\Gamma}{\loctya'}{\new{u}}{\pidty{\loctyb}}{\loctya'} }
    \end{mathparsmall}

    By Lemma~\ref{lem:replacement}, $\tseq[\loctya]{\Gamma, b : \pidty{\loctyb}}{\loctyb}{\effreturn{b}}{\pidty{\loctyb}}{\loctya'}$.

    Let $\Gamma' = \Gamma, b : \pidty{\loctyb}$.
    Note that by weakening, everything typable under $\Gamma$ is also typable under $\Gamma'$.

    By \textsc{T-Def}:
    \begin{mathparsmall}
      \inferrule*
      { \tseq[\loctyb]{\cdot}{\loctyb}{M}{\tyb}{\localend} }
      { \dseq{\actordef{u}{\loctyb}{M}} }
    \end{mathparsmall}

    Again by weakening, $\tseq[\loctyb]{\Gamma'}{\loctyb}{M}{\tyb}{\localend}$.

    Recomposing:
    \begin{mathparsmall}
      \inferrule*
      {
        \inferrule*
        {
          \inferrule*
          {
            a : \pidty{\loctya} \in \Gamma' \\\\
            \tseq[\loctya]{\Gamma'}{\loctya}{E[\effreturn{b}]}{\tya}{\localend} \\
            \bseq[\loctya]{\Gamma'}{\behaviour_1}
          }
          { \cseq
            {\Gamma' }
            {a : \loctya, \sessindexroles{s}{\prole}{\rolesetq}{\loctya'}}
              {\eactor{a}{E[\effreturn{b}]}{\newconnsess{s}{\prole}{\rolesetq}}{\behaviour_1}}
          }
          \\
          \inferrule*
          {
            b : \pidty{\loctyb} \in \Gamma' \\
            \tseq[\loctyb]{\Gamma'}{\loctyb}{M}{\tyb}{\localend} \\
            \inferrule*
            { \tseq[\loctyb]{\Gamma'}{\loctyb}{M}{\tyb}{\localend} }
            { \bseq[\loctyb]{\Gamma'}{M} }
          }
          { \cseq
              {\Gamma' }
              { b : \loctyb}
              {\eactor{b}{M}{\bot}{M}}
          }
        }
        {
          \cseq
          { \Gamma' }
          { a : \loctya, b : \loctya }
          { \eactor{a}{E[\effreturn{b}]}{\newconnsess{s}{\prole}{\rolesetq}}{\behaviour_1}
            \parallel \eactor{b}{M}{\bot}{M}
          }
        }
      }
      { \cseq{\Gamma}{a : \loctya }{
        (\nu b)(\eactor{a}{E[\effreturn{b}]}{\newconnsess{s}{\prole}{\rolesetq}}{\behaviour_1}
            \parallel \eactor{b}{M}{\bot}{M}) }
      }
    \end{mathparsmall}

    as required.
  \end{proofcase}

  \begin{proofcase}{E-Replace}

    \[
      \eactor{a}{E[\replace{b}{\behaviour'_2}]}{\connstate}{\behaviour_1} \parallel
      \eactor{b}{M}{\connstate}{\behaviour_2}
    \]

    Assumption:

    \begin{mathparsmall}
      \inferrule*
      {
        \inferrule*
        {
          a : \pidty{\loctyc_a} \in \Gamma \\\\
          \tseq[\loctyc_a]{\Gamma}{\loctya}{E[\replace{b}{\behaviour'_2}]}{\tya}{\localend} \\
          \bseq[\loctyc_a]{\Gamma}{\behaviour_1}
        }
        {
          \cseq
            {\Gamma}
            {a : \loctyc_a,
            \sessindexroles{s}{\prole}{\loctya}{\rolesetr}}
            {\eactor{a}{E[\replace{b}{\behaviour'_2}]}{\connstate}{\newconnsess{s}{\prole}{\rolesetr}}}
        }
        \\
        \inferrule*
        {
          b : \pidty{\loctyc_b} \in \Gamma \\
          \tseq[\loctyc_b]{\Gamma}{\loctya}{M}{\tyb}{\localend} \\
          \bseq[\loctyc_b]{\Gamma}{\behaviour_2}
        }
        {
          \cseq
            {\Gamma}
            {b : \loctyc_b, \sessindexroles{t}{\qrole}{\rolesets}{\loctyb}{\rolesets}}
            { \eactor{b}{M}{\newconnsess{t}{\qrole}{\rolesets}}{\behaviour_2} }
        }
      }
      { \cseq
          {\Gamma}
          {a : \loctyc_a, b : \loctyc_b,
            \sessindexroles{s}{\prole}{\rolesetr}{\loctya},
          \sessindexroles{t}{\qrole}{\rolesets}{\loctyb}}
           {\eactor{a}{E[\replace{b}{\behaviour'_2}]}{\connstate}{\newconnsess{s}{\prole}{\rolesetr}} \parallel
             \eactor{b}{M}{\newconnsess{t}{\qrole}{\rolesets}}{\behaviour_2}
          }
      }
    \end{mathparsmall}

    By Lemma~\ref{lem:subterm-typability}:

    \begin{mathparsmall}
      \inferrule*
      {
        \bseq[\loctyc_b]{\Gamma}{\behaviour'_2} \\
        \vseq{\Gamma}{b}{\pidty{\loctyc_b}}
      }
      { \tseq[\loctyc_a]{\Gamma}{\loctya}{\replace{b}{\behaviour'_2}}{\one}{\loctya} }
    \end{mathparsmall}

    By Lemma~\ref{lem:replacement},
    $\tseq[\loctyc_a]{\Gamma}{\loctya}{E[\effreturn{()}]}{\tya}{\localend}$.

    Recomposing:

    \begin{mathparsmall}
      \inferrule*
      {
        \inferrule*
        {
          a : \pidty{\loctyc_a} \in \Gamma \\\\
          \tseq[\loctyc_a]{\Gamma}{\loctya}{E[\effreturn{()}]}{\tya}{\localend} \\
          \bseq[\loctyc_a]{\Gamma}{\behaviour_1}
        }
        {
          \cseq
            {\Gamma}
            {a : \loctyc_a, \sessindexroles{s}{\prole}{\loctya}{\rolesetr}}
            {\eactor{a}{E[\effreturn{()}]}{\connstate}{\newconnsess{s}{\prole}{\rolesetr}}}
        }
        \\
        \inferrule*
        {
          b : \pidty{\loctyc_b} \in \Gamma \\
          \tseq[\loctyc_b]{\Gamma}{\loctya}{M}{\tyb}{\localend} \\
          \bseq[\loctyc_b]{\Gamma}{\behaviour'_2}
        }
        {
          \cseq
            {\Gamma}
            {b : \loctyc_b, \sessindexroles{t}{\qrole}{\loctyb}{\rolesetr}}
            { \eactor{b}{M}{\newconnsess{t}{\qrole}{\rolesets}}{\behaviour'_2} }
        }
      }
      { \cseq
          {\Gamma}
          {a : \loctyc_a, b : \loctyc_b,
            \sessindexroles{s}{\prole}{\rolesetr}{\loctya},
          \sessindexroles{t}{\qrole}{\rolesets}{\loctyb}}
           {\eactor{a}{E[\effreturn{()}]}{\connstate}{\newconnsess{s}{\prole}{\rolesetr}} \parallel
             \eactor{b}{M}{\newconnsess{t}{\qrole}{\rolesets}}{\behaviour'_2}
          }
      }
    \end{mathparsmall}

    as required.
  \end{proofcase}

  \begin{proofcase}{E-ReplaceSelf}

    \[
      \eactor{a}{E[\replace{a}{\behaviour'}]}{\connstate}{\behaviour} \ceval \eactor{a}{E[\effreturn{()}]}{\connstate}{\behaviour'}
    \]

    Assumption:

    \begin{mathparsmall}
      \inferrule*
      {
        a : \pidty{\loctyb} \in \Gamma \\
        \tseq[\loctyb]{\Gamma}{\loctya}{E[\replace{a}{\behaviour'}]}{A}{\localend} \\
        \bseq[\loctyb]{\Gamma}{\behaviour}
      }
      { \cseq
          {\Gamma}
          { a : \loctyb, \sessindexroles{s}{\prole}{\rolesetq}{\loctya}}
          {
          \eactor{a}{E[\replace{a}{\behaviour'}]}{\newconnsess{s}{\prole}{\rolesetq}}{\behaviour}}
      }
    \end{mathparsmall}

    By Lemma~\ref{lem:subterm-typability}:

    \begin{mathparsmall}
      \inferrule
      {
        \bseq[\loctyb]{\Gamma}{\behaviour'}
        \\
        \vseq{\Gamma}{a}{\pidty{\loctyb}}
      }
      { \tseq[\loctyb]{\Gamma}{\loctya}{\replace{a}{\behaviour'}}{\one}{\loctya} }
    \end{mathparsmall}

    (noting that $\vseq{\Gamma}{b}{\pidty{\loctyb}}$ because $a :
    \pidty{\loctyb} \in \Gamma$, as per the \textsc{T-ConnectedActor} and
    \textsc{T-DisconnectedActor} rules).

    By Lemma~\ref{lem:replacement},
    $\tseq[\loctyb]{\Gamma}{\loctya}{E[\effreturn{()}]}{\tya}{\localend}$.

    Recomposing:
    \begin{mathparsmall}
      \inferrule*
      {
        a : \pidty{\loctyb} \in \Gamma \\
        \tseq[\loctyb]{\Gamma}{\loctya}{E[\effreturn{()}]}{A}{\localend} \\
        \bseq[\loctyb]{\Gamma}{\behaviour'}
      }
      { \cseq
          {\Gamma}
          {a : \loctyb, \sessindexroles{s}{\prole}{\loctya}{\rolesetq} }
          { \eactor{a}{E[\effreturn{()}]}{\newconnsess{s}{\prole}{\rolesetq}}{\behaviour'} }
      }
    \end{mathparsmall}

    as required.
  \end{proofcase}

  \begin{proofcase}{E-Discover}

    \[
      \eactor{a}{E[\newdiscover{\loctyb}]}{\sigma_1}{\behaviour_1}
        \parallel
      \eactor{b}{M}{\sigma_2}{\behaviour_2}
        \ceval
      \eactor{a}{E[\effreturn{b}]}{\sigma_1}{\behaviour_1}
        \parallel
      \eactor{b}{M}{\sigma_2}{\behaviour_2}
    \]

    where $b.\mkwd{sessionType} = \loctyb$ and
    $\neg (M = \effreturn{V} \wedge \behaviour_2 = \bstop)$.

    Assumption:
    \begin{mathparsmall}
      \inferrule*
      {
        \inferrule*
        {
          a : \pidty{\loctya} \in \Gamma \\\\
          \tseq[\loctya]{\Gamma}{\loctya'}{E[\newdiscover{\loctyb}]}{\tya}{\localend} \\
          \bseq[\loctya]{\Gamma}{\behaviour_1}
        }
        {
          \cseq
          { \Gamma }
          { a : \loctya, \sessindexroles{s}{\prole}{\rolesetr}{\loctya'} }
          {
          \eactor{a}{E[\newdiscover{\loctyb}]}{\newconnsess{s}{\prole}{\rolesetr}}{\behaviour_1} }
        }
        \\
        \inferrule*
        {
          b : \pidty{\loctyb} \in \Gamma \\
          \tseq[\loctyb]{\Gamma}{\loctyb'}{M}{\tyb}{\localend} \\
          \bseq[\loctyb]{\Gamma}{\behaviour_2}
        }
        {
          \cseq
          { \Gamma }
          { b : \loctyb, \sessindexroles{t}{\qrole}{\loctyb'} }
          { \eactor{b}{M}{\newconnsess{t}{\qrole}{\rolesets}}{\behaviour_2} }
        }
      }
      {
        \cseq
        { \Gamma }
        { a : \loctya, b : \loctyb,
          \sessindexroles{s}{\prole}{\rolesetr}{\loctya'},
          \sessindexroles{t}{\qrole}{\rolesets}{\loctyb'} }
        {
          \eactor{a}{E[\newdiscover{S}]}{\sigma_1}{\behaviour_1}
            \parallel
          \eactor{b}{M}{\sigma_2}{\behaviour_2}
        }
      }
    \end{mathparsmall}

    By Lemma~\ref{lem:subterm-typability}:

    \begin{mathparsmall}
      \inferrule*
      { }
      { \tseq{\Gamma}{\loctya'}{\newdiscover{\loctyb}}{\pidty{\loctyb}}{\loctya'} }
    \end{mathparsmall}

    Since $b : \pidty{\loctyb} \in \Gamma$, we can show
    $\tseq{\Gamma}{\loctya'}{\newdiscover{\loctyb}}{\pidty{\loctyb}}{\loctya'}$.

    By Lemma~\ref{lem:replacement}, $\tseq{\Gamma}{\loctya'}{E[\effreturn{b}]}{\tya}{\loctya'}$.

    Thus, recomposing:

    \begin{mathparsmall}
      \inferrule*
      {
        \inferrule*
        {
          a : \pidty{\loctya} \in \Gamma \\
          \tseq[\loctya]{\Gamma}{\loctya'}{E[\effreturn{b}]}{\tya}{\localend} \\
          \bseq[\loctya]{\Gamma}{\behaviour_1}
        }
        {
          \cseq
          { \Gamma }
          { a : \loctya, \sessindexroles{s}{\prole}{\rolesetr}{\loctya'} }
          {
          \eactor{a}{E[\newdiscover{\loctyb}]}{\newconnsess{s}{\prole}{\rolesetr}}{\behaviour_1} }
        }
        \\
        \inferrule*
        {
          b : \pidty{\loctyb} \in \Gamma \\
          \tseq[\loctyb]{\Gamma}{\loctyb'}{M}{\tyb}{\localend} \\
          \bseq[\loctyb]{\Gamma}{\behaviour_2}
        }
        {
          \cseq
          { \Gamma }
          { b : \loctyb, \sessindexroles{t}{\qrole}{\rolesets}{\loctyb'} }
          { \eactor{b}{M}{\newconnsess{t}{\qrole}{\rolesets}}{\behaviour_2} }
        }
      }
      {
        \cseq
        { \Gamma }
        { a : \loctya, b : \loctyb,
        \sessindexroles{s}{\prole}{\rolesetr}{\loctya'},
        \sessindexroles{s}{\qrole}{\rolesets}{\loctyb'} }
        {
          \eactor{a}{E[\effreturn{b}]}{\newconnsess{s}{\prole}{\rolesetr}}{\behaviour_1}
            \parallel
            \eactor{b}{M}{\newconnsess{t}{\qrole}{\rolesets}}{\behaviour_2}
        }
      }
    \end{mathparsmall}

    as required.
  \end{proofcase}

  \begin{proofcase}{E-Self}

    \[
      \eactor{a}{E[\self]}{\connstate}{\behaviour} \ceval
      \eactor{a}{E[\self]}{\connstate}{\behaviour}
    \]

    Assumption:

    \begin{mathparsmall}
      \inferrule*
      {
        a : \pidty{\loctyb} \in \Gamma \\
        \tseq[\loctyb]{\Gamma}{\loctya}{E[\self]}{\tya}{\loctya}
      }
      { \cseq
          {\Gamma}
          {a : \loctyb, \sessindexroles{s}{\prole}{\rolesetq}{\loctya}}
          {\eactor{a}{E[\self]}{\newconnsess{s}{\prole}{\rolesetq}}{\behaviour}}
      }
    \end{mathparsmall}

    By Lemma~\ref{lem:subterm-typability}:

    \begin{mathparsmall}
      \inferrule*
      { }
      { \tseq[\loctyb]{\Gamma}{\loctya}{\self}{\pidty{\loctyb}}{\loctya} }
    \end{mathparsmall}

    We can show:

    \begin{mathparsmall}
      \inferrule*
      { \vseq{\Gamma}{a}{\pidty{\loctyb}} }
      { \tseq[\loctyb]{\Gamma}{\loctya}{\effreturn{a}}{\pidty{\loctyb}}{\loctya} }
    \end{mathparsmall}

    By Lemma~\ref{lem:replacement},
    $\tseq[\loctyb]{\Gamma}{\loctya}{E[\effreturn{a}]}{\tya}{\localend}$.

    Recomposing:

    \begin{mathparsmall}
      \inferrule*
      {
        a : \pidty{\loctyb} \in \Gamma \\
        \tseq[\loctyb]{\Gamma}{\loctya}{E[\effreturn{a}]}{\tya}{\loctya}
      }
      { \cseq
          {\Gamma}
          {a : \loctyb, \sessindexroles{s}{\prole}{\rolesetq}{\loctya}}
          {\eactor{a}{E[\effreturn{a}]}{\newconnsess{s}{\prole}{\rolesetq}}{\behaviour}}
      }
    \end{mathparsmall}

    as required.
  \end{proofcase}

 \begin{proofcase}{E-ConnInit}

   \[
     \bl
     \eactor{a}{E[F[\newconn{\ell_j}{V}{b}{\qrole}]]}{\bot}{\behaviour_1} \parallel
       \eactor{b}{E'[F[\newaccept{\prole}{\ell_i(x_i) \mapsto M_i}_{i \in
       I}]]}{\bot}{\behaviour_2} \ceval \\
(\nu s) (\eactor{a}{E[\effreturn{()}]}{\newconnsess{s}{\prole}{\qrole}}{\behaviour_1} \parallel
         \eactor{b}{E'[M_j \{ V / x_i \}]}{\newconnsess{s}{\qrole}{\prole}}{\behaviour_2})
       \el
   \]

   Assumption:

   {\small
   \begin{mathpar}
     \inferrule*
     {
       \inferrule*
       {
         a : \pidty{\loctyc_a} \in \Gamma \\
         \loctya = \loctyc_a \vee \loctya = \localend \\\\
         \tseq[\loctyc_a]{\Gamma}{\loctya}{E[\newconn{\ell_j}{V}{b}{\qrole}]}{A}{\localend}
         \\\\
         \bseq[\loctyc_a]{\Gamma}{\behaviour_1}
       }
       { \cseq
           { \Gamma }
           { a : \loctyc_a }
           { \eactor{a}{E[\newconn{\ell_j}{V}{b}{\qrole}]}{\bot}{\behaviour_1} }
       }
       \\
       \inferrule*
       {
         b : \pidty{\loctyc_b} \in \Gamma \\
         \loctyb = \loctyc_b \vee \loctyb = \localend \\\\
         \tseq[\loctyc_b]{\Gamma}{\loctyb}
           {E'[\newaccept{\prole}{\ell_i(x_i) \mapsto M_i}_{i \in I}}{B}{\localend}
         \\\\
         \bseq[\loctyc_b]{\Gamma}{\behaviour_2}
       }
       {
         \cseq
           { \Gamma }
           { b : \loctyc_b }
           { \eactor{b}{E'[\newaccept{\prole}{\ell_i(x_i) \mapsto M_i}_{i \in I}]}{\bot}{\behaviour_2} }
       }
     }
     {
       \cseq
         {\Gamma}
         {a : \loctyc_a, b : \loctyc_b }
         { \eactor{a}{E[\newconn{\ell_j}{V}{b}{\qrole}]}{\bot}{\behaviour_1} \parallel
           \eactor{b}{E'[\newaccept{\prole}{\ell_i(x_i) \mapsto M_i}_{i \in I}]}{\bot}{\behaviour_2}
         }
     }
   \end{mathpar}
   }

   By Lemma~\ref{lem:subterm-typability}, by case analysis on $F$, wee either
   have:

   \begin{mathparsmall}
     \inferrule*
     {
       \localtyconn{\qrole}{\ell_j}{A_j} \then \loctya'_j \in \{ \loctya_i \}_i \\
\vseq{\Gamma}{b}{\loctyb_b} \\ \ty{\qrole} = \loctyb_b
     }
     { \tseq[\loctyc_a]{\Gamma}{\sumact{i \in I}{\loctya_i}}{\newconn{\ell_j}{V}{b}{\qrole}}{\one}{\loctya'_j} }
   \end{mathparsmall}

   or

  \begin{mathparsmall}
    \inferrule*
    {
       \inferrule*
       {
         \localtyconn{\qrole}{\ell_j}{A_j} \then \loctya'_j \in \{ \loctya_i \}_i \\
\vseq{\Gamma}{b}{\loctyb_b} \\ \ty{\qrole} = \loctyb_b
       }
       { \tseq[\loctyc_a]{\Gamma}{\sumact{i \in I}{\loctya_i}}{\newconn{\ell_j}{V}{b}{\qrole}}{\one}{\loctya'_j} }
       \\
       \tseq[\loctyc_a]{\Gamma}{\sumact{i \in I}{\loctya_i}}{M}{\one}{\loctya'_j}
    }
    {
       \tseq[\loctyc_a]{\Gamma}{\sumact{i \in I}{\loctya_i}}{
         \trycatch
           {\newconn{\ell_j}{V}{b}{\qrole}}
           {M}
       }{\one}{\loctya'_j}
    }
   \end{mathparsmall}.

   Since the evaluation context will be discarded, WLOG we proceed assuming that
   $F = [~]$.

   As $b : \pidty{\loctyc_b} \in \Gamma$, we have that $\loctyb_b = \loctyc_b$
   and therefore that $\ty{\qrole} = \loctyc_b$.

   Again by Lemma~\ref{lem:subterm-typability},
   $
    \tseq[\loctyc_b]
         {\Gamma}
         {\sumact{i \in I}{\localtyaccept{\prole}{\ell_k}{C_k}\then{\loctyb_k}}}
         {\newaccept{\prole}{\ell_i(x_k) \mapsto M_k}_{k \in K}}
         {B'}
         {\loctyb'}
   $.

   By case analysis on $F'$, we either have:

   \begin{mathparsmall}
     \inferrule*
     {
       (\tseq[\loctyc_b]
           {\Gamma, x : C_k}{\loctyb_k}{M_k}{\tyb'}{\loctyb'})_{k \in K}
     }
     {
       \tseq[\loctyc_b]
         {\Gamma}
         {\sumact{i \in I}{\localtyaccept{\prole}{\ell_k}{C_k}\then{\loctyb_k}}}
         {\newaccept{\prole}{\ell_i(x_k) \mapsto M_k}_{k \in K}}
         {B'}
         {\loctyb'}
     }
   \end{mathparsmall}

   or

   \begin{mathparsmall}
     \inferrule*
     {
       \inferrule*
       {
         (\tseq[\loctyc_b]
             {\Gamma, x : C_k}{\loctyb_k}{M_k}{\tyb'}{\loctyb'})_{k \in K}
       }
       {
         \tseq[\loctyc_b]
           {\Gamma}
           {\sumact{i \in I}{\localtyaccept{\prole}{\ell_k}{C_k}\then{\loctyb_k}}}
           {\newaccept{\prole}{\ell_i(x_k) \mapsto M_k}_{k \in K}}
           {B'}
           {\loctyb'}
       } \\
       \tseq[\loctyc_b]
         {\Gamma}
         {\sumact{i \in I}{\localtyaccept{\prole}{\ell_k}{C_k}\then{\loctyb_k}}}
         {M}
         {A}
         {\loctyb'}
     }
     {
        \tseq[\loctyc_b]
           {\Gamma}
           {\sumact{i \in I}{\localtyaccept{\prole}{\ell_k}{C_k}\then{\loctyb_k}}}
           {\trycatch{(\newaccept{\prole}{\ell_i(x_k) \mapsto M_k}_{k \in K})}{M}}
           {B'}
           {\loctyb'}
     }
   \end{mathparsmall}

   Again, we consider the first case.

   Thus, $\loctyc_b = \newaccept{\prole}{\ell_i(x_k) \mapsto M_k}_{k \in K}$.

   We now need to introduce the new runtime typing environment for the new session.
   We begin with a singleton runtime typing environment $\{
   \sessindexroles{s}{\prole}{\emptyset}{\loctyc_a} \}$, and recall that
   $\loctyc_a = \{ \sumact{i \in I}{\loctya_i} \}$ and
   $\localtyconn{\qrole}{\ell_j}{A_j} \then \loctya'_j \in \{ \loctya_i \}_i$.

   Since $\config{C}$ is defined wrt.\ a well-formed program and thus protocols
   are well-formed, $\prole$ is a unique initiator, and we know that $\loctyc_a =
   \ty{\prole}$ and $\safe{ \{ \sessindexroles{s}{\prole}{\emptyset}{\loctyc_a} \}}$.

   As a consequence of safety, we know that $C_j = A_j$.

   We can then show a reduction on typing environments:

   \begin{mathparsmall}
     \inferrule*
     {
       \inferrule*
       {
         \exists j \in I . \loctya_j = \localtyconn{\qrole}{\ell_j}{\tya_j} \then \loctya'_j \\
         \ty{\qrole} = \localtyaccept{\prole}{\ell_k}{C_k}\then{\loctyb_k} \\
         j \in K \\
         C_j = A_j
       }
       { \sessindexroles{s}{\prole}{\emptyset}{\sumact{i \in I}{S_i}}
         \syncannarrow{\ltsconnpair{s}{\prole}{\qrole}{\ell_j}}
         \sessindexroles{s}{\prole}{\qrole}{\loctya'_j},
         \sessindexroles{s}{\qrole}{\prole}{\loctyb_j}
       }
     }
     {
        \sessindexroles{s}{\prole}{\emptyset}{\sumact{i \in I}{S_i}}
         \syncannarrow{\ltsconnpair{s}{\prole}{\qrole}{\ell_j}}
         \sessindexroles{s}{\prole}{\qrole}{\loctya'_j},
         \sessindexroles{s}{\qrole}{\prole}{\loctyb_j}
     }
   \end{mathparsmall}

   Since
   $
       \sessindexroles{s}{\prole}{\emptyset}{\sumact{i \in I}{S_i}}
       \ceval
       \sessindexroles{s}{\prole}{\qrole}{\loctya'_j},
       \sessindexroles{s}{\qrole}{\prole}{\loctyb_j}
   $, it follows by safety that
   $\safe{\sessindexroles{s}{\prole}{\qrole}{\loctya'_j},
   \sessindexroles{s}{\qrole}{\prole}{\loctyb_j}}$.

   Noting that only top-level frames (i.e., $F, F'$) can contain exception-handling frames,
     $E, E'$ are pure.
     Thus, we can show
     $\tseq[\loctyc_a]{\Gamma}{\loctya'_j}{\effreturn{()}}{\tya}{\loctya'_j}$ and so
     by Lemma~\ref{lem:pure-replacement},
     $\tseq[\loctyc_a]{\Gamma}{\loctya'_j}{E[\effreturn{()}]}{\one}{\localend}$.

     Similarly, we can show
     $
     \tseq[\loctyc_b]
         {\Gamma}
         {\loctyb_j}
         {
           M_j \{ V / x_j \}
         }
         {B'}
         {\loctyb'}
     $
     and so by Lemma~\ref{lem:pure-replacement},
     $\tseq[\loctyc_b]{\Gamma}{\loctyb_j}{E'[M_j \{ V / x_j \}]}{\tyb}{\localend}$.

     Recomposing:

 {\small
   \begin{mathparsmall}
     \inferrule*
     {
       \inferrule*
       {
         \inferrule*
         {
           a : \pidty{\loctyc_a} \in \Gamma \\\\
           \tseq[\loctyc_a]{\Gamma}{\loctya'_j}{E[\effreturn{()}]}{A}{\localend}
           \\
           \bseq[\loctyc_a]{\Gamma}{\behaviour_1}
         }
         { \cseq
             { \Gamma }
             { a : \loctyc_a, \sessindexroles{s}{\prole}{\qrole}{\loctya'_j} }
             {
               \eactor{a}{E[\effreturn{()}]}{\newconnsess{s}{\prole}{\qrole}}{\behaviour_1}
             }
         }
         \\
         \inferrule*
         {
           b : \pidty{\loctyc_b} \in \Gamma \\\\
           \tseq[\loctyc_b]{\Gamma}{\loctyb_j}{E'[M_j \{ V / x_j \}]}{B}{\localend}
           \\
           \bseq[\loctyc_b]{\Gamma}{\behaviour_2}
         }
         {
           \cseq
             { \Gamma }
             { b : \loctyc_b, \sessindexroles{s}{\prole}{\prole}{\loctyb_j} }
             {
               \eactor{b}{E'[M_j \{ V / x_j \}]}{\newconnsess{s}{\qrole}{\prole}}{\behaviour_2}
             }
         }
       }
       {
         \cseq
           {\Gamma}
           {
            a : \loctyc_a, b : \loctyc_b,
             \sessindexroles{s}{\prole}{\qrole}{\loctya'_j},
             \sessindexroles{s}{\qrole}{\prole}{\loctyb_j}
           }
           {
             \eactor{a}{E[\effreturn{()}]}{\newconnsess{s}{\prole}{\qrole}}{\behaviour_1} \parallel
             \eactor{b}{E'[M_j \{ V / x_j
             \}]}{\newconnsess{s}{\qrole}{\prole}{\loctyb_j}}{\behaviour_2}
           }
       }
     }
     {
       \cseq{\Gamma}{a : \loctyc_a, b : \loctyc_b}{(\nu s)(
         \eactor{a}{E[\effreturn{()}]}{\newconnsess{s}{\prole}{\qrole}}{\behaviour_1} \parallel
           \eactor{b}{E'[M_j \{ V / x_j
           \}]}{\newconnsess{s}{\qrole}{\prole}}{\behaviour_2}
       )}
     }
   \end{mathparsmall}
   }

   as required.
  \end{proofcase}

  Note: in the remaining communication cases, we consider the case where the top-level
  frame is empty, since the frame will be discarded in the result.

  \begin{proofcase}{E-Conn}

   \[
     \bl
     \eactor{a}{E[\newconn{\ell_j}{V}{b}{\qrole}]}{\newconnsess{s}{\prole}{\rolesetr}}{\behaviour_1} \parallel
     \eactor{b}{E'[\newaccept{\prole}{\ell_i(x_i) \mapsto M_i}_{i \in
     I}]}{\bot}{\behaviour_2} \ceval \\
\eactor{a}{E[\effreturn{()}]}{\newconnsess{s}{\prole}{\rolesetr, \qrole}}{\behaviour_1} \parallel
     \eactor{b}{E'[M_j \{ V / x_i \}]}{\newconnsess{s}{\qrole}{\prole}}{\behaviour_2}
     \el
   \]

   with $j \in K$.

   Assumption:

   {\small
   \begin{mathparsmall}
     \inferrule*
     {
       \inferrule*
       {
         a : \pidty{\loctyc_a} \in \Gamma \\\\
         \tseq[\loctyc_a]{\Gamma}{\loctya}{E[\newconn{\ell_j}{V}{b}{\qrole}]}{A}{\localend}
         \\
         \bseq[\loctyc_a]{\Gamma}{\behaviour_1}
       }
       { \cseq
           { \Gamma }
           { a : \loctyc_a,  \sessindexroles{s}{\prole}{\rolesetr}{\loctya} }
           { \eactor
               {a}
               {E[\newconn{\ell_j}{V}{b}{\qrole}]}
               {\newconnsess{s}{\prole}{\rolesetr}}
               {\behaviour_1}
           }
       }
       \\
       \inferrule*
       {
         b : \pidty{\loctyc_b} \in \Gamma \\
         \loctyb = \loctyc_b \vee \loctyb = \localend \\\\
         \tseq[\loctyc_b]{\Gamma}{\loctyb}
         {E'[\newaccept{\prole}{\ell_i(x_i) \mapsto M_i}_{i \in I}]}{B}{\localend}
         \\\\
         \bseq[\loctyc_b]{\Gamma}{\behaviour_2}
       }
       {
         \cseq
           { \Gamma }
           { b : \loctyc_b }
           { \eactor{b}{E'[\newaccept{\prole}{\ell_i(x_i) \mapsto M_i}_{i \in
           I}]}{\bot}{\behaviour_2} }
       }
     }
     {
       \cseq
         {\Gamma}
         {a : \loctyc_a, b : \loctyc_b \sessindexroles{s}{\prole}{\rolesetr}{\loctya}}
         { \eactor
             {a}
             {E[\newconn{\ell_j}{V}{b}{\qrole}]}
             {\newconnsess{s}{\prole}{\rolesetr}}
             {\behaviour_1}
           \parallel
             \eactor{b}{E'[\newaccept{\prole}{\ell_i(x_i) \mapsto M_i}_{i \in
             I}]}{\bot}{\behaviour_2}
         }
     }
   \end{mathparsmall}
   }

   where $j \in I$ and $\safe{ a : \loctyc_a, b : \loctyc_b,
   \sessindexroles{s}{\prole}{\rolesetr}{\loctya}}$.

   Consider the subderivation
   $\tseq[\loctyc_a]{\Gamma}{\loctya}{E[\newconn{\ell_j}{V}{b}{\qrole}]}{A}{\localend}$.

   By Lemma~\ref{lem:subterm-typability}:

   \begin{mathparsmall}
     \inferrule*
     {
       \localtyconn{\qrole}{\ell_j}{A_j} \then \loctya'_j \in \{ \loctya_i \}_i \\
\vseq{\Gamma}{b}{\pidty{\loctyb_b}} \\ \ty{\qrole} = \loctyb_b
     }
     { \tseq[\loctyc_a]{\Gamma}{\sumact{i \in I}{\loctya_i}}{\newconn{\ell_j}{V}{b}{\qrole}}{\one}{\loctya'_j} }
   \end{mathparsmall}

   Since $b : \pidty{\loctyc_b} \in \Gamma$, we have that $\loctyb_b =
   \loctyc_b$. We also deduce that $\loctya = \sumact{i \in I}{\loctya_i}$
   where $\localtyconn{\qrole}{\ell_j}{A_j} \then \loctya'_j \in \{ \loctya_i \}_i$.

   Also by Lemma~\ref{lem:subterm-typability}:
   \begin{mathparsmall}
     \inferrule*
     {
       (\tseq[\loctyc_b]
           {\Gamma, x : C_k}{\loctyb_k}{M_k}{\tyb'}{\loctyb'})_{k \in K}
     }
     {
       \tseq[\loctyc_b]
         {\Gamma}
         {\sumact{i \in I}{\localtyaccept{\prole}{\ell_k}{C_k}\then{\loctyb_k}}}
         {\newaccept{\prole}{\ell_i(x_k) \mapsto M_k}_{k \in K}}
         {B'}
         {\loctyb'}
     }
   \end{mathparsmall}

   Thus, $\loctyc_b = \newaccept{\prole}{\ell_i(x_k) \mapsto M_k}_{k \in K}$.

   We can then show a reduction on typing environments:

   \begin{mathparsmall}
     \inferrule*
     {
       \inferrule*
       {
         \exists j \in I . \loctya_j = \localtyconn{\qrole}{\ell_j}{\tya_j} \then \loctya'_j \\
         \ty{\qrole} = \localtyaccept{\prole}{\ell_k}{C_k}\then{\loctyb_k} \\
         j \in K \\
         C_j = A_j
       }
       { \sessindexroles{s}{\prole}{\rolesetr}{\sumact{i \in I}{S_i}}
         \syncannarrow{\ltsconnpair{s}{\prole}{\qrole}{\ell_j}}
         \sessindexroles{s}{\prole}{\rolesetr, \qrole}{\loctya'_j},
         \sessindexroles{s}{\qrole}{\prole}{\loctyb_j}
       }
     }
     {
         \sessindexroles{s}{\prole}{\rolesetr}{\sumact{i \in I}{S_i}}
         \syncannarrow{\ltsconnpair{s}{\prole}{\qrole}{\ell_j}}
         \sessindexroles{s}{\prole}{\rolesetr, \qrole}{\loctya'_j},
         \sessindexroles{s}{\qrole}{\prole}{\loctyb_j}
     }
   \end{mathparsmall}

   (noting that as a consequence of safety, we know that $C_j = A_j$).

     Since
   $
       \sessindexroles{s}{\prole}{\rolesetr}{\sumact{i \in I}{S_i}}
       \ceval
       \sessindexroles{s}{\prole}{\rolesetr, \qrole}{\loctya'_j},
       \sessindexroles{s}{\qrole}{\prole}{\loctyb_j}
   $, it follows by safety that
   $\safe{\sessindexroles{s}{\prole}{\rolesetr, \qrole}{\loctya'_j},
   \sessindexroles{s}{\qrole}{\prole}{\loctyb_j}}$.

     Noting that only top-level frames (i.e., $F, F'$) can contain exception-handling frames,
     $E, E'$ are pure.
     We can show
     $\tseq[\loctyc_a]{\Gamma}{\loctya'_j}{\effreturn{()}}{\tya}{\loctya'_j}$ and so
     by Lemma~\ref{lem:pure-replacement},
     $\tseq[\loctyc_a]{\Gamma}{\loctya'_j}{E[\effreturn{()}]}{\one}{\localend}$.

     Similarly, we can show
     $
     \tseq[\loctyc_b]
         {\Gamma}
         {\loctyb_j}
         {
           M_j \{ V / x_j \}
         }
         {B'}
         {\loctyb'}
     $
     and so by Lemma~\ref{lem:pure-replacement},
     $\tseq[\loctyc_b]{\Gamma}{\loctyb_j}{E'[M_j \{ V / x_j \}]}{\tyb}{\localend}$.

     Recomposing:

   {\small
   \begin{mathparsmall}
     \inferrule*
     {
       \inferrule*
       {
         a : \pidty{\loctyc_a} \in \Gamma \\\\
         \tseq[\loctyc_a]{\Gamma}{\loctya'_j}{E[\newconn{\ell_j}{V}{b}{\qrole}]}{A}{\localend}
         \\
         \bseq[\loctyc_a]{\Gamma}{\behaviour_1}
       }
       { \cseq
           { \Gamma }
           { a : \loctyc_a, \sessindexroles{s}{\prole}{\rolesetr, \qrole}{\loctya'_j} }
           { \eactor
               {a}
               {E[\effreturn{()}]}
               {\newconnsess{s}{\prole}{\rolesetr, \qrole}}
               {\behaviour_1}
           }
       }
       \\
       \inferrule*
       {
         b : \pidty{\loctyc_b} \in \Gamma \\
         \tseq[\loctyc_b]{\Gamma}{\loctyb_j}{E'[M_j \{ V / x_j \}]}{\tyb}{\localend}
         \\\\
         \bseq[\loctyc_b]{\Gamma}{\behaviour_2}
       }
       {
         \cseq
           { \Gamma }
           { b : \loctyc_b, \sessindexroles{s}{\qrole}{\prole}{\loctyb_j} }
           { \eactor{b}{E'[M_j \{ V / x_j \}]}{\newconnsess{s}{\qrole}{\prole}}{\behaviour_2} }
       }
     }
     {
       \cseq
         {\Gamma}
         {a : \loctyc_a, b : \loctyc_b, \sessindexroles{s}{\prole}{\rolesetr,
         \qrole}{\loctya'_j}, \sessindexroles{s}{\qrole}{\prole}{\loctyb_j}}
         { \eactor
             {a}
             {E[\effreturn{()}]}
             {\newconnsess{s}{\prole}{\rolesetr, \qrole}}
             {\behaviour_1}
           \parallel
           \eactor{b}{E'[M_j \{ V / x_j \}]}{\newconnsess{s}{\qrole}{\prole}}{\behaviour_2}
         }
     }
   \end{mathparsmall}
   }

   with $\safe{a : \loctyc_a, b : \loctyc_b,  \sessindexroles{s}{\prole}{\rolesetr, \qrole}{\loctya'_j},
   \sessindexroles{s}{\qrole}{\prole}{\loctyb_j}}$
   as required.
  \end{proofcase}

  \begin{proofcase}{E-Comm}

    Let $\rtenv_1 =  a : \loctyc_a, \sessindexroles{s}{\prole}{\rolesetr}{\loctya_a} $ and
    $\rtenv_2 =  b : \loctyc_b, \sessindexroles{s}{\qrole}{\rolesets}{\loctya_b}$.

    Assumption:

    {\small
    \begin{mathparsmall}
      \inferrule*
      {
        \inferrule*
        { a : \pidty{\loctyc_a} \in \Gamma \\\\
          \tseq[\loctyc_a]{\Gamma}{\loctya_a}{E[\newsend{\ell_j}{V}{\qrole}]}{A}{\localend}
          \\\\
          \bseq[\loctyc_a]{\Gamma}{\behaviour_1}
        }
        {
          \cseq
            { \Gamma }
            { \rtenv_1 }
            {
              \eactor{a}{E[\newsend{\ell_j}{V}{\qrole}]}{\newconnsess{s}{\prole}{\rolesetr}}{\behaviour_1}
            }
        }
        \\
        \inferrule*
        {
          b : \pidty{\loctyc_b} \in \Gamma \\\\
          \tseq[\loctyc_b]{\Gamma}{S_b}{E'[\newrecv{\prole}{\ell_i(x_i)
          \mapsto M_i}_{i \in I}]}{B}{\localend} \\\\
          \bseq[\loctyc_b]{\Gamma}{\behaviour_2}
        }
        {
          \cseq
            { \Gamma }
            { \rtenv_2 }
            {
              \eactor{b}{E'[\newrecv{\prole}{\ell_i(x_i) \mapsto M_i}_{i \in
              I}]}{\newconnsess{s}{\qrole}{\rolesets}}{\behaviour_2}
            }
        }
      }
      {
        \cseq
          {\Gamma}
          { \rtenv_1, \rtenv_2 }
          {
            \eactor{a}{E[\newsend{\ell_j}{V}{\qrole}]}{\newconnsess{s}{\prole}{\rolesetr}}{\behaviour_1} \parallel
            \eactor{b}{E'[\newrecv{\prole}{\ell_i(x_i) \mapsto M_i}_{i \in
            I}]}{\newconnsess{s}{\qrole}{\rolesets}}{\behaviour_2}
          }
      }
    \end{mathparsmall}
  }

  where $j \in I$ and $\safe{\rtenv_1, \rtenv_2}$.

  Consider the subderivation
  $\tseq[\loctyc_a]{\Gamma}{S_a}{E[\newsend{\ell_j}{V}{\qrole}]}{A}{\localend}$.
  By Lemma~\ref{lem:subterm-typability}:

  \begin{mathparsmall}
    \inferrule*
    { \localtysend{\qrole}{\ell_j}{A_j} \in \{ S_i\}_{i \in I} \\
      \vseq{\Gamma}{V}{A_j}
    }
    { \tseq[\loctyc_a]{\Gamma}{\sumact{i \in I}{S_i}}{\newsend{\ell_j}{V}{\qrole}}{\one}{S_j}  }
  \end{mathparsmall}

  Next, consider the subderivation
  $\tseq[\loctyc_b]{\Gamma}{S_b}{E'[\newrecv{\prole}{\ell_i(x_i)
  \mapsto M_i}_{i \in I}]}{B}{\localend}$.
  Again by Lemma~\ref{lem:subterm-typability}:
  \begin{mathparsmall}
    \inferrule*
    {
      (\tseq[\loctyc_b]{\Gamma, x_k : B_k}{T_k}{M_i}{B}{T'})_{k \in K}
    }
    { \tseq[\loctyc_b]
        {\Gamma}
        {\sumact{k \in K}{\localtyrecv{\prole}{\ell_k}{B_k}\then T_k}}
        {\newrecv{\prole}{\ell_i(x_i) \mapsto M_i}_{i \in I}}
        {A}
        {T'}
    }
  \end{mathparsmall}

  From the assumption we know that $j \in I$. As a result of the two
  subderivations above, we can refine our definitions of $\rtenv_1$ and
  $\rtenv_2$:

  \begin{itemize}
    \item $\rtenv_1 =  a : \loctyc_a, \sessindexroles{s}{\prole}{\rolesetr}{\sumact{i
      \in I}{S_i}}$, where $\localtysend{\qrole}{\ell_j}{A_j}\then S'_j \in \{ S_i\}_{i \in I}$
    \item $\rtenv_2 =  b : \loctyc_b,
      \sessindexroles{s}{\qrole}{\rolesets}{\sumact{k \in K}{\localtyrecv{\prole}{\ell_k}{B_k}\then T_k}}$
  \end{itemize}

  Since $\safe{\rtenv_1, \rtenv_2}$, we have that $j \in K$, $B_j = A_j$.

  Thus we can construct a reduction on typing environments:

  \begin{mathparsmall}
    \inferrule*
    {
      \inferrule*
      {
        \inferrule*
        { \localtysend{s}{\ell_j}{A_j} \then S'_j \in \{ S_i\}_i }
        { \sessindexroles{s}{\prole}{\rolesetr}{\sumact{i \in I}{S_i}}
            \annarrow{\ltscomm[!]{s}{\prole}{\qrole}{\ell_j}{A_j}}
          \sessindexroles{s}{\prole}{\rolesetr}{S'_j}
        }
        \\
        \inferrule*
        { j \in K }
        { \sessindexroles{s}{\qrole}{\rolesets}{\sumact{k \in K}{\localtyrecv{\prole}{\ell_k}{B_k}\then T_k}}
          \annarrow{\ltscomm[?]{s}{\qrole}{\prole}{\ell_j}{A_j}}
          \sessindexroles{s}{\qrole}{\rolesets}{T_j}
        }
      }
      {
        \sessindexroles{s}{\prole}{\rolesetr}{\sumact{i \in I}{S_i}},
        \sessindexroles{s}{\qrole}{\rolesets}{\sumact{k \in K}{\localtyrecv{\prole}{\ell_k}{B_k}\then T_k}}
        \syncannarrow{\ltspair{s}{\prole}{\qrole}{\ell_j}}
        \sessindexroles{s}{\prole}{\rolesetr}{S'_j}, \sessindexroles{s}{\qrole}{\rolesets}{T_j}
      }
    }
    { a: \loctyc_a, b : \loctyc_b,
      \sessindexroles{s}{\prole}{\rolesetr}{\sumact{i \in I}{S_i}},
      \sessindexroles{s}{\qrole}{\rolesets}{\sumact{k \in K}{\localtyrecv{\prole}{\ell_k}{B_k}\then T_k}}
        \syncannarrow{\ltspair{s}{\prole}{\qrole}{\ell_j}}
      a: \loctyc_a, b : \loctyc_b, \sessindexroles{s}{\prole}{S'_j}, \sessindexroles{s}{\qrole}{T_j}
    }
  \end{mathparsmall}

  Let $\rtenv' = a: \loctyc_a, b : \loctyc_b, \sessindexroles{s}{\prole}{S'_j},
  \sessindexroles{s}{\qrole}{T_j}$.

  Since $\safe{\rtenv_1, \rtenv_2}$ and $\rtenv_1, \rtenv_2 \ceval \rtenv'$, by the definition of safety, $\safe{\rtenv'}$.

   Noting that only top-level frames (i.e., $F, F'$) can contain exception-handling frames,
     $E, E'$ are pure.
  By Lemma~\ref{lem:pure-replacement},
  $\tseq[\loctyc_a]{\Gamma}{S'_j}{E[\effreturn{()}]}{A}{\localend}$.

  By Lemma~\ref{lem:substitution}, $\tseq[\loctyc_b]{\Gamma}{T_j}{M_j \{ V / x_j \}}{B}{T'}$.

  By Lemma~\ref{lem:pure-replacement}, $\tseq[\loctyc_b]{\Gamma}{T_j}{E'[M_j \{ V / x_j \}]}{B}{\localend}$.

  Letting $\rtenv'_1 = a: \loctyc_a, \sessindexroles{s}{\prole}{S'_j}$ and
  $\rtenv'_2 = b : \loctyc_b, \sessindexroles{s}{\qrole}{T_j}$, recomposing:

    {\small
    \begin{mathparsmall}
      \inferrule*
      {
        \inferrule*
        { a : \pidty{\loctyc_a} \in \Gamma \\\\
          \tseq[\loctyc_a]{\Gamma}{S_j}{E[\effreturn{()}]}{A}{\localend}
          \\\\
          \bseq[\loctyc_a]{\Gamma}{\behaviour_1}
        }
        {
          \cseq
            { \Gamma }
            { \rtenv'_1 }
            {
              \eactor{a}{E[\effreturn{()}]}{\newconnsess{s}{\prole}{\rolesetr}}{\behaviour_1}
            }
        }
        \\
        \inferrule*
        {
          b : \pidty{\loctyc_b} \in \Gamma \\\\
          \tseq[\loctyc_b]{\Gamma}{T_j}{E'[M_j \{ V / x_j \}]}{B}{\localend} \\\\
          \bseq[\loctyc_b]{\Gamma}{\behaviour_2}
        }
        {
          \cseq
            { \Gamma }
            { \rtenv'_2 }
            {
              \eactor{b}{E'[M_j \{ V / x_j
              \}]}{\newconnsess{s}{\qrole}{\rolesets}}{\behaviour_2}
            }
        }
      }
      {
        \cseq
          {\Gamma}
          { \rtenv' }
          {
            \eactor{a}{E[\effreturn{()}]}{\newconnsess{s}{\prole}{\rolesetr}}{\behaviour_1} \parallel
            \eactor{b}{E'[M_j \{ V / x_j \}]}{\newconnsess{s}{\qrole}{\rolesets}}{\behaviour_2}
          }
      }
    \end{mathparsmall}
  }

  with $\safe{\rtenv'}$, as required.
  \end{proofcase}

  \begin{proofcase}{E-Disconn}

    \[
      \eactor{a}{E[\newwait{\qrole}]}{\newconnsess{s}{\prole}{\rolesetr, \qrole}}{\behaviour_1}
        \parallel \eactor{b}{E'[\newdisconn{\prole}]}{\newconnsess{s}{\qrole}{\prole}}{\behaviour_2}
      \ceval
      \eactor{a}{E[\effreturn{()}]}{\newconnsess{s}{\prole}{\rolesetr}}{\behaviour_1} \parallel
        \eactor{b}{E'[\effreturn{()}]}{\bot}{\behaviour_2}
    \]

    Let $\rtenv = a : \loctyc_a, \sessindexroles{s}{\prole}{\rolesetr}{\loctya},
    b : \loctyc_b, \sessindexroles{s}{\qrole}{\prole}{\loctyb}$.

    Assumption:
    {\small
    \begin{mathparsmall}
      \inferrule*
      {
        \inferrule*
        {
          \pidty{\loctyc_a} \in \Gamma \\\\
          \tseq[\loctyc_a]{\Gamma}{\loctya}{E[\newwait{\qrole}]}{B}{\localend} \\
          \bseq[\loctyc_a]{\Gamma}{\behaviour_1}
        }
        {
          \cseq
            {\Gamma}
            { a : \loctyc_a, \sessindexroles{s}{\prole}{\rolesetr,
            \qrole}{\loctya} }
            { \eactor{a}{E[\newwait{\qrole}]}{\newconnsess{s}{\prole}{\rolesetr,
            \qrole}}{\behaviour_1} }
        }
        \\
        \inferrule*
        {
          \pidty{\loctyc_b} \in \Gamma \\\\
          \tseq[\loctyc_b]{\Gamma}{\loctyb}{E'[\newdisconn{\prole}]}{B}{\localend} \\
          \bseq[\loctyc_b]{\Gamma}{\behaviour_2}
        }
        {
          \cseq
            {\Gamma}
            { b : \loctyc_b, \sessindexroles{s}{\qrole}{\prole}{\loctyb} }
            { \eactor{b}{E'[\newdisconn{\prole}]}{\newconnsess{s}{\qrole}{\prole}}{\behaviour_2} }
        }
      }
      {
        \cseq
          { \Gamma }
          { \rtenv }
          { \eactor{a}{E[\newwait{\qrole}]}{\newconnsess{s}{\prole}{\rolesetr,
            \qrole}}{\behaviour_1}
            \parallel \eactor{b}{E'[\newdisconn{\prole}]}{\newconnsess{s}{\qrole}{\prole}}{\behaviour_2}}
      }
    \end{mathparsmall}
  }

  By Lemma~\ref{lem:subterm-typability}:
  \begin{mathparsmall}
    \inferrule*
    { }
    { \tseq[\loctyc_a]{\Gamma}{\localtywait{\qrole}\then \locty'}{\newwait{\qrole}}{\one}{\locty'} }
  \end{mathparsmall}

  Also by Lemma~\ref{lem:subterm-typability}:
  \begin{mathparsmall}
    \inferrule*
    { }
    { \tseq[\loctyc_b]{\Gamma}{\localtydisconn{\prole}}{\newdisconn{\prole}}{\one}{\loctyb'} }
  \end{mathparsmall}

  Thus, $\locty = \localtywait{\qrole}\then \locty'$ and
  $\loctyb = \localtydisconn{\qrole}$

  We can show a reduction on runtime typing environments:

  \begin{mathparsmall}
    \inferrule*
    {
      \inferrule*
      {
        \inferrule*
        { }
        {
          \sessindexroles{s}{\prole}{\rolesetr, \qrole}{\localtywait{\qrole}\then \locty'}
          \annarrow{\ltswait{s}{\prole}{\qrole}}
          \sessindexroles{s}{\prole}{\rolesetr}{\locty'}
        }
        \\
        \inferrule*
        { }
        {
          \sessindexroles{s}{\qrole}{\prole}{\localtydisconn{\prole}}
          \annarrow{\ltsdisconn{s}{\qrole}{\prole}}
          \cdot
        }
      }
      {
        \sessindexroles{s}{\prole}{\rolesetr, \qrole}{\localtywait{\qrole}\then \locty'},
        \sessindexroles{s}{\qrole}{\prole}{\localtydisconn{\prole}}
        \syncannarrow{\ltsdisconnpair{s}{\prole}{\qrole}}
        \locty'
      }
    }
    {
    a : \loctyc_a,
      \sessindexroles{s}{\prole}{\rolesetr, \qrole}{\localtywait{\qrole}\then \locty'},
      b : \loctyc_b,
      \sessindexroles{s}{\qrole}{\prole}{\localtydisconn{\qrole}}
\syncannarrow{\ltsdisconnpair{s}{\prole}{\qrole}}
a : \loctyc_a,
      \sessindexroles{s}{\prole}{\rolesetr}{\locty'},
      b : \loctyc_b
    }
  \end{mathparsmall}

  Since $\safe{\rtenv}$, and $\rtenv \synceval \rtenv'$, by the definition of safety, we have that
  $\safe{\rtenv'}$.

   Noting that only top-level frames (i.e., $F, F'$) can contain exception-handling frames,
     $E, E'$ are pure.
  We can show that $\tseq[\loctyc_a]{\Gamma}{\locty'}{\effreturn{()}}{\one}{\locty'}$, so
  by Lemma~\ref{lem:pure-replacement}, we have that
  $\tseq[\loctyc_a]{\Gamma}{\locty'}{E[\effreturn{()}]}{\tya}{\locty'}$.

  By the same argument,
  $\tseq[\loctyc_b]{\Gamma}{\localend}{E'[\effreturn{()}]}{\one}{\localend}$.

  Thus we can show (by \textsc{T-UnconnectedActor}, noting that $\localend$ is a
  permissible precondition):

  \begin{mathparsmall}
    \inferrule*
      {
        \pidty{\loctyc_a} \in \Gamma \\\\
        \tseq[\loctyc_a]{\Gamma}{\localend}{E'[\effreturn{()}]}{B}{\localend} \\
        \bseq[\loctyc_a]{\Gamma}{\behaviour_2}
      }
      {
        \cseq
          {\Gamma}
          { b : \loctyc_a }
          { \eactor{b}{E'[\effreturn{()}]}{\bot}{\behaviour_2} }
      }
  \end{mathparsmall}

  Recomposing:

  {\small
  \begin{mathparsmall}
    \inferrule*
    {
        \inferrule*
        {
          a : \pidty{\loctyc_a} \in \Gamma \\\\
          \tseq{\Gamma}{\loctya}{E[\newwait{\qrole}]}{B}{\localend} \\
          \bseq{\Gamma}{\behaviour_1}
        }
        {
          \cseq
            {\Gamma}
            { a : \loctyc_a, \sessindexroles{s}{\prole}{\rolesetr}{\loctya'} }
            { \eactor{a}{E[\effreturn{()}]}{\newconnsess{s}{\prole}{\rolesetr}}{\behaviour_1} }
        }
        \\
      \inferrule*
      {
        b : \loctyc_b \in \Gamma \\
        \tseq[\loctyc_b]{\Gamma}{\localend}{E'[\effreturn{}()]}{B}{\localend} \\
        \bseq[\loctyc_b]{\Gamma}{\behaviour_2}
      }
      {
        \cseq
          {\Gamma}
          { b : \loctyc_b }
          { \eactor{b}{E'[\effreturn{()}]}{\bot}{\behaviour_2} }
      }
    }
    { \cseq
        {\Gamma}
        { a : \loctyc_a, \sessindexroles{s}{\prole}{\rolesetr}{\loctya'}, b: \loctyc_b  }
        {\eactor{a}{E[\effreturn{()}]}{\newconnsess{s}{\prole}{\rolesetr}}{\behaviour_1}
          \parallel
         \eactor{b}{E'[\effreturn{()}]}{\bot}{\behaviour_2}
        }
    }
  \end{mathparsmall}
  }

  with $\safe{a : \loctyc_a, \sessindexroles{s}{\prole}{\rolesetr}{\loctya'}, b: \loctyc_b}$, as required.
  \end{proofcase}

  \begin{proofcase}{E-Complete}

    \[
(\nu s)(\eactor{a}{\effreturn{V}}{\newconnsess{s}{\prole}{\emptyset}}{\behaviour})
      \ceval      \eactor{a}{\effreturn{V}}{\bot}{\behaviour}
    \]

    Assumption:

    \begin{mathpar}
      \inferrule*
      {
        \inferrule*
        {
          a : \pidty{\loctyb} \\
          \tseq{\Gamma}{\loctya}{\effreturn{V}}{A}{\localend} \\
          \bseq{\Gamma}{\behaviour}
        }
        {
          \cseq
            {\Gamma}
            { a : \loctyb, \sessindexroles{s}{\prole}{\emptyset}{\loctya} }
            {
              \eactor
                {a}
                {\effreturn{V}}
                {\newconnsess{s}{\prole}{\emptyset}}
                {\behaviour}
            }
        }
      }
      {
        \cseq
          {\Gamma}
          {\cdot}
          {
            (\nu s)(\eactor{a}{\effreturn{V}}{\newconnsess{s}{\prole}{\emptyset}}{\behaviour})
          }
      }
    \end{mathpar}

    Since by \textsc{T-Return}, the pre- and post-conditions must match,
    it must be the case that $\loctya = \localend$.

    Thus we can show:

    \begin{mathpar}
      \inferrule*
      {
        a : \pidty{\loctyb} \\
        \tseq{\Gamma}{\localend}{\effreturn{V}}{A}{\localend} \\
        \bseq{\Gamma}{\behaviour}
      }
      {
        \cseq
          {\Gamma}
          { a : \loctyb }
          {
            \eactor
              {a}
              {\effreturn{V}}
              {\newconnsess{s}{\prole}{\emptyset}}
              {\behaviour}
          }
      }
    \end{mathpar}

    as required.
  \end{proofcase}

  \begin{proofcase}{E-CommRaise}
    \[
      \eactor{a}{E[M]}{\newconnsess{s}{\prole}{\rolesetr}}{\behaviour} \parallel
      \zap{\sessindextwo{s}{\qrole}}
        \ceval
      \eactor{a}{E[\raiseexn]}{\newconnsess{s}{\prole}{\rolesetr}}{\behaviour}
      \parallel \zap{\sessindextwo{s}{\qrole}}
    \]

    where $\subj{M} = \prole$.

    The proof is by cases on $M$, where $M$ must be a communication action.
    The cases have the same structure, so we show the case where $M =
    \newwait{\qrole}$.

    Assumption:

    \begin{mathpar}
      \inferrule*
      {
        \inferrule*
        {
          \tseq{\Gamma}{\localtywait{\qrole} \then
          \loctya'}{E[\newwait{\qrole}]}{A}{\localend} \\
          \bseq{\Gamma}{\behaviour}
        }
        {
        \cseq
          {\Gamma}
          {a : \loctyb,
            \sessindexroles{s}{\prole}{\rolesetr}{\localtywait{\qrole}\then
          \loctya'}}
          {\eactor{a}{E[\newwait{\qrole}]}{\newconnsess{s}{\prole}{\rolesetr}}{\behaviour}}
        }
        \\
        \inferrule*
        { }
        {
          \cseq
            {\Gamma}
            {\sessindexroles{s}{\qrole}{\rolesets}{\loctyc}}
            {\zaptwo{s}{\qrole}}
        }
      }
      {
        \cseq
          {\Gamma}
          { a : \loctyb,
            \sessindexroles{s}{\prole}{\rolesetr}{\localtywait{\qrole}\then \loctya'}
            \sessindexroles{s}{\qrole}{\rolesets}{\loctyc}
          }
          { \eactor
              {a}
              {E[\newwait{\qrole}]}
              {\newconnsess{s}{\prole}{\rolesetr}}
              {\behaviour}
            \parallel
            \zap{\sessindextwo{s}{\qrole}}
          }
      }
    \end{mathpar}

    By Lemma~\ref{lem:subterm-typability}, there exists some $\loctya''$ such
    that:
    \begin{mathpar}
      \inferrule*
      { }
      {
        \tseq{\Gamma}{\localtywait{\qrole} \then
          \loctya'}{\newwait{\qrole}}{\one}{\loctya''}
      }
    \end{mathpar}

    By \textsc{T-Raise}, $\raiseexn$ can have any precondition, return type,
    and postcondition. Therefore, by Lemma~\ref{lem:pure-replacement},
    $\tseq{\Gamma}{\localtywait{\qrole}}{E[\raiseexn]}{A}{\localend}$, and
    therefore:

    \begin{mathpar}
      \inferrule*
      {
        \inferrule*
        {
          \tseq{\Gamma}{\localtywait{\qrole} \then
          \loctya'}{E[\raiseexn]}{A}{\localend} \\
          \bseq{\Gamma}{\behaviour}
        }
        {
        \cseq
          {\Gamma}
          {a : \loctyb,
            \sessindexroles{s}{\prole}{\rolesetr}{\localtywait{\qrole}\then
          \loctya'}}
          {\eactor{a}{E[\newwait{\qrole}]}{\newconnsess{s}{\prole}{\rolesetr}}{\behaviour}}
        }
        \\
        \inferrule*
        { }
        {
          \cseq
            {\Gamma}
            {\sessindexroles{s}{\qrole}{\rolesets}{\loctyc}}
            {\zaptwo{s}{\qrole}}
        }
      }
      {
        \cseq
          {\Gamma}
          { a : \loctyb,
            \sessindexroles{s}{\prole}{\rolesetr}{\localtywait{\qrole}\then \loctya'}
            \sessindexroles{s}{\qrole}{\rolesets}{\loctyc}
          }
          { \eactor
              {a}
              {E[\raiseexn]}
              {\newconnsess{s}{\prole}{\rolesetr}}
              {\behaviour}
            \parallel
            \zap{\sessindextwo{s}{\qrole}}
          }
      }
    \end{mathpar}

    as required.
  \end{proofcase}

  \begin{proofcase}{E-FailS}
    \[
      \eactor{a}{P[\raiseexn]}{\newconnsess{s}{\prole}{\rolesetr}}{\behaviour}
        \ceval
      \eactor{a}{\raiseexn}{\bot}{\behaviour}
        \parallel \zap{\sessindextwo{s}{\prole}}
    \]

    Assumption:

    \begin{mathpar}
    \inferrule
    {
      a : \pidty{\loctyb} \in \Gamma \\\\
      \tseq{\Gamma}{\loctya}{P[\raiseexn]}{A}{\localend} \\
      \bseq{\Gamma}{\behaviour}
    }
    { \cseq{\Gamma}{a : \loctyb, \sessindexroles{s}{\prole}{\rolesetq}{\locty}}{
      \eactor{a}{P[\raiseexn]}{\newconnsess{s}{\prole}{\rolesetq}}{\behaviour}}
    }
    \end{mathpar}

    Since $\raiseexn$ is typable under any precondition and postcondition, and
    has an arbitrary return type,
    it follows that $\tseq{\Gamma}{\localend}{\raiseexn}{A}{\localend}$.

    Recomposing:

    \begin{mathpar}
      \inferrule*
      {
        \inferrule*
        {
          a : \pidty{\loctyb} \in \Gamma \\\\
          \tseq{\Gamma}{\localend}{\raiseexn}{A}{\localend} \\
          \bseq{\Gamma}{\behaviour}
        }
        { \cseq{\Gamma}
          {a : \loctyb}
          {\eactor{a}{\raiseexn}{\newconnsess{s}{\prole}{\rolesetq}}{\behaviour}}
        }
        \\
        \inferrule*
        { }
        {
          \cseq{\Gamma}
            {\sessindexroles{s}{\prole}{\rolesetq}{\locty}}
            {\zaptwo{s}{\prole}}
        }
      }
      {
        \cseq
          { \Gamma }
          { a : \loctyb, \sessindexroles{s}{\prole}{\rolesetq}{\locty} }
          { \eactor{a}{\raiseexn}{\bot}{\behaviour}
             \parallel \zap{\sessindextwo{s}{\prole}}
          }
      }
    \end{mathpar}
  \end{proofcase}

  \begin{proofcase}{E-FailLoop}

    Similar to \textsc{E-Loop}.
  \end{proofcase}

  \begin{proofcase}{E-LiftM}
    Immediate by Lemma~\ref{lem:term-pres}.
  \end{proofcase}

  \begin{proofcase}{E-Equiv}

    Immediate by Lemma~\ref{lem:equiv-pres}.
  \end{proofcase}

  \begin{proofcase}{E-Par}

    Immediate by the IH and rules \textsc{E-Cong1} and \textsc{E-Cong2}.
  \end{proofcase}

  \begin{proofcase}{E-Nu}

    Immediate by the IH, noting that the IH ensures that the resulting environment is also safe.
  \end{proofcase}
 \end{proof}

 \section{Progress}

\subsection{Overview}
The progress proof requires several steps. We overview them below.

\begin{description}
  \item[Canonical forms]
    Canonical forms allow us to reason globally about a configuration by putting it in a structured form. Lemma~\ref{lem:canonical-forms} states that any well-typed, closed configuration can be put into canonical form.
  \item[Exception-aware runtime typing environments]
    Runtime typing environments do not account explicitly for zapper threads.
    This makes sense when analysing protocols statically to check their safety
    and progress properties, but is inconvenient when reasoning about
    configurations.

    The second step is to introduce \emph{exception-aware} runtime typing
    environments which account explicitly for zapper threads and the propagation
    of exceptions, and an exception-aware runtime typing
    system~(\autoref{def:zap-envs}). We then show that configurations including
    zapper threads are typable under the exception-aware runtime typing
    system~(\autoref{lem:zap-typ}).

    We refine our notion of environment progress to include the possibility of
    failed sessions~(\autoref{def:zap-progress}), and show that given a runtime environment
    $\rtenv$ satisfying safety and progress, a derived exception-aware runtime
    environment $\zapenv$ also satisfies safety and
    progress~(\autoref{lem:zap-env-props}).
\item[Flattenings]
    Hu \& Yoshida's formulation of multiparty session types relaxes the
    directedness constraint for output choices, but this relaxation makes it
    more difficult to reason about reduction of environments being reflected by
    configurations.

    In this step, we introduce \emph{flattenings}
    (Definitions~\ref{def:flattening-types}
    and~\ref{def:flattening-envs}),
    which restict each top-level output choice to a single option, and show that
    if a configuration is \emph{ready} (i.e., all actors are blocked on
    communication actions), then it is typable under a flattened
    environment~(\autoref{lem:flattening-typ}). We then show that flattenings preserve
    environment reducibility~(\autoref{lem:flattening-reducibility}).
\item[Session progress]
    The penultimate step is to show that sessions can make progress.
    A key result is~\autoref{lem:flat-progress}, which states that a ready
    configuration typable under a flat exception-aware typing environment can
    reduce.
    The session progress lemma~(\autoref{lem:sess-group-progress}) follows
    from~\autoref{lem:flat-progress} and the previous results.
  \item[Progress]
    Finally, progress follows from~\autoref{lem:sess-group-progress} and case analysis on the
    disconnected actors.
\end{description}

\subsection{Auxiliary Definitions}

Let $\Psi$ range over typing environments containing only runtime names: $\Psi
::= \cdot \midspace \Psi, a : \pidty{S}$.

Term reduction satisfies a form of progress: a well-typed term is a value, can
reduce, or is a communication or concurrency construct:

\begin{lemma}[Progress (terms)]\label{lem:term-progress}
  If $\tseq[\loctyb]{\Psi}{\loctya}{M}{A}{\loctya'}$, then either $M =
  \effreturn{V}$ for some value $V$; there exists some $N$ such that $M \teval
  N$; or there exists some $E$
  such that $M$ can be written $E[N]$ for some $E, N$ where $N$ is either
  $\raiseexn$ or an adaptation, communication, or concurrency construct.
\end{lemma}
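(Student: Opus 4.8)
Lemma (Progress for terms) is a standard canonical-forms-style result, so the natural approach is induction on the typing derivation $\tseq[\loctyb]{\Psi}{\loctya}{M}{A}{\loctya'}$. The plan is to case-split on the last rule used. The crucial observation, which I would state at the outset, is that $\Psi$ contains only runtime-name bindings $a : \pidty{S}$ and no ordinary variable bindings; this is what rules out a "stuck variable" and guarantees that any value $V$ appearing in head position is either a variable bound in $\Psi$ (hence usable) or the unit value.

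For the functional rules the cases are routine. In the \textsc{T-Return} case, $M = \effreturn{V}$, which is immediately in the first disjunct. In the \textsc{T-Let} case, $M = \efflet{x}{M_1}{N}$: I would apply the induction hypothesis to $M_1$. If $M_1 = \effreturn{V}$, then $M \teval N\{V/x\}$ by \textsc{E-Let}, giving the second disjunct. If $M_1 \teval N_1$, then $M \teval \efflet{x}{N_1}{N}$ by \textsc{E-LiftM} under the context $\efflet{x}{[~]}{N}$. Otherwise $M_1 = E'[N']$ with $N'$ a communication/concurrency/adaptation construct or $\raiseexn$, and we simply take $E = \efflet{x}{E'}{N}$, landing in the third disjunct. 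The \textsc{T-Rec} case reduces via \textsc{E-Rec}, and \textsc{T-Continue} cannot arise as a top-level action under $\Psi$ since $\Psi$ carries no recursion labels (alternatively, $\continue{l}$ is itself one of the designated non-value head constructs, so it falls directly into the third disjunct with $E = [~]$).

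The remaining rules — \textsc{T-New}, \textsc{T-Self}, \textsc{T-Discover}, \textsc{T-Replace}, the communication rules \textsc{T-Conn}, \textsc{T-Send}, \textsc{T-Accept}, \textsc{T-Recv}, \textsc{T-Wait}, \textsc{T-Disconn}, and the exception rules \textsc{T-Raise}, \textsc{T-Try} — all conclude with a head term that is, by definition, either $\raiseexn$ or an adaptation, communication, or concurrency construct. In each of these cases I take $E = [~]$ and $N = M$, placing $M$ directly in the third disjunct. The only subtlety is \textsc{T-Try}, where $M = \trycatch{L}{N}$; since $\trycatch{L}{N}$ is a top-level frame $F$ and not itself a communication construct, I would note that the statement's third disjunct should be read as permitting $F$-headed terms, or equivalently observe that $\trycatch{L}{N}$ reduces by \textsc{E-TryReturn}/\textsc{E-TryRaise} once $L$ is resolved and is otherwise an $F$-context wrapping one of the designated head constructs.

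The main obstacle is not any individual case but rather pinning down precisely what "$N$ is an adaptation, communication, or concurrency construct" includes, so that the disjunction is genuinely exhaustive — in particular ensuring that $\trycatch{L}{M}$ and the lone $\continue{l}$ are correctly accounted for, since these are the two head forms that are neither values, nor $\teval$-redexes in isolation, nor communication primitives. I expect to dispatch this by a careful enumeration of the action grammar $L$ and the computation grammar $M$ from Figure~\ref{fig:syntax}, confirming that every well-typed closed computation decomposes as $E[N]$ with $N$ in one of the three named syntactic classes (or is a value or a term redex); the reliance on $\Psi$ being variable-free is exactly what closes off the would-be stuck configurations.
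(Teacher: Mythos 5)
The paper never actually proves this lemma --- it is stated in the appendix without proof, as a routine auxiliary result --- so your attempt can only be judged on its own merits, and on those merits it is correct: structural induction on the typing derivation, anchored by the observation that $\Psi$ binds only runtime names (no term variables and, crucially, no recursion labels $l : \loctya$), is precisely the standard argument the paper leaves implicit, and your case analysis for \textsc{T-Let}, \textsc{T-Return}, \textsc{T-Rec}, and the head-construct rules is sound.

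Two refinements would tighten the places you flag as subtle. First, in the \textsc{T-Try} case no ``re-reading'' of the third disjunct is needed: $\trycatch{[~]}{M}$ is already a legal evaluation context by the runtime syntax ($E ::= F \midspace \efflet{x}{E}{M}$ with $F ::= [~] \midspace \trycatch{[~]}{M}$), so the case splits cleanly on the action $L$ --- if $L = \effreturn{V}$ or $L = \raiseexn$ the term reduces by \textsc{E-TryReturn} or \textsc{E-TryRaise} (second disjunct); if $L$ is an adaptation, communication, or concurrency action, take $E = \trycatch{[~]}{M}$ (third disjunct); and $L = \continue{l}$ is impossible under $\Psi$. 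Your phrase ``once $L$ is resolved'' is slightly off, since actions are atomic and never require prior resolution inside the frame. Second, for \textsc{T-Continue} only your first argument is available: $\continue{l}$ is not an adaptation, communication, or concurrency construct, so it cannot be parked in the third disjunct; the case is simply vacuous because \textsc{T-Continue} demands $l : \loctya$ in the environment and $\Psi$ contains no such entries. Neither issue affects the overall correctness of the induction.
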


\begin{lemma}[Session typability]\label{lem:sgroup-typ}
  If $\cseq{\cdot}{\cdot}{\config{G}[\sgroup]}$, then there exist $\Psi, \rtenv$
  such that
  $\cseq{\Psi}{\rtenv}{\sgroup}$ and
  $\rtenv$ only consists of entries of the form $a : \loctya$.
\end{lemma}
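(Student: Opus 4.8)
The plan is to peel the context $\config{G}$ off one layer at a time, inverting the configuration typing rule used at each constructor, while maintaining as an invariant that the term environment stays PID-only. This invariant holds because we start from the closed judgement $\cseq{\cdot}{\cdot}{\config{G}[\sgroup]}$ and the only configuration rule that extends $\Gamma$ is \textsc{T-Pid}, which adds precisely an entry of the form $a : \pidty{S}$. Concretely, I would prove by induction on the structure of $\config{G}$ the slightly more general claim: whenever $\cseq{\Gamma}{\rtenv}{\config{G}[\sgroup]}$ with $\Gamma$ PID-only, there exist $\Psi$ and $\rtenv_0$, with $\Psi$ PID-only and $\rtenv_0$ containing only actor entries $a : S$, such that $\cseq{\Psi}{\rtenv_0}{\sgroup}$. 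The original statement is the instance $\Gamma = \rtenv = \cdot$.

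For the inductive cases the context exposes a single head constructor, so I would invert the corresponding syntax-directed rule and appeal to the induction hypothesis on the smaller context $\config{G}'$. If $\config{G} = (\nu a)\config{G}'$, inversion of \textsc{T-Pid} gives a derivation of $\config{G}'[\sgroup]$ under $\Gamma, a : \pidty{S}$ (still PID-only) and $\rtenv, a : S$; if $\config{G} = (\nu s')\config{G}'$, inversion of \textsc{T-Session} gives a derivation under the same $\Gamma$ and an enlarged runtime environment $\rtenv, \rtenv'$ carrying the session-$s'$ entries; and if $\config{G} = \config{G}' \parallel \config{C}$, inversion of \textsc{T-Par} partitions the linear runtime environment and hands $\config{G}'[\sgroup]$ a sub-environment. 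In every case the induction hypothesis applies and directly yields the required $\Psi, \rtenv_0$. Where the literal syntactic shape fails to match the binary form of \textsc{T-Par} or the ordering of the $\nu$-binders, I would first reassociate using structural congruence and Lemma~\ref{lem:equiv-pres}.

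The crux is the base case $\config{G} = [~]$, where $\config{G}[\sgroup] = \sgroup$ is typed under some $\Gamma, \rtenv$ and I must argue that this $\rtenv$ already contains no session entries. Here I would invert the \textsc{T-Session} rule applied at the head $(\nu s)$ of $\sgroup$: it introduces a fresh block $\rtenv'$ consisting solely of session-$s$ entries $\sessindexroles{s}{\prole_i}{\rolesetq_i}{S_i}$, requires $s \notin \rtenv$, and types the body of $\sgroup$ under $\rtenv, \rtenv'$. By the definition of a session, that body is a parallel composition of connected actors $\eactor{a_i}{M_i}{\newconnsess{s}{\prole_i}{\rolesetq_i}}{\behaviour_i}$ and zapper threads $\zap{\sessindextwo{s}{\prole_j}}$, every one of which mentions only the session name $s$. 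Tracking linearity through \textsc{T-Par}, each connected actor consumes exactly one actor entry $a_i : S_i$ together with one session-$s$ entry (by \textsc{T-ConnectedActor}), and each zapper consumes exactly one session-$s$ entry (by \textsc{T-Zap}). Since $\rtenv, \rtenv'$ is fully consumed, every session entry consumed belongs to $s$, and $s \notin \rtenv$, it follows that $\rtenv$ is built only from actor entries $a_i : S_i$. Taking $\Psi = \Gamma$ and $\rtenv_0 = \rtenv$ discharges the base case.

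I expect the main obstacle to be precisely this linearity bookkeeping in the base case: one must rule out the possibility that $\rtenv$ harbours a session entry for some other session, and this relies on the structural fact that every component of $\sgroup$ belongs to the single session $s$, combined with the exact-splitting behaviour of \textsc{T-Par} and the side condition $s \notin \rtenv$ of \textsc{T-Session}. A secondary and more routine annoyance is lining up the inductive context constructors against the non-syntax-directed presentation of parallel composition, which I would dispatch by appealing to structural congruence and Lemma~\ref{lem:equiv-pres}.
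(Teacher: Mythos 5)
Your proposal is correct and follows essentially the same route as the paper's (very terse) proof: structural induction on $\config{G}$, with the base case discharged by observing that linearity forces exact consumption of the runtime environment and that, by the definition of a session, every actor and zapper thread in $\sgroup$ refers only to the single session name $s$, so no session entry can remain in $\rtenv$. Your additional PID-only invariant on the term environment and the explicit inversion bookkeeping are just elaborations of what the paper leaves implicit.
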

\begin{proof}
  By induction on the structure of $\config{G}$, noting that entries of the
  form $\sessindexroles{s}{\prole}{\rolesetq}{\loctya}$ are linear;
  by the definition of $\sgroup = (\nu s)\config{C}$, actors and zapper threads
  in $\config{C}$ must refer to only to $s$.
\end{proof}

\subsection{Canonical forms}

To help us state a progress property, we consider configurations in
\emph{canonical form}.

A canonical form consists of binders for all connected actor names, followed by
binders for all disconnected actor names, followed by all sessions,
followed by all disconnected actors.

\begin{definition}[Canonical form]
  A configuration is in \emph{canonical form} if it is either $\confzero$ or can be written:
  $(\nu a_1 \cdots a_l)(\nu b_1 \cdots b_m)(\sgroup_1 \parallel \cdots \parallel \sgroup_n \parallel
    \eactor{b_1}{M_1}{\bot}{\behaviour_1} \parallel \cdots \parallel
    \eactor{b_m}{M_m}{\bot}{\behaviour_n})$.
\end{definition}

Every well-typed, closed configuration can be written in canonical form.

\begin{lemma}[Canonical forms]\label{lem:canonical-forms}
  If $\cseq{\cdot}{\cdot}{\config{C}}$, then $\exists \config{D}
  \equiv \config{C}$ where $\config{D}$ is in canonical form.
\end{lemma}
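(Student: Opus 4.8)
The plan is to prove the result by induction on the derivation of $\cseq{\cdot}{\cdot}{\config{C}}$, generalising the statement to an arbitrary term environment $\Gamma$ and runtime environment $\rtenv$: I claim that any well-typed configuration is structurally congruent to one in which (i) all actor-name restrictions are floated to the outermost position, (ii) each session-name restriction $(\nu s)$ wraps precisely the actors whose connection state mentions $s$ together with the $s$-zapper threads, thereby forming a session $\sgroup$, and (iii) the disconnected actors are collected last. The only tool needed is the structural congruence $\equiv$ of Figure~\ref{fig:config-reduction-3}: commutativity and associativity of $\parallel$ for regrouping, the $\nu$-swap axiom for reordering binders, and scope extrusion $\config{C} \parallel (\nu n)\config{D} \equiv (\nu n)(\config{C} \parallel \config{D})$ (with $n \notin \fn{\config{C}}$, always arrangeable by $\alpha$-renaming) for moving binders across parallel components. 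Two typing observations drive the argument: by \textsc{T-Pid} each bound actor name corresponds, by linearity of its runtime entry $a : \loctya$, to exactly one actor process, which is either connected or disconnected; and a session name $s$ occurs free only in the connection state $\newconnsess{s}{\prole}{\rolesetq}$ of actors connected on $s$ and in zapper threads $\zap{\sessindextwo{s}{\prole}}$, since \textsc{T-ConnectedActor} and \textsc{T-Zap} are the only runtime rules mentioning a session.

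The base cases are immediate: \textsc{T-Zero} gives $\confzero$; \textsc{T-DisconnectedActor}, \textsc{T-ConnectedActor} and \textsc{T-Zap} each give a single-threaded fragment already in the required shape (a lone disconnected actor, a one-participant session body, or an $s$-zapper). For \textsc{T-Par}, the induction hypothesis puts each side in canonical form; I then $\alpha$-rename to make all bound names distinct, concatenate the two actor-binder prefixes, merge the two lists of sessions, and merge the two lists of disconnected actors, using commutativity/associativity of $\parallel$ and scope extrusion to float the combined binders outward. For \textsc{T-Pid}, $\config{C} = (\nu a)\config{C}'$; applying the IH to $\config{C}'$ and then prepending $(\nu a)$ yields a configuration whose binder prefix, after $\nu$-swaps, again lists all actor names first --- placing $a$ among the connected or disconnected group according to the state of its unique actor. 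For \textsc{T-Session}, $\config{C} = (\nu s)\config{C}'$; the IH gives a canonical $\config{D}' \equiv \config{C}'$, and I must narrow the scope of $(\nu s)$ so that it wraps exactly its session.

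The main obstacle is precisely this scope-narrowing step for \textsc{T-Session}. Using the locality observation above, every parallel component of $\config{D}'$ that does not mention $s$ --- all other sessions, all disconnected actors, and the actor-binder prefix --- satisfies the side condition $s \notin \fn{\cdot}$, so by repeated scope extrusion (and $\nu$-swap to commute $(\nu s)$ inward past the actor binders, which name distinct entities) I can float $(\nu s)$ down until it encloses only the actors connected on $s$ and the $s$-zappers, producing a new session $\sgroup = (\nu s)(\ldots)$ to adjoin to the session list. The remaining bookkeeping --- verifying that the binder prefix can be split into the connected names $a_1,\dots,a_l$ and the disconnected names $b_1,\dots,b_m$ --- follows from the linearity of session entries in $\rtenv$, which guarantees each actor belongs to at most one session and that the two groups are disjoint and exhaustive. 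Closing the induction at $\Gamma = \rtenv = \cdot$ gives the stated result.
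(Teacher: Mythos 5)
Your proposal is correct and takes essentially the same approach as the paper's own proof, which is itself only a two-sentence sketch: induction on the typing derivation, using the structural congruence axioms to float actor-name restrictions outward and to regroup/narrow each session-name restriction around exactly its connected actors and zapper threads, with linearity of the runtime session entries guaranteeing that each actor participates in at most one session. Your write-up is in fact more detailed than the paper's sketch, spelling out the \textsc{T-Session} scope-narrowing step and the locality observation that session names occur free only in connection states and zapper threads, both of which the paper leaves implicit.
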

\begin{proof}[Proof sketch]
  Due to Lemma~\ref{lem:equiv-pres}, by induction on typing derivations we can
  move all actor name restrictions to the top level, followed by all session
  name restrictions, followed by all connected actors, followed by all zapper
  threads, followed by all disconnected actors.

Since the typing rules ensure each actor only participates in a single
  session, we can then group each actor and zapper thread according to its
  session, in order to arrive at a canonical form.
\end{proof}

\subsection{Exception-aware runtime typing environments}

Due to \textsc{E-RaiseP}, zapper threads expose additional reductions to those
allowed by standard environment reduction. In particular, given the presence of
a zapper thread $\zaptwo{s}{\prole}$, any other participant in the remainder of
the session blocked on $\prole$ can reduce. In order to account for this, we
introduce the notion of exception-aware runtime typing environments, environment
reduction, and runtime typing.

\begin{definition}[Exception-aware runtime typing environments, environment
  reduction, and runtime typing]\label{def:zap-envs}

  An \emph{exception-aware runtime typing environment} $\zapenv$ is defined as
  follows:
  \[
    \begin{array}{lrcl}
      \zapenv & ::= & a : S \midspace \sessindexroles{s}{\prole}{\rolesetq}{S} \midspace
      \zapindex{s}{\prole}
    \end{array}
  \]

  The \emph{exception-aware runtime typing relation}
  $\zseq{\Gamma}{\zapenv}{\config{C}}$ is defined to be the standard runtime
  typing relation $\cseq{\Gamma}{\rtenv}{\config{C}}$ but with rule
  \textsc{T-Zap} defined as:

  \begin{mathpar}
    \inferrule
    [T-Zap]
    { }
    { \zseq{\Gamma}{\zapindex{s}{\prole}}{\zaptwo{s}{\prole}} }
  \end{mathpar}

  We extend the set of labels $\synclbl$ with a zapper label
  $\lblzap{s}{\prole}{\qrole}$, which states that role $\qrole$ has
  attempted to interact with a cancelled role $\prole$ and has become cancelled
  itself as a result.

  The \emph{exception-aware environment reduction relation} is defined as the
  rules of the standard runtime typing relation $\rtenv \synceval \rtenv'$ with the
  addition of the following rules:

\begin{mathpar}
    \inferrule
    [ET-CommFail]
    { \exists j \in I . S_j = \localcomm{\prole}{\ell_j}{\tya_j}\then \locty'_j }
    {
      \zapindex{s}{\prole},
      \sessindexroles{s}{\qrole}{\rolesetr}{\sumact{i \in I}{S_i}}
      \syncannarrow{\lblzap{s}{\prole}{\qrole}}
      \zapindex{s}{\prole},
      \zapindex{s}{\qrole}
    }

    \inferrule
    [ET-DisconnFail]
    { }
    {
      \zapindex{s}{\prole},
      \sessindexroles{s}{\qrole}{\prole}{\localtydisconn{\qrole}}
      \syncannarrow{\lblzap{s}{\prole}{\qrole}}
      \zapindex{s}{\prole},
      \zapindex{s}{\qrole}
    }

    \inferrule
    [ET-WaitFail]
    { }
    {
      \zapindex{s}{\prole},
      \sessindexroles{s}{\qrole}{\prole}{\localtywait{\qrole} \then \loctya}
      \syncannarrow{\lblzap{s}{\prole}{\qrole}}
      \zapindex{s}{\prole},
      \zapindex{s}{\qrole}
    }
  \end{mathpar}
\end{definition}

\begin{definition}[Zapped roles]
  Define the \emph{zapped roles} of session
  \[
    \sgroup =
  (\nu s)(\eactor{a_1}{M_1}{\newconnsess{s}{\prole_1}{\rolesetq_1}}{\behaviour_1}
  \parallel \cdots \parallel
  \eactor{a_m}{M_m}{\newconnsess{s}{\prole_m}{\rolesetq_m}}{\behaviour_m}
  \parallel
  \zap{\sessindextwo{s}{\prole_{m + 1}}}
  \parallel \cdots
  \parallel
  \zap{\sessindextwo{s}{\prole_{n}}}
  )
  \]
  as $\prole_{m + 1}, \ldots, \prole_n$.
\end{definition}

\begin{definition}[Zapped environment]
  Given a runtime environment
  \[
    \Delta = a_1 : \loctya_1, \ldots, a_l : \loctya_l,
      \sessindexroles{s}{\prole_1}{\rolesetr_1}{\loctyb_1},
      \ldots
      \sessindexroles{s}{\prole_m}{\rolesetr_m}{\loctyb_m},
      \sessindexroles{s}{\qrole_1}{\rolesetr_1}{\loctyb'_1},
      \ldots
      \sessindexroles{s}{\qrole_n}{\rolesetr_n}{\loctyb'_n}
  \]
  and a set of roles $\rolesetq = \qrole_1, \ldots, \qrole_n$,
  the \emph{zapped environment} $\zapenv = \zapfn{\rtenv}{\rolesetq}$ is defined
  as:
\[
    \Delta = a_1 : \loctya_1, \ldots, a_l : \loctya_l,
      \sessindexroles{s}{\prole_1}{\rolesetr_1}{\loctyb_1},
      \ldots
      \sessindexroles{s}{\prole_m}{\rolesetr_m}{\loctyb_m},
      \zapindex{s}{\qrole_1},
      \ldots
      \zapindex{s}{\qrole_n}
  \]
\end{definition}

\begin{definition}[Failed environment]
  An exception-aware runtime typing environment $\zapenv$ is a \emph{failed
  environent}, written $\failed{\zapenv}$, if $\zapenv $ is of the form $\zapindex{s}{\prole_1}, \ldots,
  \zapindex{s}{\prole_n}$.
\end{definition}

\begin{lemma}[Zapped Typeability]\label{lem:zap-typ}
  If $\cseq{\Gamma}{\rtenv}{\sgroup}$ where $\sgroup = (\nu s : \rtenv')
  \config{C}$, the zapped roles of $\sgroup$ are $\rolesetp$, and $\zapenv =
  \zapfn{\rtenv'}{\rolesetp}$, then
  $\zseq{\Gamma}{\rtenv, \zapenv}{\config{C}}$.
\end{lemma}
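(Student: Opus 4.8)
The plan is to invert the \textsc{T-Session} rule on the hypothesis and then rebuild the derivation in the exception-aware system, touching only the subderivations for the zapper threads. Since $\sgroup = (\nu s : \rtenv')\config{C}$ and $s$ is a session name, the derivation of $\cseq{\Gamma}{\rtenv}{\sgroup}$ must conclude with \textsc{T-Session}; the annotated binder pins down $\rtenv'$, so inverting the rule yields $\cseq{\Gamma}{\rtenv, \rtenv'}{\config{C}}$ together with $\safe{\rtenv'}$ and $s \notin \rtenv$. By the shape of a session and by Lemma~\ref{lem:sgroup-typ} (which ensures $\rtenv$ contains only entries $a : S$), the environment $\rtenv'$ consists exactly of the session-$s$ entries $\sessindexroles{s}{\prole_i}{\rolesetq_i}{\locty_i}$ for the connected roles $\prole_1, \ldots, \prole_m$ and the zapped roles $\prole_{m+1}, \ldots, \prole_n = \rolesetp$, and $\config{C}$ is the parallel composition of the corresponding connected actors $\eactor{a_i}{M_i}{\newconnsess{s}{\prole_i}{\rolesetq_i}}{\behaviour_i}$ and zapper threads $\zaptwo{s}{\prole_j}$. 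Note that each zapper is necessarily typed by the standard \textsc{T-Zap} against a full entry $\sessindexroles{s}{\prole_j}{\rolesetq_j}{\locty_j}$ drawn from $\rtenv'$, so $\rtenv'$ does contain an entry for every zapped role.

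The key observation is that the exception-aware relation $\vdashzap$ shares every rule with $\vdash$ except \textsc{T-Zap}, and that $\rtenv, \zapenv = \rtenv, \zapfn{\rtenv'}{\rolesetp}$ differs from $\rtenv, \rtenv'$ in exactly one way: each entry $\sessindexroles{s}{\prole_j}{\rolesetq_j}{\locty_j}$ for a zapped role $\prole_j \in \rolesetp$ is replaced by the zap-index $\zapindex{s}{\prole_j}$, while all actor entries and all connected-role session entries are left intact. I would therefore prove a generalised statement by induction on the structure of $\config{C}$ (equivalently, on the \textsc{T-Par} derivation): for any parallel composition of session-$s$ actors and zappers typed by $\cseq{\Gamma}{\rtenv, \rtenv''}{\config{C}}$, with $\rtenv''$ the session-$s$ entries and $\rolesetp$ the roles carried by its zappers, we have $\zseq{\Gamma}{\rtenv, \zapfn{\rtenv''}{\rolesetp}}{\config{C}}$.

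In the induction, a connected actor is a leaf typed by \textsc{T-ConnectedActor}, a rule identical in both systems and using no zap-index, so its subderivation is reused verbatim; a single zapper $\zaptwo{s}{\prole}$ is the crux, where the standard \textsc{T-Zap} types it against $\sessindexroles{s}{\prole}{\rolesetq}{S}$ whereas the exception-aware \textsc{T-Zap} types it against $\zapindex{s}{\prole}$, matching $\zapfn{\sessindexroles{s}{\prole}{\rolesetq}{S}}{\set{\prole}} = \zapindex{s}{\prole}$ exactly; and the composite case splits the environment via \textsc{T-Par}, applies the induction hypothesis to each side, and recombines. Linearity of the runtime environment is what makes this clean: because each zapper names a distinct role $\prole_j$ in session $s$, any valid \textsc{T-Par} split must hand that zapper precisely the entry indexed by $(s, \prole_j)$, so the rewriting performed by $\zapfn{\cdot}{\cdot}$ is well-localised and commutes with the split.

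The main obstacle is not conceptual but bookkeeping. I must check that $\zapfn{\cdot}{\rolesetp}$ distributes over the splitting used by \textsc{T-Par}, i.e.\ $\zapfn{\rtenv_1, \rtenv_2}{\rolesetp} = \zapfn{\rtenv_1}{\rolesetp_1}, \zapfn{\rtenv_2}{\rolesetp_2}$ when the zappers (and hence $\rolesetp$) split as $\rolesetp_1, \rolesetp_2$ across $\config{C}_1 \parallel \config{C}_2$, and that the roles whose entries $\zapfn{\cdot}{\cdot}$ rewrites are exactly the roles carried by zappers in $\config{C}$. Both follow immediately from the definitions of session, zapped roles, and zapped environment, so no appeal to safety or to environment reduction is needed; the lemma is essentially a re-labelling of the zapper leaves in an existing typing derivation.
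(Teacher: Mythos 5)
Your proposal is correct and matches the paper's own argument: the paper proves this lemma by a straightforward induction on the typing derivation, noting that the connected-actor rules carry over unchanged and that the modified \textsc{T-Zap} rule is satisfied precisely because $\zapenv = \zapfn{\rtenv'}{\rolesetp}$ replaces each zapped role's session entry with a zap-index. Your write-up simply spells out the bookkeeping (inversion of \textsc{T-Session}, distribution of the zap operation over \textsc{T-Par} splits via linearity) that the paper leaves implicit.
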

\begin{proof}
  A straightforward induction on the derivation of
  $\cseq{\Gamma}{\rtenv}{\config{C}}$, noting that the definition of sessions
  and \textsc{T-Session} ensures that $\rtenv$ only contains actor
  runtime names, and that the  modified \textsc{T-Zap} rule is satisfied by the
  fact that $\zapenv = \zapfn{\rtenv}{\rolesetp}$.
\end{proof}

\begin{corollary}\label{cor:zap-typ}
  If $\cseq{\cdot}{\cdot}{\config{C}}$ where $\config{C}$ is in canonical form,
  then $\zseq{\cdot}{\cdot}{\config{C}}$.
\end{corollary}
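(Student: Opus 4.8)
The plan is to exploit the fact that the two systems $\vdash$ and $\vdashzap$ differ only in the \textsc{T-Zap} rule, and that zapper threads occur only inside sessions. Accordingly, I would decompose $\config{C}$ along its canonical form, reuse the standard derivation verbatim for the parts that contain no zapper thread, and invoke Lemma~\ref{lem:zap-typ} to convert the typing of each session into an exception-aware one. The base case $\config{C} = \confzero$ is immediate by \textsc{T-Zero}, which is shared between the two systems.

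For the main case, write $\config{C} = (\nu a_1 \cdots a_l)(\nu b_1 \cdots b_m)(\sgroup_1 \parallel \cdots \parallel \sgroup_n \parallel \eactor{b_1}{M_1}{\bot}{\behaviour_1} \parallel \cdots \parallel \eactor{b_m}{M_m}{\bot}{\behaviour_m})$. First I would invert the derivation of $\cseq{\cdot}{\cdot}{\config{C}}$: repeated uses of \textsc{T-Pid} strip the actor-name restrictions, introducing PID entries into $\Gamma$ and companion entries $a : S$ into the runtime environment, and \textsc{T-Par} splits the resulting environment among the $n$ sessions and the $m$ disconnected actors. Since \textsc{T-Pid} and \textsc{T-Par} are unchanged, these steps transfer directly to $\vdashzap$. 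Each disconnected actor $\eactor{b_i}{M_i}{\bot}{\behaviour_i}$ is typed by \textsc{T-DisconnectedActor}, and as it contains no zapper thread its derivation uses no instance of \textsc{T-Zap}; hence the very same derivation is a valid $\vdashzap$-derivation. For each session $\sgroup_i = (\nu s_i : \rtenv'_i)\config{D}_i$, I would apply Lemma~\ref{lem:zap-typ}, obtaining an exception-aware typing of the body $\config{D}_i$ under $\rtenv_i, \zapenv_i$, where $\zapenv_i = \zapfn{\rtenv'_i}{\rolesetp_i}$ and $\rolesetp_i$ is the set of zapped roles of $\sgroup_i$; re-binding $s_i$ and recomposing the results with the shared structural rules then yields $\zseq{\cdot}{\cdot}{\config{C}}$.

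The main obstacle is essentially environment bookkeeping at the recomposition step. Lemma~\ref{lem:zap-typ} deliberately strips the $\nu s_i$ binder and replaces the session-type entries of the zapped roles by zapper entries $\zapindex{s_i}{\prole}$, so I must check that these entries remain correctly scoped under $\nu s_i$ and that the split performed by \textsc{T-Par} lines up exactly with the environments the lemma delivers for each component. Concretely, reintroducing the session binder amounts to observing that $\zapenv_i$ is precisely the exception-aware environment under which the modified \textsc{T-Zap} types the zapper threads of $\sgroup_i$, so that no fresh reasoning about the safety of $\rtenv'_i$ is required beyond what Lemma~\ref{lem:zap-typ} already supplies. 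Everything else is a routine reassembly of the derivation, so the corollary follows as a clean consequence of Lemma~\ref{lem:zap-typ} together with the structural sharing of the two typing systems.
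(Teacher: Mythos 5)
Your proposal is correct and takes essentially the same route as the paper, whose entire proof is to apply Lemma~\ref{lem:zap-typ} to each session of the canonical form and recompose; your version merely spells out the decomposition via \textsc{T-Pid}/\textsc{T-Par}, the unchanged treatment of disconnected actors, and the re-binding step that the paper leaves implicit. The one point you gloss over (as does the paper) is that re-introducing each session binder via the exception-aware \textsc{T-Session} rule requires the zapped environment to satisfy the safety property, which is not supplied by Lemma~\ref{lem:zap-typ} itself but by the argument in Lemma~\ref{lem:zap-env-props} that replacing entries with zappers preserves safety.
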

\begin{proof}
  Follows directly by applying Lemma~\ref{lem:zap-typ} to each session.
\end{proof}

\begin{corollary}[Exception-aware session typability]\label{cor:sgroup-typ}
  If $\zseq{\cdot}{\cdot}{\config{G}[\sgroup]}$, then there exist $\Psi, \rtenv$
  such that $\rtenv$ only consists of entries of the form $a : \loctya$ and
  $\zseq{\cdot}{\cdot}{\sgroup}$.
\end{corollary}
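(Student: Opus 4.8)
The statement is the exception-aware counterpart of Lemma~\ref{lem:sgroup-typ} (Session typability): it asserts that a session $\sgroup$ occurring inside a configuration context $\config{G}$ can be typed in isolation under some term environment $\Psi$ and some runtime environment $\rtenv$ consisting only of actor entries $a : \loctya$. I read the intended conclusion as $\zseq{\Psi}{\rtenv}{\sgroup}$, since $\sgroup$ in general has free actor names whose process identifiers must be recorded in $\Psi$ and whose static-session entries populate $\rtenv$; with literally empty contexts the judgement would only hold when $\sgroup$ is closed. The plan is therefore to replay the structural induction on $\config{G}$ used for Lemma~\ref{lem:sgroup-typ}, but carried out inside the exception-aware typing relation $\zseq{}{}{}$ rather than $\cseq{}{}{}$.

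The observation that makes this a near-verbatim replay is that $\zseq{}{}{}$ coincides with $\cseq{}{}{}$ on every rule except \textsc{T-Zap} (Definition~\ref{def:zap-envs}), and the structural induction on $\config{G}$ never inspects \textsc{T-Zap}: it only manipulates the $\nu$-binders and parallel compositions of $\config{G}$, for which the rules \textsc{T-Session}, \textsc{T-Pid}, and \textsc{T-Par} are identical in the two systems. I would proceed by induction on the structure of $\config{G}$:
\begin{itemize}
  \item when $\config{G} = [~]$, the hypothesis already types $\sgroup$, and I read off $\Psi, \rtenv$ directly from the derivation;
  \item when $\config{G}$ peels an outer name restriction $(\nu n)$, I invert the corresponding \textsc{T-Pid} or \textsc{T-Session} instance and apply the induction hypothesis, adding any actor PID to $\Psi$ and its accompanying entry $a : \loctya$ to $\rtenv$;
  \item when $\config{G} = \config{G}' \parallel \config{C}$, I invert \textsc{T-Par} to split the runtime environment, note that the hole lies in $\config{G}'$, and apply the induction hypothesis to that branch.
\end{itemize}

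The only point requiring care -- and exactly the point that Lemma~\ref{lem:sgroup-typ} turns on -- is that the session entries $\sessindexroles{s}{\prole}{\rolesetq}{\loctya}$ are linear, and that since $\sgroup = (\nu s)\config{C}'$ binds its own session name $s$, every actor and zapper thread inside $\config{C}'$ refers only to $s$. Consequently the portion of the runtime environment belonging to $s$ is consumed entirely within $\sgroup$ via its internal \textsc{T-Session}, so the environment split performed by \textsc{T-Par} never carries an $s$-entry out of $\sgroup$; what is carried out is precisely the actor entries $a : \loctya$ claimed in the statement. In the exception-aware setting this extends unchanged, with the extra remark that zapper entries $\zapindex{s}{\prole}$ are themselves linear and, like the session and actor entries, belong to a single session. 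I expect the main obstacle to be purely bookkeeping: strengthening the induction hypothesis so that it threads the accumulated $\Psi, \rtenv$ through the binder and parallel cases while keeping $\sgroup$'s own $s$-entries bundled with $\sgroup$. No genuinely new ingredient beyond the reasoning of Lemma~\ref{lem:sgroup-typ} is required, which is why the result is stated as a corollary.
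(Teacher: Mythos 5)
Your proposal is correct and matches the paper's own (implicit) argument: the paper states this corollary without proof, treating it as the same structural induction on $\config{G}$ used for Lemma~\ref{lem:sgroup-typ}, replayed in the exception-aware system where only \textsc{T-Zap} differs and is never inspected by the induction — exactly your argument, including the added remark that zapper entries $\zapindex{s}{\prole}$ are linear and confined to their own session. Your reading of the conclusion as $\zseq{\Psi}{\rtenv}{\sgroup}$ (rather than the literal $\zseq{\cdot}{\cdot}{\sgroup}$, evidently a typo) also agrees with how the paper applies the corollary in the proof of Lemma~\ref{lem:flattening-typ}.
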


Definition~\ref{def:env-progress} extends straightforwardly to exception-aware runtime typing
environments.

\begin{definition}[Progress (exception-aware environments)]\label{def:zap-progress}
  An exception-aware runtime typing environment $\zapenv$ \emph{satisfies progress}, written
  $\prog{\zapenv}$, if:
  \begin{itemize}
    \item \textit{(Role progress)} for each $\sessindexroles{s}{\prole_i}{\rolesetq_i}{\loctya_i} \in
      \zapenv$ such that $\tyactive{\loctya_i}$, it is the case that $\zapenv
      \ceval^* \zapenv' \annarrow{\ltslbl}$ with $\prole \in \roles{\ltslbl}$.
\item \textit{(Eventual communication)}
      if $\zapenv \syncevalstar \zapenv' \annarrow{\ltscomm[{!}]{s}{\prole}{\qrole}{\ell}{A}}$,
      then either
        $\zapenv' \syncannarrow{\seq{\synclbl}} \zapenv''
          \annarrow{\ltscomm[{?}]{s}{\qrole}{\prole}{\ell}{A}}$,
        or
        $\zapenv' \syncannarrow{\seq{\synclbl}} \zapenv''
        \syncannarrow{\lblzap{s}{\qrole}{\prole}}$,
        where $\prole \not\in \roles{\seq{\synclbl}}$
\item \textit{(Correct termination)}
      $\zapenv \syncevalstar \zapenv' \not\synceval$ implies either
      $\finished{\zapenv}$ or $\failed{\zapenv}$.
  \end{itemize}
\end{definition}

If a runtime typing environment is deadlock-free, then any corresponding zapped
evironment will be deadlock-free.

\begin{lemma}[Exception-aware environments preserve safety and progress]\label{lem:zap-env-props}
  Suppose $\safe{\rtenv}$ and $\prog{\rtenv}$ for some $\zapenv$.

  If $\zapenv = \zapfn{\rtenv}{\rolesetp}$ for some set of roles $\rolesetp$,
  then $\safe{\zapenv}$ and $\prog{\zapenv}$.
\end{lemma}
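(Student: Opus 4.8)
The plan is to prove the two conjuncts separately, in each case reducing the exception-aware statement to the ordinary property of the underlying safe, progressing environment $\rtenv$ by means of a single structural invariant relating the two reduction systems. Throughout I extend $\roles{\cdot}$ to failure labels by $\roles{\lblzap{s}{\prole}{\qrole}} = \{\prole,\qrole\}$.

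\textbf{Safety.} I would exhibit a safety property witnessing $\safe{\zapenv}$. Take $\prop'$ to hold of an exception-aware environment exactly when it can be written $\zapfn{\rtenv'}{\rolesetp'}$ for some ordinary $\rtenv'$ with $\safe{\rtenv'}$ and some role set $\rolesetp'$, and verify the clauses of \autoref{def:safety}. Clauses (1)--(3) are purely structural and match only against role entries $\sessindexroles{s}{\prole}{\rolesetq}{S}$, never against zapper entries $\zapindex{s}{\prole}$; since the role entries of $\zapfn{\rtenv'}{\rolesetp'}$ are literally those of $\rtenv'$ at non-zapped roles, and the conclusions of each clause are local to the matched entries, they transfer directly from $\proparg{\rtenv'}$. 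The one point needing care is the connection clause~(2): a connect action $\localtyconn{\qrole}{\ell_j}{A_j}$ targets a role $\qrole$ that safety of $\rtenv'$ forces to be absent from the session, so $\qrole \notin \rolesetp'$ and the connection target is never a zapped role, giving $\sessindexrolesnoty{s}{\qrole}{\rolesets} \notin \dom{\zapenv}$ as required.

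\textbf{Reduction correspondence.} The key lemma, by one-step case analysis, is that every reduct $\zapfn{\rtenv'}{\rolesetp'} \synceval \zapenv''$ is again of this shape. An ordinary synchronisation (\textsc{ET-Comm}, \textsc{ET-ConnSync}, \textsc{ET-Disconn}) can only fire on role entries, hence acts on two non-zapped roles and mirrors a reduction $\rtenv' \synceval \rtenv''$ on those same entries, leaving the zapped set fixed, so $\safe{\rtenv''}$ follows by clause~(4) for $\rtenv'$; each new failure rule (\textsc{ET-CommFail}, \textsc{ET-DisconnFail}, \textsc{ET-WaitFail}) leaves $\rtenv'$ untouched and merely moves one further role into the zapped set. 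Iterating, every $\zapenv'$ reachable from $\zapfn{\rtenv}{\rolesetp}$ satisfies $\zapenv' = \zapfn{\rtenv'}{\rolesetp'}$ with $\rtenv \syncevalstar \rtenv'$, $\safe{\rtenv'}$, and $\rolesetp \subseteq \rolesetp'$. This yields clause~(4), completing $\safe{\zapenv}$, and underlies the progress argument. Using it I verify the clauses of \autoref{def:zap-progress}: for \emph{role progress}, given an active non-zapped role $\prole$, I reduce $\zapenv$ until $\prole$'s head action is enabled and case on whether the partner it references is live or zapped---if live, the genuine synchronisation guaranteed enabled by role progress and eventual communication of $\rtenv$ (mirrored by ordinary reductions) fires with $\prole$ in its label; if zapped, the corresponding failure rule cancels $\prole$, again placing $\prole$ in the emitted label. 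For \emph{eventual communication}, a reachable pending send is either answered by the matching receive guaranteed by $\prog{\rtenv}$, or its partner is already zapped and a \textsc{ET-CommFail} step cancels it---exactly the two disjuncts in the exception-aware definition.

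\textbf{Main obstacle.} The hard clause is \emph{correct termination}. One cannot simply invoke correct termination of $\rtenv$ on the underlying reduct $\rtenv'$, because $\zapenv'$ may be stuck while $\rtenv'$ still reduces---precisely when the only available synchronisation in $\rtenv'$ is between two roles that are \emph{both} zapped in $\zapenv'$, which therefore have no live partner to trigger a failure rule. The argument I would give works instead on the live fragment: in a non-reducing $\zapenv'$ no non-zapped role can have a head action directed at a zapper (else a failure rule fires) and no two live roles can synchronise, so the live fragment is internally non-reducing; deadlock-freedom of the underlying well-formed protocol then forces this fragment to be either empty---so $\zapenv'$ consists only of zappers and $\failed{\zapenv'}$---or a single disconnected role at $\localend$, i.e.\ $\finished{\zapenv'}$. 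Making this last step precise, in particular ruling out residual ``mixed'' stuck states, is where the unique-initiator and deadlock-freedom structure of the protocol is essential, and is the part demanding the most care.
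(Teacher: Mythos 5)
Your proposal is correct and shares the paper's overall strategy: safety transfers because the safety clauses only ever inspect live role entries, and progress transfers by casing on whether the partner of an enabled action is live or zapped, with \textsc{ET-CommFail}, \textsc{ET-DisconnFail} and \textsc{ET-WaitFail} covering the zapped case. The difference is in the decomposition, and it is a genuine improvement in precision. The paper's proof is a direct sketch with no intermediate lemma; you instead isolate a reduction-correspondence invariant --- every reduct of $\zapfn{\rtenv}{\rolesetp}$ is again of the form $\zapfn{\rtenv'}{\rolesetp'}$ with $\rtenv \syncevalstar \rtenv'$, $\safe{\rtenv'}$ and $\rolesetp \subseteq \rolesetp'$ --- which the paper only uses implicitly (``we can use the original reduction sequence''), and you exhibit a concrete witnessing safety property (the image of $\zapfn{\cdot}{\cdot}$ on safe environments) rather than asserting that interaction with a zapper is ``vacuously safe''; this makes clause (4) of \autoref{def:safety} and the clause-(2) corner case (connection targets can never be zapped, since zapped roles still occupy $\dom{\rtenv'}$) precise. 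The most substantive divergence is correct termination: the paper argues forward from role progress (``either all communications succeed, or failure propagates to everyone''), whereas you analyse stuck environments directly, and in doing so you identify exactly the subtlety the paper's sketch elides --- one cannot appeal to correct termination of the underlying $\rtenv'$, because $\zapenv'$ can be stuck while $\rtenv'$ still reduces among roles that are zapped in $\zapenv'$. Your route via the internally non-reducing live fragment is sound, but note two things: first, its final step (excluding stuck states mixing a surviving live role with zappers) is left informal, which puts your proof at the same level of rigour as the paper's own sketch rather than above it on this clause; second, that step should be discharged from the hypothesis $\prog{\rtenv}$ itself (correct termination, role progress and eventual communication of $\rtenv$, together with the fact that a role can only empty its connected-role set by waiting, so a surviving $\localend$ role forces all of its partners to have disconnected cleanly, which failure propagation forbids while a zapper persists) rather than from ``unique-initiator and deadlock-freedom structure of the protocol'', since the lemma is stated for arbitrary safe, progressing environments and not only for those arising from well-formed programs.
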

\begin{proof}[Proof sketch]\hfill
{
  \subparagraph*{Safety.}
  Safety follows since exception-aware environments do not modify session types, but
  only replace them with zappers. Interaction with a zapper is vacuously safe,
  so it follows that if $\safe{\rtenv}$ then $\safe{\zapenv}$.

  \subparagraph*{Progress.}
  For (role progress), we know that any active role in $\rtenv$ will eventually reduce.
  Presence of a zapped role means that any other role in the session trying to
  reduce with the zapped role will itself be zapped due to \textsc{ET-CommFail},
  \textsc{ET-DisconnFail}, or \textsc{ET-WaitFail}; connections will not occur,
  but this does not matter since the accepting role will not be in the session.

  For (eventual communication),
  we have that
  if $\rtenv \syncevalstar \rtenv' \annarrow{\ltscomm[{!}]{s}{\prole}{\qrole}{\ell}{A}}$,
      then $\rtenv' \syncannarrow{\seq{\synclbl}} \rtenv''
        \annarrow{\ltscomm[{?}]{s}{\qrole}{\prole}{\ell}{A}} \rtenv'''$,
        where $\prole \not\in \roles{\seq{\synclbl}}$.

  Suppose
  $\zapenv \syncevalstar \zapenv'
  \annarrow{\ltscomm[{!}]{s}{\prole}{\qrole}{\ell}{A}}$.
  If $\zapenv'$ contains zapped roles, either they are irrelevant to $\prole$'s reduction
  and we can use the original reduction sequence to show
  $ \zapenv \syncevalstar \zapenv''
  \annarrow{\ltscomm[{?}]{s}{\qrole}{\prole}{\ell}{A}} \rtenv'''$,
  or they are relevant to reduction and the exception propagates, resulting in
  $\zapenv' \syncannarrow{\seq{\synclbl}} \zapenv''
  \syncannarrow{\lblzap{s}{\qrole}{\prole}} \rtenv'''$.

  For (correct termination), we know that $\rtenv\not\synceval$ implies that
  $\finished{\rtenv}$. By (1), we know that each active role in $\zapenv$ will
  eventually reduce, so it follows that either $\finished{\zapenv}$ (if all
  communications occur successfully) or $\failed{\zapenv}$ if a role has failed
  and the failure has propagated to all other participants.
}
\end{proof}

\subsection{Flattenings}

\begin{definition}[Input / Output-Directed Choices]
  A choice session type $\sumact{i \in I}{\locact_i \then \loctya_i}$ is a
  \emph{output-directed} if each $\locact_i$ is either of the form
  $\localtysend{\prole_i}{\ell_i}{A_i}$.

  A choice session type is \emph{input-directed} if it is of the form
  $\sumact{i \in I}{\localcomm[\commdir]{\prole}{\ell_i}{A_i}}$ for some
  $\commdir \in \{ {?}, {??} \}$.

  For convenience, we write $\outloctya$ and $\outsumact{i \in I}{\locact_i
  \then \loctya_i}$ to denote an output-directed choice, and $\inloctya$ and
  $\insumact{i \in I}{\locact_i \then \loctya_i}$ to denote an input-directed
  choice.
\end{definition}

\begin{definition}[Output-flat session type]
  An output-directed session type $\outsumact{i \in I}{\loctya_i \then \loctya_i
  }$ is \emph{output-flat} if it is a unary choice, i.e., can be written
  $\localcomm[{!}]{\prole}{\ell}{A} \then \loctya$ or
  $\localcomm[{!!}]{\prole}{\ell}{A} \then \loctya$.

  An exception-aware runtime typing environment $\zapenv$ is \emph{output-flat},
  written $\oflat{\zapenv}$, if each output-directed choice type in $\zapenv$ is
  output-flat.
\end{definition}

We now define a \emph{flattening} of an output choice. A session type is a
flattening of an output sum if it is a unary sum consisting of one of the
choices.

\begin{definition}[Flattening (types)]\label{def:flattening-types}
  A session type $\loctya'$ is a \emph{flattening of} an output-directed choice
  $\loctya = \outsumact{i \in I}{\locact_i \then \loctyb_i}$ if $\loctya' =
  \locact_j \then \loctyb_j$ for some $j \in I$.
\end{definition}

We can extend flattening to environments by flattening all output-directed
choices.

\begin{definition}[Flattening (environments)]\label{def:flattening-envs}
  Given an exception-aware runtime typing environment:
  \[
    \zapenv =
    \sessindexroles{s}{\prole_1}{\rolesetr_1}{\outloctya_{1}}, \ldots,
    \sessindexroles{s}{\prole_l}{\rolesetr_l}{\outloctya_{l}},
    \sessindexroles{s}{\qrole_1}{\rolesets_1}{\loctyb_1},
    \ldots,
    \sessindexroles{s}{\qrole_m}{\rolesets_m}{\loctyb_m},
    \zapindex{s}{\trole_1}, \ldots, \zapindex{s}{\trole_n}
  \]
  where each $\loctyb$ is not an output-directed choice, we say that $\zapenv'$ is a
  \emph{flattening} of $\zapenv$ if:
\[
    \zapenv =
    \sessindexroles{s}{\prole_1}{\rolesetr_1}{\outloctya{'}_{1}}, \ldots,
    \sessindexroles{s}{\prole_l}{\rolesetr_l}{\outloctya{'}_{l}},
    \sessindexroles{s}{\qrole_1}{\rolesets_1}{\loctyb_1},
    \ldots,
    \sessindexroles{s}{\qrole_m}{\rolesets_m}{\loctyb_m},
    \zapindex{s}{\trole_1}, \ldots, \zapindex{s}{\trole_n}
  \]

  where each $\outloctya{'}_{i}$ is a flattening of $\outloctya_{i}$.
\end{definition}

Progress states that however an environment reduces, an active role will be
eventually be able to reduce. In the case that the environment does not reduce,
it is final.

\begin{definition}[Ready]
  A configuration $\config{C}$ is \emph{ready}, written $\ready{\config{C}}$,
  if all subconfigurations are either zapper threads or actors evaluating terms
  of the form $E[M]$ where $M$ is a communication action.
\end{definition}

\begin{lemma}[Flattening typability]\label{lem:flattening-typ}
  If $\zseq{\cdot}{\cdot}{\config{G}[\sgroup]}$ where $\sgroup = (\nu
  s: \zapenv)\config{C}$ and $\ready{\config{C}}$,
  then there exists some $\zapenv'$ such that $\oflat{\zapenv'}$ where
  $\zapenv'$ is a flattening of $\zapenv$, and
  $\zseq{\cdot}{\cdot}{\config{G}[(\nu s : \zapenv') \config{C}]}$.
\end{lemma}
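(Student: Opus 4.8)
The plan is to reduce the statement to a single session and then re-type each participating actor against a flattened local type. First I would peel off the configuration context $\config{G}$. Flattening only rewrites the session-$s$ entries $\sessindexroles{s}{\prole}{\rolesetq}{S}$ of $\zapenv$, and by linearity of the runtime environment these all live inside the binder $(\nu s)$, i.e.\ inside $\sgroup$; the actor-name entries $a : S$ that $\sgroup$ exports to $\config{G}$ (which by \autoref{cor:sgroup-typ} are the only entries $\sgroup$ contributes to its outer environment) are untouched. Hence, re-typing $(\nu s : \zapenv')\config{C}$ against the same outer environment and plugging into the existing derivation for $\config{G}$ via the rules for parallel composition and name restriction suffices. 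By \textsc{T-Session} this reduces to three obligations: (i) exhibit a flattening $\zapenv'$ of the session part of $\zapenv$, (ii) discharge the safety side-condition for $\zapenv'$, and (iii) re-type $\config{C}$ under the flattened environment in the exception-aware system.

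For (i) and (iii) I would proceed entry by entry. Zapper entries $\zapindex{s}{\prole}$ are copied verbatim. For a connected-role entry $\sessindexroles{s}{\prole}{\rolesetq}{S}$, linearity pins down the unique actor playing $\prole$, say $\eactor{a}{E[M]}{\newconnsess{s}{\prole}{\rolesetq}}{\behaviour}$, and $\ready{\config{C}}$ forces $M$ to be a communication action. If $S$ (after unfolding any guarding $\recty{X}{S}$ equi-recursively) is an input-directed choice, a wait, a disconnect, or $\localend$, it is left unchanged and the actor re-types immediately. If $S$ is an output-directed choice $\outsumact{i \in I}{\locact_i \then \loctya_i}$, then by \autoref{lem:subterm-typability} the only rules that can type $M$ at precondition $S$ are \textsc{T-Send} and \textsc{T-Conn}, so $M$ is a send or connect selecting a unique branch $j \in I$; I set the flattening of this entry to the unary choice $\locact_j \then \loctya_j$. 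The body then re-types at the flattened precondition by \textsc{T-Send}/\textsc{T-Conn} with the singleton choice and the unchanged postcondition $\loctya_j$, and since $E$ is pure below its top-level frame, \autoref{lem:pure-replacement} reassembles $E[M]$.

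The main obstacle is the top-level exception-handling frame. When $M$ sits inside $\trycatch{[~]}{N}$, rule \textsc{T-Try} forces the handler $N$ to be typed at the very same precondition $S$ and postcondition $\loctya_j$ as the body, so I must also re-type $N$ at the flattened precondition $\locact_j \then \loctya_j$. The lever I would use is that from an output choice the only session-advancing moves available to a well-typed term are that choice's own output branches, so the first session action $N$ performs must be some $\locact_k$, advancing to $\loctya_k$; to reach the target postcondition $\loctya_j$ it should have to commit to branch $j$ itself. I would package this as an auxiliary lemma (a term typed from an output choice with postcondition equal to the $j$-th continuation re-types at the $j$-th flattening), with the body being the degenerate case where the handler is the action. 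This is precisely the delicate point: the argument needs session advancement to be irreversible, and the subcase where two branches share a continuation ($\loctya_k = \loctya_j$ with $k \ne j$) must be ruled out or absorbed, which is where I expect the real work to concentrate — likely leaning on the syntactic validity of protocols and the single-action scope of handlers. The remaining obligations are light: $\oflat{\zapenv'}$ holds by construction, and $\safe{\zapenv'}$ follows because flattening only deletes output branches, which can only remove communication obligations (clauses (1)–(2) of \autoref{def:safety} persist and closure under reduction carries over since $\zapenv'$'s reductions form a sub-family of $\zapenv$'s). Recomposing the re-typed actors and copied zappers under \textsc{T-Session} with $\zapenv'$ then yields $\zseq{\cdot}{\cdot}{\config{G}[(\nu s : \zapenv')\config{C}]}$.
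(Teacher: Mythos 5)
You follow essentially the same route as the paper's own proof: peel off $\config{G}$ via Corollary~\ref{cor:sgroup-typ}, expose $\zapenv$ with \textsc{T-Session}, treat entries case by case (zapper and non-output-directed entries copied verbatim), use $\ready{\config{C}}$ together with Lemma~\ref{lem:subterm-typability} to identify, for each output-directed entry, the unique send/connect branch $j$ selected by that actor's redex, flatten to that branch, and reassemble with a replacement lemma; the $\oflat{\zapenv'}$ and safety obligations are dispatched much as you describe. The one place you diverge from the paper is the top-level exception-handling frame, and that is exactly where your argument stalls.

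The auxiliary lemma you propose there --- a term typed from an output choice with postcondition equal to the $j$-th continuation re-types at the $j$-th flattening --- is false, and the shared-continuation subcase you hoped to ``rule out or absorb'' is a genuine obstruction, not an artefact of your strategy. Take $S = \localtysend{\qrole}{\ell_1}{\one}\then\localtydisconn{\qrole} \,+\, \localtysend{\qrole}{\ell_2}{\one}\then\localtydisconn{\qrole}$, which is syntactically valid and can be made safe by giving $\qrole$ the matching directed input choice, and consider an actor playing $\prole$ whose term is $E[\newsend{\ell_1}{()}{\qrole}]$ with $E = \efflet{x}{\trycatch{[~]}{\newsend{\ell_2}{()}{\qrole}}}{\newdisconn{\qrole}}$. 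By \textsc{T-Try} both the action and the handler are typed at precondition $S$ with postcondition $\localtydisconn{\qrole}$, and the configuration is ready; yet under the flattening to branch $1$ the handler fails \textsc{T-Send}, while under the flattening to branch $2$ the redex itself does. So no flattening re-types this actor at all, and no reformulation of your auxiliary lemma can supply the missing step --- syntactic validity and the single-action scope of handlers do not exclude this case. For what it is worth, the paper's own proof does not clear this hurdle either: it invokes Lemma~\ref{lem:replacement} even though the precondition of the redex changes to the flattened type, a step only licensed by Lemma~\ref{lem:pure-replacement} and only when $E$ is pure (i.e., contains no handler frame); the handler is never re-typed. In other words, you correctly located a soft spot that the paper silently glosses by treating $E$ as pure, but your proposal, as written, does not close it, and the gap cannot be closed in the form you suggest.
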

\begin{proof}
  By Lemma~\ref{cor:sgroup-typ}, we have that there exist $\Psi, \rtenv$ such
  that $\rtenv$ does not contain session entries, and
  $\zseq{\Psi}{\rtenv}{\sgroup}$.

  By \textsc{T-Session}, $\zseq{\Psi}{\rtenv, \zapenv}{\config{C}}$.

  The result follows by induction on the derivation
  of $\zseq{\Psi}{\rtenv, \zapenv}{\config{C}}$. In the case of \textsc{T-Zap},
  the result follows immediately.

  In the case of \textsc{E-Actor}, we have that:
  \begin{enumerate}
    \item $\config{C} =\eactor{a}{M}{\newconnsess{s}{\prole}{\rolesetr}}{\behaviour}$
    \item $\zseq{\Psi}{a : \loctyb, \sessindexroles{s}{\prole}{\rolesetr}{\loctya}}{\config{C}}$
    \item $\tseq{\Psi}{\loctya}{M}{A}{\localend}$
  \end{enumerate}

  Since $\ready{\config{C}}$, it must be the case that the actor is evaluating a
  term of the form $E[N]$ where $N$ is a communication action.

  By Lemma~\ref{lem:subterm-typability}, $\tseq{\Psi}{\loctya}{N}{B}{\loctya'}$ for
  some $B, \loctya'$.

  In the case that $N$ is a $\calcwd{disconnect}$, $\calcwd{wait}$,
  $\calcwd{receive}$, or $\calcwd{accept}$ action, then the session type cannot
  be output-directed and the result follows immediately.

  In the case that $M = \newconn{\ell_j}{V_j}{W}{\qrole}$, by \textsc{T-Connect} it must be the case
  that $\loctya = \outsumact{i \in I}{\locact_i \then \loctya'_i}$ where
  $\localtyconn{\qrole}{\ell_j}{A_j} \in \{ \locact_i \}_{i \in I}$ and
  $\vseq{\Psi}{V_j}{A_j}$. Again by \textsc{T-Connect} and
  Lemma~\ref{lem:replacement}, we can show that
  $\tseq{\Psi}{\localtyconn{\qrole}{\ell_j}{A_j} \then
  \loctya'_i}{E[\newconn{\ell_j}{V_j}{W}{\qrole}]}{B}{\loctya'}$, where
  $\localtyconn{\qrole}{\ell_j}{A_j} \then \loctya'_i$ is a flattening of
  $\loctya$, as required.

  The same argument holds for send actions.
\end{proof}

\begin{lemma}[Flattening preserves
  reducibility]\label{lem:flattening-reducibility}
  Suppose $\safe{\zapenv}$, $\prog{\zapenv}$, and $\zapenv \synceval$.

  If $\hat{\zapenv}$ is a flattening of $\zapenv$, then $\hat{\zapenv} \synceval$.
\end{lemma}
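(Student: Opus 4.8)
The plan is to argue by contradiction: assume the flattening $\hat{\zapenv}$ does not reduce, and derive a contradiction with $\prog{\zapenv}$. The only way $\hat{\zapenv}$ differs from $\zapenv$ is that every top-level output-directed choice has been narrowed to a single send or connection action; every input-directed choice, wait type, disconnection type, recursion variable, actor entry, and zapper entry is left untouched. It therefore suffices to exhibit one enabled synchronisation for $\hat{\zapenv}$. Since $\zapenv \synceval$, the environment is neither final nor failed, and because a unary send is still active while the inactive types (\localend{} and accept choices) are never flattened, $\hat{\zapenv}$ still contains at least one active role.

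First I would dispatch the cases that flattening cannot damage. If the flattened action of some active role $\prole$ is a connection $\localtyconn{\qrole}{\ell}{A}$, then since this branch already occurs in $\prole$'s choice in $\zapenv$, clause (2) of $\safe{\zapenv}$ (\autoref{def:safety}) guarantees $\qrole$ is not among $\prole$'s connected roles, is not yet in the session, and that $\ty{\qrole}$ can accept $\ell$; hence \textsc{ET-Conn}/\textsc{ET-ConnSync} fire, contradicting $\hat{\zapenv} \not\synceval$. Similarly, if any blocked send, wait, or disconnection targets a role for which $\hat{\zapenv}$ holds a zapper $\zapindex{s}{\qrole}$, then \textsc{ET-CommFail}, \textsc{ET-WaitFail}, or \textsc{ET-DisconnFail} fires, again a contradiction. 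Finally, if $\prole$'s flattened action is a send $\localtysend{\qrole}{\ell}{A}$ and $\qrole$ is an input-directed choice receiving from $\prole$, then clause (1) of $\safe{\zapenv}$ forces $\ell$ to be receivable with matching payload, so \textsc{ET-Comm} fires. After these eliminations every active role of $\hat{\zapenv}$ is blocked on a communication whose partner is a live role that is not in the dual state.

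The residual situation is a genuine deadlock, and ruling it out is the crux. I would form the blocked-on dependency graph on the active roles: a role sending to, receiving from, waiting for, or disconnecting from $\qrole$ gets one out-edge to $\qrole$, and progress forces each such target to itself be active. As the graph is finite and every node has an out-edge, it contains a cycle $\prole_1 \to \prole_2 \to \cdots \to \prole_k \to \prole_1$. I would then contradict $\prog{\zapenv}$ (\autoref{def:zap-progress}). The key lever is the eventual-communication clause: each flattened send from $\prole_i$ to $\prole_{i+1}$ is locally enabled (its target is connected), so there must exist a synchronisation sequence \emph{not mentioning} $\prole_i$ after which $\prole_{i+1}$ can receive from $\prole_i$. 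Chasing the cycle forces these resolving sequences to chain, and since the side condition forbids the originating sender from reappearing, closing the loop back at $\prole_1$ becomes impossible, contradicting eventual communication.

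I expect this final step to be the main obstacle, for two reasons. First, eventual communication is stated only for send/receive pairs, so cycle edges arising from wait/disconnection blocking must instead be discharged using the role-progress clause (an active wait or disconnection type must eventually fire, hence its partner must eventually reach the dual state), and these two kinds of edge must be combined coherently. Second, one must propagate the ``not involving $\prole_i$'' side-conditions consistently around the cycle; I would make this precise by reasoning about a \emph{minimal} such cycle over the connect-free, zapper-free deadlocked residue, where the length-minimality rules out shortcuts that would otherwise let a resolving sequence avoid re-engaging the originating role.
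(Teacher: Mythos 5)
Your proposal is correct and takes essentially the same route as the paper's proof: assume $\hat{\zapenv} \not\synceval$ for contradiction, observe that flattening leaves disconnect/wait actions untouched and that flattened connect branches always fire (so only flattened \emph{send} branches can be blocked), then invoke the eventual-communication clause of $\prog{\zapenv}$ together with an acyclicity argument over the blocked roles. The paper compresses your dependency-graph/minimal-cycle analysis into a single remark---pick a blocked send whose resolving sequence mentions no flattened role, which ``must exist since otherwise there would be a cycle, contradicting progress''---and then replays that sequence in $\hat{\zapenv}$, so the extra care you flag around chaining the ``not involving $\prole_i$'' side-conditions is precisely the detail the paper itself elides.
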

\begin{proof}
  Assume for the sake of contradiction that $\hat{\zapenv} \not\synceval$.

  For this to be the case, each $\synclbl$ such that $\zapenv
  \syncannarrow{\synclbl}$ must either be $\ltspair{s}{\prole}{\qrole}{\ell}$ or
  $\ltsconnpair{s}{\prole}{\qrole}{\ell}$,
  but instead, $\hat{\zapenv}
  \annarrow{\ltscomm[{!}]{s}{\prole}{\qrole'}{\ell'}{\tya}}$ for some $\qrole'
  \ne \qrole$.
  Note that $\synclbl$ cannot be a disconnection action, since these are
  unaffected by flattening, and it cannot be the case that
  $\hat{\zapenv}
  \annarrow{\ltsconnpair{s}{\prole}{\qrole'}{\ell'}{\tya}}$, since this is a
  synchronisation action and could reduce.

  By \textit{(eventual communication)}, either
  $\zapenv \syncannarrow{\seq{\synclbl'}} \zapenv'
  \annarrow{\ltscomm[{?}]{s}{\prole}{\qrole'}{\ell'}{A}}$ or
$\zapenv \syncannarrow{\seq{\synclbl'}} \zapenv'
  \syncannarrow{\lblzap{\qrole'}{\prole}}$.

  Pick some $\synclbl$ such that $\seq{\synclbl'}$ contains no roles affected by
  flattening; one such $\synclbl$ must exist since otherwise there would be a
  cycle, contradicting progress.

  It follows that either:

  \begin{itemize}
    \item $\hat{\zapenv} \syncannarrow{\seq{\synclbl}} \zapenv'
      \annarrow{\ltscomm[{!}]{s}{\prole}{\qrole'}{\ell'}{\tya}}
      \annarrow{\ltscomm[{?}]{s}{\qrole'}{\prole}{\ell'}{\tya}}$
      meaning
      $\hat{\zapenv} \syncannarrow{\seq{\synclbl}} \zapenv'
      \syncannarrow{\ltspair{s}{\prole}{\qrole'}{\ell'}}$, a contradiction; or
    \item
      $\zapenv \syncannarrow{\seq{\synclbl'}} \zapenv'
      \syncannarrow{\lblzap{\qrole'}{\prole}}$, also a contradiction.
  \end{itemize}

\end{proof}

\subsection{Session progress}

\begin{lemma}[Readiness]\label{lem:readiness}
  Suppose $\zseq{\cdot}{\cdot}{\config{C}}$ where
  $\config{C} = \config{G}[\sgroup]$,
  $\config{C}$ does not contain unmatched discovers,
  $\sgroup$ is not a failed session,
  $\sgroup = (\nu s: \zapenv)\config{D}$, and $\prog{\zapenv}$.

  Either $\config{C} \ceval$ or $\ready{\config{D}}$.
\end{lemma}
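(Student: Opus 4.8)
The plan is to prove the dichotomy by a case analysis driven by term-level progress. Recall that $\sgroup = (\nu s : \zapenv)\config{D}$ and, by the definition of a session, $\config{D}$ is a parallel composition of connected actors $\eactor{a_i}{M_i}{\newconnsess{s}{\prole_i}{\rolesetq_i}}{\behaviour_i}$ together with zapper threads $\zap{\sessindextwo{s}{\prole_j}}$. Since zapper threads are already permitted by the definition of $\ready{\cdot}$, and since $\sgroup$ is not a failed session (so at least one connected actor is present), it suffices to show that for each connected actor either its term is blocked on a communication action (making that component ready) or the whole configuration $\config{C}$ can take a step. First I would apply Lemma~\ref{lem:term-progress} to the term $M_i$ of each connected actor, which classifies $M_i$ as a returned value, a term-reducible expression, or an expression $E[N]$ whose hole is filled by $\raiseexn$, an adaptation/concurrency construct, or a communication action.

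The bulk of the argument is then a routine split on $N$, showing in each non-communication case that $\config{C} \ceval$, lifting the redex through the session context with the administrative rules \textsc{E-Par}, \textsc{E-Nu}, and \textsc{E-Equiv}. If $M_i \teval N'$ then \textsc{E-LiftM} applies. If $N = \new{u}$ or $N = \self$, rules \textsc{E-New}/\textsc{E-Self} apply unconditionally. If $N = \replace{V}{\behaviour'}$, typing forces $V$ to be a PID $b$, and since $\config{C}$ is closed the linearity enforced by \textsc{T-Pid} guarantees a unique actor named $b$ somewhere in $\config{C}$, so \textsc{E-Replace} or \textsc{E-ReplaceSelf} fires. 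If $N = \newdiscover{S}$, the hypothesis that $\config{C}$ contains no unmatched discover yields a non-terminated actor of static type $S$, enabling \textsc{E-Discover}. If $N = \raiseexn$, then either $M_i$ term-reduces (a raise under a $\calcwd{try}$ frame reduces by \textsc{E-TryRaise}, handled above), or $M_i = \ep[\raiseexn]$ is a bare raise in a pure context, in which case \textsc{E-FailS} fires and emits a zapper. In all of these cases $\config{C} \ceval$; the only remaining possibility is that $N$ is a communication action, in which case that actor is of the required blocked form.

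The delicate case is $M_i = \effreturn{V}$ with the actor still connected. By \textsc{T-Return} the current session type equals its postcondition, which \textsc{T-ConnectedActor} forces to be $\localend$, so this role has finished. Here I would appeal to $\prog{\zapenv}$ together with safety to argue that a finished, still-connected role must have an empty connected-role set: otherwise a partner role would remain active yet forever blocked on this one, contradicting the role-progress and eventual-communication clauses of Definition~\ref{def:env-progress}. The same global invariants, together with the assumption that $\sgroup$ is not a failed session, rule out residual zapper threads co-existing with the finished role (any crash would have propagated to it and zapped it before completion), so up to structural congruence the session is the singleton $(\nu s)(\eactor{a_i}{\effreturn{V}}{\newconnsess{s}{\prole_i}{\emptyset}}{\behaviour_i})$ and \textsc{E-Complete} fires, giving $\config{C} \ceval$.

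Assembling the cases: if any connected actor falls into a non-communication case the configuration steps, discharging the first disjunct; otherwise every connected actor is of the form $E[M]$ with $M$ a communication action and every other component is a zapper, which is exactly $\ready{\config{D}}$. I expect the main obstacle to be precisely the finished-but-connected case, where the syntactic reasoning of term progress is insufficient and one must invoke the global invariants carried by $\prog{\zapenv}$ and safety to justify that \textsc{E-Complete} is enabled. The adaptation cases are comparatively routine, but they rely essentially on reasoning about the whole closed configuration $\config{C}$ rather than the session $\config{D}$ in isolation, since replacement targets and discovery matches may live outside $\config{D}$.
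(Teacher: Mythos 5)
Your proof follows essentially the same route as the paper's: apply the term-progress lemma (Lemma~\ref{lem:term-progress}) to each connected actor, dispatch the reducible, raise, and adaptation cases via the corresponding configuration rules (\textsc{E-LiftM}, \textsc{E-TryRaise}/\textsc{E-FailS}, \textsc{E-New}/\textsc{E-Replace}/\textsc{E-ReplaceSelf}/\textsc{E-Discover}/\textsc{E-Self}), handle a returned value by using $\prog{\zapenv}$ to conclude the actor must be alone in the session so that \textsc{E-Complete} fires, and conclude $\ready{\config{D}}$ when only communication actions remain. Your extra elaborations (linearity of the runtime environment justifying the existence of replacement targets, and explicitly ruling out residual zapper threads alongside a finished role) are correct fillings-in of steps the paper leaves implicit.
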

\begin{proof}
  For each $\eactor{a}{M}{\newconnsess{s}{\prole}{\rolesetq}}{\behaviour}$,
  by Lemma~\ref{lem:term-progress}, $M$ can either reduce or is either a value,
  adaptation construct, or communication construct.

  \begin{itemize}
    \item If $M$ can reduce, then the configuration can reduce by
      \textsc{E-LiftM}.
    \item If $M = E[\raiseexn]$, then the configuration either can reduce by
      \textsc{E-TryRaise} and \textsc{E-LiftM}, or \textsc{E-FailS}.
    \item If $M$ is an adaptation action, due to the absence of unmatched
      discovers, the configuration can reduce by \textsc{E-New},
      \textsc{E-Replace}, \textsc{E-ReplaceSelf}, \textsc{E-Discover}, or
      \textsc{E-Self}.
    \item If $M$ is a value, then by \textsc{T-ConnectedActor},
      $\tseq{\Psi}{\localend}{\effreturn{V}}{A}{\localend}$. As a consequence of
      $\prog{\zapenv}$, $a$ must be the only actor in the session and thus the
      configuration could reduce by \textsc{E-Complete}.
    \item It follows that all terms must be evaluating communication actions,
      satisfying $\ready{\config{D}}$.
  \end{itemize}
\end{proof}

\begin{lemma}[Exception-aware session progress]\label{lem:flat-progress}
  If $\zseq{\cdot}{\cdot}{\config{C}}$ where:
  \begin{enumerate}
    \item $\config{C} \equiv \config{G}[\sgroup]$,
    \item $\sgroup = (\nu s : \zapenv) \config{D}$,
    \item $\oflat{\zapenv}$,
    \item $\ready{\config{C}}$,
    \item $\zapenv \synceval$
  \end{enumerate}
  then $\config{C} \ceval$.
\end{lemma}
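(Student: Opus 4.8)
Let me understand what Lemma~\ref{lem:flat-progress} (Exception-aware session progress) is asking. We have a well-typed configuration $\config{C} \equiv \config{G}[\sgroup]$ where $\sgroup = (\nu s : \zapenv)\config{D}$, the environment $\zapenv$ is output-flat, the configuration is ready (every actor is blocked on a communication action), and $\zapenv$ reduces (i.e. $\zapenv \synceval$). We must show $\config{C} \ceval$.

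The key intuition: output-flatness means every output choice in $\zapenv$ has been restricted to a single option, so there is no nondeterminism at the type level about which output action a sending role will perform. And $\ready{\config{D}}$ means every actor is parked at a communication construct. So the reduction $\zapenv \synceval$ — which must be via a synchronization rule (ET-Comm, ET-ConnSync, ET-Disconn, or one of the zapper-failure rules ET-CommFail/ET-DisconnFail/ET-WaitFail) — should correspond directly to a configuration reduction.

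**The plan.**

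The plan is to proceed by case analysis on the synchronization label $\synclbl$ witnessing $\zapenv \synceval \zapenv'$. Since $\zapenv$ reduces, there is some synchronization step $\zapenv \syncannarrow{\synclbl} \zapenv'$. I would enumerate the possible forms of $\synclbl$: a communication pair $\ltspair{s}{\prole}{\qrole}{\ell}$ (via ET-Comm), a connection $\ltsconnpair{s}{\prole}{\qrole}{\ell}$ (via ET-ConnSync), a disconnection pair $\ltsdisconnpair{s}{\prole}{\qrole}$ (via ET-Disconn), or a zapper label $\lblzap{s}{\prole}{\qrole}$ (via ET-CommFail, ET-DisconnFail, or ET-WaitFail). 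For each case, I would reconstruct the corresponding configuration-level reduction.

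For each synchronization case, the strategy is: the label tells me which roles must be involved and what session types they must currently hold. Using the runtime typing derivation of $\sgroup$ — specifically inverting \textsc{T-Session} to expose $\zseq{\Psi}{\rtenv,\zapenv}{\config{D}}$ and then inverting \textsc{T-ConnectedActor}/\textsc{T-Zap} via subterm typability (Lemma~\ref{lem:subterm-typability}) — I can identify, for each role mentioned in $\synclbl$, the actor (or zapper thread) in $\config{D}$ realizing that role and the communication construct it is currently evaluating. Because $\ready{\config{D}}$ guarantees each relevant actor is genuinely parked at a communication construct matching its session type, I can then apply the matching configuration reduction rule: \textsc{E-Comm} for a communication pair, \textsc{E-Conn}/\textsc{E-ConnInit} for a connection (using that the accepting role's actor is in the session, since the environment entry $\sessindexroles{s}{\qrole}{\prole}{\cdots}$ was introduced), \textsc{E-Disconn} for a disconnection, and \textsc{E-CommRaise}/\textsc{E-FailS} (with the zapper thread already present) for a zapper label. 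Output-flatness is what lets me conclude the sending side's construct is uniquely determined and actually fires the branch the label names, rather than some other output that the environment might have permitted before flattening.

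**The main obstacle.**

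The hard part will be the connection case. A connection action $\ltsconnpair{s}{\prole}{\qrole}{\ell}$ is emitted by ET-Conn even when $\qrole$ is not yet an entry in the environment: the reduction \emph{introduces} the entry $\sessindexroles{s}{\qrole}{\prole}{\loctyb_j}$ and requires $\ty{\qrole}$ to be a top-level accept. But at the configuration level, firing \textsc{E-Conn}/\textsc{E-ConnInit} requires the existence of a \emph{second} actor $b$, currently disconnected and parked at $\newaccept{\prole}{\cdots}$, with the right static session type. This actor need not lie inside $\sgroup$ — it sits in the surrounding context $\config{G}[\cdot]$ as a disconnected actor. So I cannot establish the reduction purely from $\sgroup$; I must reason about the whole configuration $\config{C}$, appeal to the absence of unmatched discovers and the well-formedness assumptions to locate an appropriate accepting actor, and confirm its connection state is $\bot$ and its behaviour is not $\bstop$ (otherwise \textsc{E-ConnFail} fires, which is also a reduction, so either way $\config{C} \ceval$). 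Threading the global reasoning about $\config{C}$ through what is otherwise a local, session-scoped argument is the delicate point; the zapper cases are comparatively mechanical, since the zapper thread is already present locally in $\config{D}$ and \textsc{E-CommRaise} or \textsc{E-FailS} applies directly.
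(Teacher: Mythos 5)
Your proposal follows essentially the same route as the paper's proof: a case analysis (the paper phrases it as induction on the derivation of $\zapenv \syncannarrow{\synclbl} \zapenv'$, so as to dispatch \textsc{ET-Rec} and the congruence rules by the IH) that uses output-flatness, readiness, and typing inversion to realize each environment synchronization (\textsc{ET-Conn}, \textsc{ET-Comm}, the zapper-failure rules, and the disconnect analogues) as a configuration reduction, with the \textsc{E-Conn}/\textsc{E-ConnFail} disjunction resolving the connection case exactly as you describe. One small correction to your ``main obstacle'' paragraph: the appeal to absence of unmatched discovers is neither a hypothesis of this lemma nor needed, and no search for an accepting actor via well-formedness is required — the $\calcwd{connect}$ term syntactically names its target actor $b$, whose presence somewhere in $\config{G}[\sgroup]$ follows from typing of the closed configuration (the paper cites canonical forms), after which either \textsc{E-Conn} or \textsc{E-ConnFail} fires.
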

\begin{proof}
  Since $\zapenv \synceval$, there exists some environment $\zapenv'$ and
  label $\synclbl$ such that $\zapenv \syncannarrow{\synclbl} \zapenv'$.

  The proof is by induction on the derivation of $\zapenv \syncannarrow{\synclbl} \zapenv'$.

  \begin{proofcase}{ET-Conn}
    \begin{mathpar}
      \inferrule
      { \exists j \in I . S_j = \localtyconn{\qrole}{\ell_j}{\tya_j}\then \locty'_j \\\\
        \ty{\qrole} = \sumact{k \in K}{\localtyaccept{\prole}{\ell_k}{\tyb_k} \then \loctyb_k} \\
        j \in K \\
        \tya_j = \tyb_j
      }
      {
        \sessindexroles{s}{\prole}{\rolesetr}{\sumact{i \in I}{S_i}}
        \syncannarrow{\ltsconnpair{s}{\prole}{\qrole}{\ell_j}}
        \sessindexroles{s}{\prole}{\rolesetr, \qrole}{\locty'_j},
        \sessindexroles{s}{\qrole}{\prole}{\loctyb_j}
      }
    \end{mathpar}

    Since $\oflat{\zapenv}$, we have that
    $\sessindexroles{s}{\prole}{\rolesetr}{\localtyconn{\qrole}{\ell_j}{\tya_j}\then
    \locty'_j}$. Thus, $\config{C}$ must
    contain an actor evaluating $E[\newconn{\ell_j}{\tya_j}{V}{b}]$ for some
    actor name $b$. Due to the definition of canonical forms, actor $b$ must be
    a subconfiguration of $\config{G}[\sgroup]$ and thus the configuration can
    reduce by either \textsc{E-Conn} or \textsc{E-ConnFail}.
  \end{proofcase}
  \begin{proofcase}{ET-Comm}
    \begin{mathpar}
      \inferrule
        [ET-Comm]
        { \zapenv_1 \annarrow{\ltscomm[{!}]{s}{\prole}{\qrole}{\ell}{\tya}} \zapenv'_1
          \\
        \zapenv_2 \annarrow{\ltscomm[{?}]{s}{\qrole}{\prole}{\ell}{\tya}} \zapenv'_2 }
        { \zapenv_1, \zapenv_2 \syncannarrow{\ltspair{s}{\prole}{\qrole}{\ell}} \zapenv'_1, \zapenv'_2 }
    \end{mathpar}

    By \textsc{ET-Act}, $\zapenv_1$ must contain
    $\sessindexroles{s}{\prole}{\rolesetr}{\loctya}$; since $\oflat{\zapenv}$,
    $\loctya = \localcomm[{!}]{\qrole}{\ell}{A}$.

    Also by \textsc{ET-Act}, $\zapenv_2$ must contain
    $\sessindexroles{s}{\prole}{\rolesetr}{\sumact{i \in I}{\locactb_i \then
    \loctyb'_i}}$ with $\localcomm[{?}]{\prole}{\ell}{A} \in \{ \locactb_i \}$.

    Thus by typing, $\config{C}$ must contain an actor
    $\eactor{a}{E[\newsend{\ell}{V}{\qrole}]}{\newconnsess{s}{\prole}{\rolesetr}}{\behaviour_a}$,
    and another actor
    $\eactor{b}{E'[\newrecv{\prole}{\ell_i(x_i) \mapsto
    M_i}]}{\newconnsess{s}{\qrole}{\rolesets}}{\behaviour_b}$, which would
    reduce by \textsc{E-Comm}.
  \end{proofcase}
  \begin{proofcase}{ET-CommFail}
    \begin{mathpar}
    \inferrule
    [ET-CommFail]
    { \exists j \in I . S_j = \localcomm{\prole}{\ell_j}{\tya_j}\then \locty'_j }
    {
      \zapindex{s}{\prole},
      \sessindexroles{s}{\qrole}{\rolesetr}{\sumact{i \in I}{\locact_i \then
      \loctya'_i}}
      \syncannarrow{\lblzap{s}{\prole}{\qrole}}
      \zapindex{s}{\prole},
      \zapindex{s}{\qrole}
    }
    \end{mathpar}

    We prove the case where $\commdir = {!}$. Since $\oflat{\zapenv}$, by typing we have
    that $\sumact{i \in I}{\locact_i \then \loctya'_i} =
    \localtysend{\ell}{A}{\qrole} \then \loctya'$ and $\config{C}$ must contain an actor
    $\eactor{a}{E[\newsend{\ell}{V}{\qrole}]}{\newconnsess{s}{\prole}{\rolesetr}}{\behaviour_a}$
    and a zapper thread $\zaptwo{s}{\qrole}$, which could then reduce by
    \textsc{E-RaiseP}.
  \end{proofcase}

  Case \textsc{E-Disconn} is similar to \textsc{E-Conn}, and cases
  \textsc{E-DisconnFail} and \textsc{E-WaitFail} are similar to \textsc{E-CommFail}.

  Cases \textsc{ET-Rec}, \textsc{ET-Cong1}, and \textsc{ET-Cong2} follow by the
  induction hypothesis.
\end{proof}

\begin{lemma}\label{lem:zapenv-reduction}
  If $\prog{\zapenv}$ and neither $\finished{\zapenv}$ nor $\failed{\zapenv}$,
  then $\zapenv \synceval$.
\end{lemma}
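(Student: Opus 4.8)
The plan is to prove the statement by contraposition. I would assume $\zapenv \not\synceval$ and show that this forces $\finished{\zapenv}$ or $\failed{\zapenv}$, which directly contradicts the hypothesis that neither holds; hence the assumption $\zapenv \not\synceval$ is untenable and $\zapenv \synceval$.

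The key observation is that this lemma is essentially the contrapositive of the \textit{(Correct termination)} clause of the progress definition for exception-aware runtime typing environments~(\autoref{def:zap-progress}), read off the \emph{zero-step} reduction sequence. Concretely, from $\prog{\zapenv}$ I would invoke \textit{(Correct termination)} instantiated with the reflexive reduction $\zapenv \syncevalstar \zapenv$, which is valid because $\syncevalstar$ is the reflexive–transitive closure of $\synceval$. Since by assumption $\zapenv \not\synceval$, the premise $\zapenv \syncevalstar \zapenv' \not\synceval$ is satisfied with $\zapenv' = \zapenv$, so the clause yields $\finished{\zapenv}$ or $\failed{\zapenv}$, delivering the required contradiction.

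There is no substantive obstacle here; the only point that needs care is noting that the \textit{(Correct termination)} conclusion concerns the \emph{terminal} environment of the chosen reduction sequence, which in the zero-step case is $\zapenv$ itself. In effect, the lemma simply repackages the termination half of progress as a \emph{non-triviality} guarantee: so long as $\zapenv$ has not yet reached a successfully completed ($\finished{\cdot}$) or wholly failed ($\failed{\cdot}$) environment, progress ensures at least one further synchronisation step is available.
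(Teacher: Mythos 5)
Your proof is correct, but it takes a genuinely different route from the paper's. The paper argues by contradiction via the \textit{(Role progress)} clause: it case-splits on whether $\zapenv$ contains any active role; if none, it asserts that $\zapenv$ must be finished or failed (contradicting the hypothesis), and if some role is active, role progress yields a reduction sequence ending in a firable label, contradicting the (implicit) assumption $\zapenv \not\synceval$. You instead use only the \textit{(Correct termination)} clause of $\prog{\zapenv}$, instantiated at the zero-step sequence $\zapenv \syncevalstar \zapenv$: under the contrapositive assumption $\zapenv \not\synceval$, the clause's premise holds with $\zapenv' = \zapenv$, immediately forcing $\finished{\zapenv}$ or $\failed{\zapenv}$. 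Your argument is shorter and arguably tighter: it needs no case analysis on active roles, and it sidesteps the paper's unjustified intermediate claim that ``no active roles implies finished or failed'' --- a claim which, to be proved, would itself essentially require your zero-step application of \textit{(Correct termination)} together with the observation that environments with only inactive roles cannot reduce. What the paper's route buys is a more operational explanation (an active role is what drives reducibility), but logically the correct-termination clause alone suffices, as you show. Your closing caveat about whether the clause's conclusion concerns the initial or terminal environment is well taken; as written in the paper the conclusion is stated for the initial environment, but since the two coincide in the zero-step instantiation, nothing in your argument depends on resolving that ambiguity.
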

\begin{proof}
  By contradiction. Assume that $\prog{\zapenv}$ and neither
  $\finished{\zapenv}$ nor $\failed{\zapenv}$.

  By $\prog{\zapenv}$, each active role must eventually reduce.

  If no roles are active, then $\finished{\zapenv}$ or $\failed{\zapenv}$: a
  contradiction.

  Otherwise, by the definition of $\prog{\zapenv}$
    for each $\sessindexroles{s}{\prole}{\rolesetq}{\loctya}$
    there must exist some $\seq{\synclbl}$ such that $\zapenv
    \syncannarrow{\seq{\synclbl'}}{}^* \zapenv' \syncannarrow{\synclbl}$ where $\prole \in
    \subj{\synclbl}$: a contradiction.
\end{proof}

\begin{lemma}\label{lem:zapenv-ready-not-final}
  If $\zseq{\cdot}{\cdot}{\config{\sgroup}}$ where $\sgroup = (\nu s:
  \zapenv)\config{C}$ and $\ready{\config{C}}$, then $\zapenv$ is not final.
\end{lemma}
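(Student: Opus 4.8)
The plan is to proceed by contradiction, assuming $\finished{\zapenv}$, i.e.\ $\zapenv = \{ \sessindexroles{s}{\prole}{\emptyset}{\localend} \}$ for some role $\prole$, and deriving a clash with the readiness of $\config{C}$. The intuition is that a final environment records a single role that has already run its session to completion ($\localend$), whereas readiness demands that every actor be poised to perform a communication action --- and no communication action is typable once the current session type is $\localend$.

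First I would invert the exception-aware typing of the session. Since $\sgroup = (\nu s : \zapenv)\config{C}$, the $\nu s$-binder is introduced by \textsc{T-Session}, so $\config{C}$ is typed, modulo the actor PID bindings inherited from the enclosing context, under the session environment $\zapenv$. Because $\zapenv$ contains no entry of the form $\zapindex{s}{\qrole}$, the exception-aware \textsc{T-Zap} rule (whose runtime environment is exactly such an entry) forbids $\config{C}$ from containing any zapper thread; and because $\zapenv$ contains the single linear session entry $\sessindexroles{s}{\prole}{\emptyset}{\localend}$, rule \textsc{T-ConnectedActor} forces $\config{C}$ to consist of exactly one connected actor $\eactor{a}{M}{\newconnsess{s}{\prole}{\emptyset}}{\behaviour}$ whose current session type is $\localend$. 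In particular, its body is typed with $\localend$ as precondition: $\tseq{\Gamma}{\localend}{M}{A}{\localend}$.

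Next I would invoke readiness. As $\ready{\config{C}}$ and the lone subconfiguration is the connected actor above, $M$ must have the shape $E[M']$ with $M'$ a communication action. By subterm typability (Lemma~\ref{lem:subterm-typability}) there is a subderivation concluding $\tseq{\Gamma}{\localend}{M'}{B}{S''}$, where the precondition at the hole is \emph{the same} $\localend$ as that of the whole term, since evaluation contexts nest only at the head of a \calcwd{let} or beneath a single top-level handler. But no communication rule admits $\localend$ as a precondition: \textsc{T-Conn}, \textsc{T-Send}, \textsc{T-Accept}, and \textsc{T-Recv} each require a choice $\sumact{i \in I}{\locact_i \then \loctya_i}$, \textsc{T-Wait} requires $\localtywait{\qrole} \then \loctya$, and \textsc{T-Disconn} requires $\localtydisconn{\qrole}$. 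Hence $M'$ cannot be a communication action, contradicting readiness, and so $\zapenv$ is not final.

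The main obstacle is the inversion in the second step: I must use \textsc{T-Session} together with the linearity of the session entries in $\zapenv$ to pin $\config{C}$ down to precisely one connected actor sitting at $\localend$, with no accompanying zapper threads. Once the configuration is forced into this shape, the collision between readiness (a pending communication action) and the typing rules (which offer no way to communicate at $\localend$) is immediate.
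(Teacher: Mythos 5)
Your proof is correct and takes essentially the same route as the paper's, which argues by contradiction that a final environment consists only of the entry $\sessindexroles{s}{\prole}{\emptyset}{\localend}$ and that no communication action is typable under precondition $\localend$, contradicting $\ready{\config{C}}$. Your extra inversion steps (using the exception-aware \textsc{T-Zap} rule to exclude zapper threads, pinning $\config{C}$ down to a single connected actor via linearity, and transporting $\localend$ to the redex via subterm typability) simply make explicit what the paper's two-sentence argument leaves implicit.
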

\begin{proof}
  By contradiction.  If $\zapenv$ were final, it would consist only of session
  types of type $\localend$. No communication actions are typable under
  $\localend$, contradicting $\ready{\config{C}}$.
\end{proof}

\sessgroupprogress*
\begin{proof}
  \begin{itemize}
    \item By Lemma~\ref{lem:sgroup-typ}, $\exists. \Psi, \rtenv'$ such that
      $\cseq{\Psi}{\rtenv'}{\sgroup}$.
    \item By definition, $\sgroup =
      (\nu s:
      \rtenv)\config{D}$,
      where $\config{D} = \eactor{a_1}{M_1}{\newconnsess{s}{\prole_1}{\rolesetq_1}}{\behaviour_1}
  \parallel \cdots \parallel
  \eactor{a_m}{M_m}{\newconnsess{s}{\prole_m}{\rolesetq_m}}{\behaviour_m}
  \parallel
  \zap{\sessindextwo{s}{\prole_{m + 1}}}
  \parallel \cdots
  \parallel
  \zap{\sessindextwo{s}{\prole_{n}}}$.
    \item By \textsc{T-Session}, $\safe{\rtenv}$.
    \item By definition, the zapped roles of $\config{C}$ are $\prole_{m + 1},
      \ldots, \prole_n$. Let us denote this set as $\rolesetp$.
    \item Let $\zapenv = \zapfn{\rtenv}{\rolesetp}$.
    \item By Lemma~\ref{lem:zap-typ}, $\zseq{\Psi}{\rtenv', \zapenv}{\config{D}}$
      and so by Corollary~\ref{cor:zap-typ},
      $\zseq{\cdot}{\cdot}{\config{G}[\sgroup]}$.
    \item By Lemma~\ref{lem:zap-env-props}, $\safe{\zapenv}$ and
      $\prog{\zapenv}$.
    \item By Lemma~\ref{lem:readiness}, either $\config{C} \ceval$ or
      $\ready{\config{D}}$. As $\config{C} \ceval$ satisfies the theorem
      statement, we proceed assuming $\ready{\config{D}}$.
    \item By Lemma~\ref{lem:flattening-typ}, there exists some $\hat{\zapenv}$ such
      that $\hat{\zapenv}$ is a flattening of $\zapenv$ and
      $\zseq{\cdot}{\cdot}{\config{G}[(\nu s: \hat{\zapenv})\config{D}]}$.
    \item By Lemma~\ref{lem:zapenv-ready-not-final}, $\hat{\zapenv}$ is not final.
    \item By Lemma~\ref{lem:flattening-reducibility}, $\hat{\zapenv} \synceval$.
    \item Thus, by Lemma~\ref{lem:flat-progress}, $\config{C} \ceval$ as
      required.
  \end{itemize}
\end{proof}

\subsection{Progress}

\progress*
\begin{proof}
  \begin{itemize}
    \item By Lemma~\ref{lem:canonical-forms}, write $\config{C}$ in
        canonical form.
    \item Eliminate all failed sessions via the equivalence $(\nu
      s)(\zap{s}{\prole_1} \parallel \cdots \parallel \zaptwo{s}{\prole_n}) \parallel \config{C}
      \equiv \config{C}$. Heneceforth assume no sessions are failed.
    \item By repeated application of Lemma~\ref{lem:sess-group-progress}, either
      $\config{C}$ can reduce or it contains no sessions.
    \item Thus we only need to consider disconnected actors. For each disconnected
      actor $\eactor{a}{M}{\bot}{\behaviour}$, by Lemma~\ref{lem:term-progress},
      either:
      \begin{enumerate}
        \item $M = \effreturn{V}$. If $\behaviour = N$, then the configuration
          can reduce by \textsc{E-Loop}. Otherwise, if $\behaviour = \bstop$,
          then the actor is terminated, as required.
\item $M = E[\raiseexn]$.
          If $E$ contains an exception handler, the configuration can reduce by
          \textsc{E-TryRaise}. If not, $E$ is pure: if $\behaviour = N$, then
          the configuration can reduce by \textsc{E-FailLoop}. Otherwise, if
          $\behaviour = \bstop$, then the actor is terminated, as required.
\item $M = E[N]$ where $N$ is a communication or concurrency action,
          which, following the logic in Lemma~\ref{lem:sess-group-progress}, can
          either reduce by a configuration reduction rule or is a communication
          action.
Since:
          \begin{itemize}
            \item by \textsc{T-UnconnectedActor} each disconnected actor must
          either have type $\localend$ or be following its statically-defined
          session type
            \item by well-formedness, each statically-defined session type must
              correspond to a role in a protocol
            \item by well-formedness, each protocol must have a unique initiator
            \item since the program satisfies progress, all protocols satisfy progress
          \end{itemize}

          it must be the case that either $a$ is an initiator of a session and
          blocked on $\calcwd{connect}$, which could either reduce by
          \textsc{E-ConnInit} or \textsc{E-ConnFail}, or be accepting, as
          required.
      \end{enumerate}
  \end{itemize}
\end{proof}
 \end{adjustwidth}

\end{document}